\newtheorem{theorem}{Theorem}[section]
\newtheorem{lemma}[theorem]{Lemma}
\newtheorem{proposition}[theorem]{Proposition}
\newtheorem{corollary}[theorem]{Corollary}
\newtheorem{remark}[theorem]{Remark}
\newcommand{\proofendsign}{\hfill\hbox{\vrule width 7pt depth 0pt height 7pt} \par\vspace{10pt}}
\newenvironment{proof}{\par\noindent {\it Proof:} \hspace{7pt}}%
{\proofendsign}
\newenvironment{mainproof}{\par\noindent {\it Proof of Theorem \ref{thm:main}:} \hspace{7pt}}%
{\proofendsign}
\newcommand{\rv}{{\rm v}}
\newcommand{\bbbone}{{\mathchoice {\rm 1\mskip-4mu l} {\rm 1\mskip-4mu l}    {\rm 1\mskip-4.5mu l} {\rm 1\mskip-5mu l}}}
\newcommand{\bC}{{\mathbb C}}
\newcommand{\bR}{{\mathbb R}}
\newcommand{\al}{\alpha}
\newcommand{\be}{\beta}
\newcommand{\ga}{\gamma}
\newcommand{\de}{\delta}
\newcommand{\ep}{\epsilon}
\newcommand{\et}{\eta}
\newcommand{\ze}{\zeta}
\renewcommand{\th}{\theta} 
\newcommand{\ka}{\kappa}
\newcommand{\la}{\lambda}
\newcommand{\rh}{\rho}
\newcommand{\om}{\omega}
\newcommand{\veps}{\varepsilon}
\newcommand{\De}{\Delta}
\newcommand{\La}{\Lambda}
\newcommand{\Si}{\Sigma}
\newcommand{\bGa}{{{\rm I}\kern-.16em \Gamma}}
\newcommand{\cA}{{\cal A}}
\newcommand{\cD}{{\cal D}}
\newcommand{\cE}{{\cal E}}
\newcommand{\cF}{{\cal F}}
\newcommand{\cJ}{{\cal J}}
\newcommand{\cK}{{\cal K}}
\newcommand{\cM}{{\cal M}}
\newcommand{\cO}{{\cal O}}
\newcommand{\cR}{{\cal R}}
\newcommand{\cT}{{\cal T}}
\newcommand{\cV}{{\cal V}}
\newcommand{\cX}{{\cal X}}
\newcommand{\del}{\partial}
\newcommand{\abs}[1]{{\left\vert #1 \right\vert}}
\newcommand{\Ref}[1]{$(\ref{#1})$}
\newcommand{\sfrac}[2]{{\textstyle \frac{#1}{#2}}}
\newcommand{\vol}{{\hbox{ vol }}}
\newcommand{\dbar}{{\mkern2mu\mathchar'26\mkern-2mu\mkern-9mud}}
\newcommand{\const}{\hbox{ \rm const }}
\newif\ifintrmk
\newcommand{\tst}{\textstyle}
\def\suffix{ps}
\def\ifundefined#1{\expandafter\ifx\csname#1\endcsname\relax}
\def\figdir{}\fi
\newdimen\pswidth  \newdimen\xleft
\newdimen\psheight \newdimen\ytop \newdimen\ybot
\newdimen\vpos \newtoks\labeL 
\newread\labeLfile \newdimen\xcoord \newdimen\ycoord
\newif\ifdoit 
\newbox\labox
\newdimen\xdvikwid 
\newdimen\xdvikht
\newdimen\pspoints
\newdimen\rwi
\def\readdim#1{\global\read\labeLfile to \temp
\global #1=\temp pt}
\def\figcrop#1{\par
\openin\labeLfile=\figdir#1.lbl                                              
\global\read\labeLfile to\firstline\message{#1}               
\global\read\labeLfile to\temp
\readdim{\ybot}
\readdim{\xleft}
\readdim{\ytop}
\global\read\labeLfile to\justx
\global\read\labeLfile to\justy
\global\read\labeLfile to\labeL
\readdim{\pswidth}
\global\advance\pswidth by -\xleft
\readdim{\psheight}
\global\advance\ybot by -\psheight
\global\advance\psheight by -\ytop
\global\read\labeLfile to\justx
\global\read\labeLfile to\justy
\global\read\labeLfile to\labeL
\vbox to\psheight{\vfill
\ifnum\system=1
\ifnum\system=2
\ifnum\system=3
                                                 \fi         
\ifnum\system=4
\ifnum\system=1
\hbox to \pswidth{\kern-\xleft\special{postscriptfile \figdir#1.\suffix }\hfil}\fi
\ifnum\system=2
\hbox to \pswidth{\kern-\xleft\special{ps: plotfile \figdir#1.\suffix }\hfil}\fi
\ifnum\system=3
\hbox to \pswidth{\kern-\xleft\includegraphics{\figdir#1.\suffix}\hfil}\fi
\ifnum\system=4
\hbox to \pswidth{\kern-\xleft\includegraphics{\figdir#1.\suffix}\hfil}\fi
\ifnum\system=5
\hbox to \pswidth{\kern-\xleft\includegraphics{\figdir#1.\suffix}\hfil}\fi 
\ifnum\system=6
   \xdvikwid=\pswidth
   \xdvikht=\psheight
   {\global\divide\xdvikwid by \pspoints}
   {\global\divide\xdvikht by \pspoints}
   \rwi=\xdvikwid
    {\global\multiply\rwi by 10}
\hbox to \pswidth{\kern-\xleft\includegraphics{\figdir#1.\suffix\space}\hfil}\fi                   
\vskip -\baselineskip
\vskip -\ybot 
\vskip-\psheight %
\hbox to\pswidth  {\hss}%
\parindent=0pt\offinterlineskip                                       
\vpos=0 pt%
\loop\readdim{\xcoord}                                 
\ifdim \xcoord < -999pt \doitfalse\else\doittrue\fi                        
\ifdoit \advance \xcoord by -\xleft
\readdim{\ycoord}
\advance \ycoord by -\ytop                              
\global\read\labeLfile to\justx                                       
\global\read\labeLfile to\justy                                       
\global\read\labeLfile to\labeL
\global\setbox\labox=\hbox{\labeL\hskip-0.3em}%
\advance\vpos by-\ycoord                                              
\vskip-\vpos \vpos=\ycoord                                         
\hbox to\pswidth{\hskip\xcoord %
\hbox to 0pt{\ifnum\justx>0\hss\fi%
\vbox to0pt{%
\ifnum\justy<2\vss\fi%
\copy\labox\kern0pt%
\ifnum\justy>0\vss\fi}%
\ifnum\justx<2\hss\fi}%
\hss}%
\repeat%
\advance\vpos by-\psheight%
\vskip-\vpos %
}\closein\labeLfile}
\def\figplace#1#2#3{
\openin\labeLfile=\figdir#1.lbl
\ifeof \labeLfile
       \immediate\write16{***Can't find \figdir#1.lbl; Skipping it.***}
\else  \closein\labeLfile
       \null\hskip#2\raise #3 \hbox{\figcrop{#1}}
\fi
}
\def\figput#1{
\openin\labeLfile=\figdir#1.lbl
\ifeof \labeLfile
       \immediate\write16{***Can't find \figdir#1.lbl; Skipping it.***}
\else  \closein\labeLfile
       \hbox{\figcrop{#1}}
\fi
}
\def\figdir{fig/}
\newcommand{\set}[2]{\big\{ \ #1\ \big|\ #2\ \big\}}
\newcommand{\E}{{\rm e}}
\newcommand{\I}{{\rm i}}
\newcommand{\dd}{{\rm d}}
\newcommand{\Hap}{{\bf H1'}}
\newcommand{\Hbp}{{\bf H2'}}
\newcommand{\Hcp}{{\bf H3'}}
\newcommand{\Hdp}{{\bf H4'}}
\renewcommand{\const}{\mbox{ \rm const }}
\newcommand{\bx}{{\bf x}}
\newcommand{\bk}{{\bf k}}
\newcommand{\bzer}{{\bf 0}}
\newcommand{\ba}{{\bf a}}
\newcommand{\bb}{{\bf b}}
\newcommand{\bc}{{\bf c}}
\newcommand{\bt}{{\bf t}}
\newcommand{\bp}{{\bf p}}
\newcommand{\bq}{{\bf q}}
\newcommand{\br}{{\bf r}}
\newcommand{\smint}{{\tst \int}}
\newcommand{\cst}[2]{{\rm const}^{#1}_{#2}}
\newcommand{\bnabla}{\nabla}
\renewcommand{\vol}{\mbox{ \rm  Vol}}
\newcommand{\implies}{\Longrightarrow}
\newcommand{\Item}{\medskip\noindent}
\newcommand{\rP}{{\rm P}}
\newcommand{\pmb}[1]{\setbox0=\hbox{#1}       
     \kern-.025em\copy0\kern-\wd0
     \kern.05em\copy0\kern-\wd0
     \kern-.025em\box0}             
\newcommand{\bphi}{{\mathchoice{\pmb{$\phi$}}{\pmb{$\phi$}}
                              {\pmb{$\scriptstyle\phi$}}
                              {\pmb{$\scriptscriptstyle\phi$}}}}
\newcommand{\bth}{{\mathchoice{\pmb{$\th$}}{\pmb{$\th$}}
                              {\pmb{$\scriptstyle\th$}}
                              {\pmb{$\scriptscriptstyle\th$}}}}
\newcommand{\hypflatnest}{NN}
\newcommand{\dblInt}{\mathop{\int\!\!\!\int}}
\begin{document}

\title{Singular Fermi Surfaces I. General Power Counting and Higher Dimensional
Cases}

\author{Joel Feldman$^1$\footnote{feldman@math.ubc.ca; supported by NSERC of Canada} 
 and Manfred Salmhofer$^{1,2}$\footnote{salmhofer@itp.uni-leipzig.de; supported by DFG--grant Sa-1362/1-1, an ESI senior research fellowship, and NSERC of Canada} \\
\small  $^1$ Mathematics Department, The University of British Columbia \\
\small 1984 Mathematics Road, Vancouver, B.C., Canada V6T 1Z2
\\
\small $^2$ Theoretische Physik, Universit\"{a}t Leipzig \\
\small Postfach 100920, 04009 Leipzig, Germany
}

\date{\today}

\maketitle

\begin{abstract}
\noindent
We prove regularity properties of the self--energy, 
to all orders in perturbation theory, for systems with 
singular Fermi surfaces which contain
{\em Van Hove points} where the gradient of the dispersion relation vanishes. 
In this paper, we show for spatial dimensions $d \ge 3$ 
that despite the Van Hove singularity, the overlapping loop bounds
we proved together with E.\ Trubowitz for regular non--nested Fermi surfaces
[J.\ Stat.\ Phys.\ 84 (1996) 1209]
still hold, provided that the Fermi surface satisfies a no--nesting condition.
This implies that for a fixed interacting Fermi surface, 
the self-energy is a continuously differentiable function of 
frequency and momentum, so that the quasiparticle weight and 
the Fermi velocity remain close to their values in the noninteracting 
system to all orders in perturbation theory.
In a companion paper, we treat the more singular two--dimensional case.
\end{abstract}

\section{Introduction}
In 1953, Van Hove published a general argument implying the occurrence 
of singularities in the phonon and electron spectrum of  crystals \cite{VanHove}.
The core of  his argument is an application of Morse theory \cite{Morse} ---
a sufficiently smooth function defined on the torus and having only 
nondegenerate critical points must have saddle points. 

In the independent--electron approximation, 
the dispersion relation $k \mapsto \epsilon (k)$ of the electrons plays 
the role of the Morse function, 
and the {\em Van Hove singularities} (VHS) manifest themselves in 
the electronic density of states 
\begin{equation}
\rho (E) = \int \frac{\dd^{d} k}{(2\pi)^d}  \; \de (E - \epsilon (k))
\end{equation}
at those values of the energy 
where the level set $\{ k : \epsilon (k) = E\}$ 
contains one (or more) of the saddle points, 
the  {\em Van Hove points}. 
The nature of these singularities in $\rho$  depends on the dimension. 
In two dimensions, $\rho$ has a logarithmic singularity. In three dimensions, 
$\rho$ is continuous but its derivative has singularities. 
In all dimensions, these singularities have observable consequences,
although they occur only at discrete values of the energy. 

In mean--field theories for symmetry breaking, the density of states plays an important role
because it enters the self--consistency equations for the order parameter. 
For instance, in BCS theory, the superconducting gap $\Delta$
is determined as a function of the temperature $T = \beta^{-1}$ 
as the solution to the equation 
\begin{equation}\label{eq:BCS}
\Delta
=
g \;
\Delta 
\int \dd E \; \frac{\rho(E)}{2\sqrt{(E-E_F)^2+\Delta^2}} \; 
\tanh \frac{\beta \sqrt{(E-E_F)^2+\Delta^2}}{2}
\end{equation}
where $g>0$ is the coupling constant that determines the strength of the 
mean--field interaction between Cooper pairs and $E_F$ is the Fermi energy 
determined by the electron density. (We have written the equation for 
an $s$--wave superconductor.)
The properties of $\rho(E)$ for $E$ near to $E_F$ obviously 
influence the temperature--dependence of $\Delta$, 
as well as the value of the critical 
temperature $T_c$, defined as the largest value of $T$ below which \Ref{eq:BCS}
has a nonzero solution.
If $\rho$ is smooth, the small--$g$ asymptotics of $T_c$ is 
$T_c \sim \E^{- \rho(E_F)/g}$. A logarithmic divergence in $\rho$ of the form
$\rho(E) = K \ln \frac{W}{|E-E_{VH}|}$ (with fixed constants $K$ and $W$) enhances the
critical temperature to $T_c \sim \E^{-K/\sqrt{g}}$ if $E_F = E_{VH}$.
Similarly, van Hove singularities cause ferromagnetism in mean--field theory
at arbitrarily small couplings $g \ll 1$. 

In a true many--body theory, all this becomes much less clear--cut. 
Besides the obvious remark that in two dimensions, there is no 
long--range order at positive temperatures \cite{Koma}, 
hence the above discussion is restricted to mean--field theory, 
the question whether Van Hove singularities
indeed occur in interacting systems and if so, what their influence on 
observable quantities is, remains open and important. 
The theoretical quantity related to the electron spectrum and the density of states 
of the interacting
system is the interacting dispersion relation or the spectral function, 
obtained from the full propagator, hence ultimately from the electron self--energy.
The VHS might cease to exist in the interacting system for various reasons. 
The interacting Fermi surface may turn out 
to avoid the saddle points, or the singularity caused by the saddle points 
of the dispersion relation may be smoothed out by more drastic effects, 
such as the opening of gaps in the vicinity of the saddle points. 
On the other hand, the VHS might also become more generic because the Fermi surface 
may get pinned at the Van Hove points, and the singularity might also get
stronger due to interaction effects. 
A lot of research has gone into these questions because Van Hove singularities
were invoked as a possible explanation of high--temperature superconductivity
(see, e.g.\ \cite{Markiewicz} and references therein). In particular, there are 
competition effects between superconductivity, 
ferromagnetism and antiferromagnetism \cite{HM,HSFR,HS},
as well as interesting phenomena connected to Fermi surface fluctuations
\cite{MeNeu,MeQC}, to mention but a few results. The above speculations 
as to the fate of the Fermi surface and the VHS
have been discussed widely in the 
literature \cite{Markiewicz}. 

In this paper, we begin a mathematical study 
of Fermi surfaces that contain Van Hove points, but that satisfy a no--nesting 
condition away from these points, with the aim of understanding  
some of the above questions.  
We prove regularity properties of the electron self--energy 
to all orders of perturbation theory using the multiscale techniques
of \cite{FST1,FST2,FST3,FST4}, which are 
closely related to the renormalization group techniques used in \cite{HM,HSFR}.
In the present paper, we give bounds that apply in all dimensions $d \ge 2$
and then consider the case $d \ge 3$ in more detail. In a companion paper 
\cite{SFS2}, we focus
on the two--dimensional case, and in particular on the question of the 
renormalization of the quasiparticle weight and the Fermi velocity.

Our motivation for imposing the no--nesting condition is twofold.
First, an example of a dispersion relation in $d=2$ with a Fermi surface that contains 
Van Hove points and satisfies our no--nesting condition is the $(t,t')$ Hubbard model
with $t' \ne 0$ and $t \; t' < 0$ at the Van Hove density.
For  $t' = 0$, the Van Hove density is at half--filling, and the 
Fermi surface becomes flat, hence nested under our definition. 
However, there is ample evidence that in the Hubbard  model 
it is the parameter range $t' \ne 0$ and  electron density near to the van Hove density  
that is relevant  for high--$T_c$ superconductivity 
(see, e.g.\ \cite{Markiewicz,HM,HSFR}). 
Second, nesting causes additional singularities, 
and to get a clear picture of which property of the Fermi surface
causes what kind of phenomena, 
it is useful to disentangle the effects of the VHS from those of nesting. 

We now give an overview of the technical parts of the present paper
and state our main result about the self--energy and the correlation functions.
In Section \ref{genpoco}, we prove bounds for volumes of thin shells in momentum 
space close to the Fermi surface. These volume bounds are the essential 
ingredient for power counting bounds. In Lemma \ref{le:2.4}, we show that 
these volume bounds are not changed by the introduction of the most common
singularities in $d \ge 3$ and increase by a 
logarithm of the scale in $d=2$. This implies by the general bounds of 
\cite{FST1} that the superficial power counting of the model is unchanged 
for $d \ge 3$ and changes ``only'' by logarithms in $d=2$.
Lemma \ref{le:2.6} contains a refinement of these bounds 
in which one restricts to small balls near the singular points. 
In Section \ref{finer}, we turn to the finer aspects of power counting that 
are necessary to understand the regularity of the self--energy, 
for spatial dimensions $d \ge 3$.  
We define a weak no--nesting condition which is essentially identical to that
of \cite{FST1} and prove that the volume improvement estimate 
(1.34) of \cite{FST1} carries over unchanged (Proposition \ref{propOverlap}).
By Theorem 2.40 of \cite{FST1}, this implies that the bulk of the 
conclusions of Theorems 1.2 -- 1.8 of \cite{FST1} carry over to the situation 
with VHS in $d \ge 3$. Namely,

\begin{theorem}\label{thm:main}
Let $d \ge 3$, and let the dispersion relation $\bk \mapsto e(\bk)$ satisfy

\begin{itemize}

\item[$\circ$] $\cF=\set{\bk\in\bR^d}{e(\bk)=0}$ is compact

\item[$\circ$] $e(\bk)$ is $C^3$ 

\item[$\circ$] $\nabla e(\bk)$ vanishes only at isolated points of $\cF$.
We shall call them singular points.

\item[$\circ$] if $e(\bk)=0$ and $\bnabla e(\bk)=\bzer$, then
           $\big[\sfrac{\del^2\hfill}{\del\bk_i\del\bk_j} e(\bk)
                \big]_{1\le i,j\le d}$ is nonsingular and has at least 
                one positive eigenvalue and at least one negative eigenvalue.

\item[$\circ$]There is no nesting, in the precise sense of
Hypothesis \hypflatnest\ in Section \ref{sec:NN} 
\end{itemize}

\noindent
Let the interaction be short--range in the sense that the Fourier transform 
$k \mapsto \hat v(k)$ of the two--body interaction is twice continuously 
differentiable in  $k$. Introduce the counterterm function $\bk \mapsto K(\bk)$ 
as in Section 2 of \cite{FST1}, but using the localization operator 
$(\ell T)(q_0,\bq)=T(0,\bq)$ in place of the localization operator of 
Definition 2.6 of \cite{FST1}, to renormalize the perturbation expansion.  Then

\begin{enumerate}

\item
To any fixed order in renormalized perturbation theory, the electronic 
self--energy (i.e. the sum of the values of all two--legged 
one--particle--irreducible Feynman graphs) is continuously differentiable 
in the frequency and momentum variables.
There is an $\veps > 0$ so that all first derivatives
of the self--energy are H\" older continuous of degree $\veps$.
The counterterm function $K$ has the same regularity properties. 

\item
To any fixed order in renormalized perturbation theory, 
all correlation functions are well--defined, locally integrable 
functions of the external momenta. 

\item
To any fixed order in renormalized perturbation theory, 
the only contributions to the four--point function that fail to be bounded
and continuous are the (generalized) ladder diagrams 
$$
\figput{ladder}
$$
Here each vertex \figplace{vertex}{0 in}{-0.07 in} is an arbitrary connected 
four--legged subdiagram and each line \figplace{line}{0 in}{-0.07 in} 
is a string \figplace{string}{0 in}{-0.08 in} whose vertices (if any) are arbitrary one particle irreducible
two--legged subdiagrams.

\item
For each natural number $r$, denote by $\la^rK_r(\bk)$ the order $r$
contribution to the renormalized perturbation expansion of the 
counterterm function $K(\bk)$.
For each natural number $R$, the map $e \mapsto e+\sum_{r=1}^R \la^r K_r$ 
is locally injective. The precise meaning of and hypotheses for this 
statement are given in the paragraph containing \Ref{eq:injective}.
\end{enumerate}
\end{theorem}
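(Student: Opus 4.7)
The strategy is to reduce the four items to the general machinery of \cite{FST1} (specifically Theorem~2.40 of \cite{FST1}), combined with the two new inputs of this paper: the volume bounds of Section \ref{genpoco} (Lemma \ref{le:2.4}) and the overlapping loop bound of Section \ref{finer} (Proposition \ref{propOverlap}). In $d\ge 3$ these bounds have the same strength as for regular Fermi surfaces, so items 1--3 transfer from \cite{FST1}; item 4 then follows by a short inverse--function argument on the counterterm map.

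More concretely, I would first set up the multiscale decomposition $C=\sum_{j\ge 0}C_j$ of the propagator, with $C_j$ supported where $|ik_0-e(\bk)|\sim M^{-j}$, and expand the renormalized Green's functions in Feynman graphs with the counterterm scheme of Section~2 of \cite{FST1}, modified by the zero--frequency localization $(\ell T)(q_0,\bq)=T(0,\bq)$. Because $\ell$ acts only on the frequency variable, the counterterm function $K$ depends only on $\bk$ and the regularity one wants to prove is compatible with the renormalization condition. Lemma \ref{le:2.4} then guarantees that the single--slice volume $\vol\{|e|\le M^{-j}\}\lesssim M^{-j}$ is of the same order as for a regular Fermi surface, so the superficial power counting is unchanged and all two-- and four--legged subgraphs are renormalizable in the usual sense. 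Proposition \ref{propOverlap} supplies the volume improvement in two overlapping slices, which is the essential input needed to upgrade the na\"\i ve $M^{-j}$ bound on two--legged one--particle--irreducible subgraphs to an $(j+1)M^{-2j}$ bound. Feeding both estimates into Theorem~2.40 of \cite{FST1} yields bounds on all renormalized graphs and their first derivatives with only logarithmic losses, from which H\"older continuity of the self--energy and of $K$ (item 1), local integrability of the correlation functions (item 2), and the isolation of the ladder diagrams as the sole obstruction to boundedness of the four--point function (item 3) follow as in Theorems~1.2--1.8 of \cite{FST1}. Near each singular point one additionally applies Lemma \ref{le:2.6}, which bounds the contribution of a small ball around a Van Hove point by its volume times an appropriate power of $M^{-j}$ and so absorbs the local degeneration of $\nabla e$.

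For item 4, which does not come out of \cite{FST1} directly, I would argue by an implicit--function--type calculation. By item 1 and the inductive construction of the counterterm in \cite{FST1}, each coefficient $K_r[e]$ is a $C^{1,\veps}$ functional of $e$ with values in $C^{1,\veps}(\cF)$. The truncated map $e\mapsto e+\sum_{r=1}^R\la^r K_r[e]$ is therefore a polynomial perturbation of the identity in the appropriate Banach space of dispersion relations, so for $\la$ small (or at the level of formal power series in $\la$) its linearization is invertible, which is exactly the local injectivity asserted in \Ref{eq:injective}.

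The main obstacle will be the second step: propagating the overlapping loop estimate of \cite{FST1} through neighborhoods of the Van Hove points. The sectorization of \cite{FST1} is tailored to $|\bnabla e|\ge\const>0$, which is precisely what fails at a singular point, and the no--nesting Hypothesis \hypflatnest\ has to be used in a form that remains quantitative uniformly in the distance to the singular points. Splitting the integration region into an $O(1)$--sized ball around each Van Hove point, handled by Lemma \ref{le:2.6}, and its complement, handled by the standard argument of \cite{FST1} under \hypflatnest, is the technical core of Proposition \ref{propOverlap}; the bookkeeping must produce a H\"older exponent $\veps$ independent of the scale $j$.
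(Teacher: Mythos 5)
Your proposal follows essentially the same route as the paper: the renormalized multiscale tree expansion of \cite{FST1}, with Lemma \ref{le:2.4} supplying the unchanged single--scale volume bound, Proposition \ref{propOverlap} supplying the overlapping--loop gain that makes the scale sums for the first derivatives converge (and whose failure precisely on non--overlapping, i.e.\ generalized ladder, graphs gives item 3), and a directional--derivative--in--$e$ estimate giving the local injectivity of the counterterm map. One quantitative slip: the overlapping--loop bound does not upgrade two--legged subgraphs to $(j+1)M^{-2j}$; it yields only an extra factor $M^{\veps j}$ with $\veps=\frac{d-2}{d+2}\frac{\ka}{1+\ka}<1$ on the \emph{differentiated} graphs, which is all that is needed, and the splitting near the singular points inside Proposition \ref{propOverlap} uses scale--dependent balls of radius $M^{\et j_3}$ (via Lemma \ref{le:2.6}) rather than $O(1)$ balls.
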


\noindent
The above statements are proven in Section \ref{sec:mainproof} at 
temperature $T=0$. However, the same methods show that they 
extend to small $T \ge 0$, with the change that for $T>0$,
singularities are replaced by finite values that, however, diverge as 
$T \to 0$. 

As explained in detail in \cite{FST1}, the counterterm $K$ fixes the 
Fermi surface, so that all our results are about the model with 
a fixed interacting Fermi surface. Whether the situation that 
the Fermi surface contains zeroes of the gradient of $e$ can 
indeed be achieved is related to the question whether there 
is an inversion theorem generalizing that of \cite{FST4} to 
the situation with VHS, i.e.\ which provides existence of a solution 
of the equation $e + K(e) = E$ for the present situation (item 4 
of the above theorem only gives local uniqueness).
This is a difficult question which is 
still under investigation (see also \cite{SFS2}).

A natural question is the relation between these statements 
to all orders in perturbation theory and results obtained from 
truncated renormalization group flows in applied studies, 
which are often claimed to be ``nonperturbative''. 
The all--order results are statements about an iterative solution 
to a full renormalization group flow. The solution of renormalization 
group flows obtained by truncating the infinite hierarchy to a finite hierarchy
creates scale--dependent 
approximations to the Green functions. These approximations give
the leading order behaviour if the truncation has been done appropriately. 
Often, the results indicate 
instabilities of the flow, which signal that the true state of the system 
is not well--described by an action of the form assumed in the flow. 
A true divergence in the solution occurs only when the regime of validity
of the truncation is left. (In the simplest situations, such singularities
coincide with the divergence of a geometric series.) 
In careful studies, the equations are never integrated to the point where
anything diverges. In that case, the regularity bounds obtained by all--order 
estimates are more accurate than those obtained from 
the solution of the flow equations truncated at finite order. 
There is one case where the integration of the 
renormalization group equations gives an effect within the validity of the 
truncation, but qualitatively different from all--order theory: this is 
when the flow satisfies infrared asymptotic freedom, i.e.\ the 
coupling function becomes screened at low scales. 
For instance, in the repulsive Hubbard model, the ladders with the
bare interaction lead to a screening of the superconducting interaction,
corresponding to $g<0$ and hence to no solution in the BCS gap equation.
(However, in this case, an attractive Cooper interaction is generated in 
second order, and it then grows in the flow to lower scales.) 
Such screening effects can only make terms smaller. Hence the 
upper bounds provided by the all--order analysis are still as good as the 
integration of truncations to the same order, as far as regularity properties
are concerned. In practice, the truncations done in the RG equations 
are of very low order, so that the all--order analysis includes many 
contributions that are not taken into account in these truncations. 

A {\em nonperturbative} mathematical proof 
involves bounding the remainders created in the expansion 
(or truncation). This is possible in $d=2$ using the sector method of 
\cite{FMRT}, but a full construction has not yet been achieved in 
$d \ge 3$. Because the graphical structures used in our arguments 
only require one overlap of loops, we expect that a suitable adaptation 
will be possible in constructive studies. In addition to 
the above--mentioned problem with constructive arguments in $d \ge 3$, 
the important question of the inversion theorem should also be addressed.

\section{General power counting bounds in $d \ge 2$}\label{genpoco}

\subsection{Analytic structure of the one--body problem}
Here we discuss briefly the properties of the one--body problem, 
to show that the Fermi surface of the noninteracting system is 
given as the zero set of an analytic function, hence no-nesting 
in a polynomial sense is a generic condition. For lattice models,
analyticity of the dispersion relation $e$ is obvious for hopping 
amplitudes that decay exponentially with distance (or are even
of finite range). For continuum Schr\" odinger operators, 
it follows from the statements below, which even hold for the 
case with a magnetic field. 
 
Let $d\ge 2$ and $\Gamma$ be a lattice in $\bR^d$ of maximal rank.  Let $r>d$. Define
\begin{eqnarray}
\cA  
&=& 
\set{ A=(A_1,\ldots,A_d) \in \big( L_\bR^r(\bR^d/\Gamma) \big)^d } {\smint_{\bR^d/\Gamma} A(x) dx =0} 
\nonumber\\
\cV  &=& 
\set{V  \in L_\bR^{r/2}(\bR^d/\Gamma)} {\smint_{\bR^d/\Gamma} V(x) dx =0} 
\nonumber
\end{eqnarray}
For $(A,V)\in \cA\times \cV $ set
$$
H_\bk(A,V) = \big( i\bnabla + A(x) - \bk \big)^2 + V(x)
$$ 
When $d=2,3$, the operator $H_\bk(A,V)$ describes an electron in $\bR^d$
with quasimomentum $\bk$ moving under the influence of the magnetic field with periodic vector potential $A(x)=(A_1(x),\ldots A_d(x))$ and electric field 
with periodic potential $V(x)$.
The conditions $\smint_{\bR^d/\Gamma} A(x) dx =0 $ and 
$\smint_{\bR^d/\Gamma} V(x) dx =0$ are included purely for convenience
and can always be achieved by translating $\bk$ and shifting the zero point
of the energy scale. 
The following theorem is proven in  \cite{FKTasy}.

\begin{theorem}
Let
\begin{eqnarray}
\cA_\bC &=& 
\set{ A=(A_1,\ldots,A_d) \in \big( L^r(\bR^d/\Gamma) \big)^d } {\smint_{\bR^d/\Gamma} A(x) dx =0} \nonumber\\
\cV_\bC &=& \set{V  \in L^{r/2}(\bR^d/\Gamma)} {\smint_{\bR^d/\Gamma} V(x) dx =0}  
\nonumber
\end{eqnarray}
be the complexifications of $\cA$ and $\cV$ respectively.
There exists an analytic function $F$ on 
$\bC^d \times \bC \times \cA_\bC \times \cV_\bC$
such that, for $\bk,A,V$ real,
$$
\lambda \in {\rm Spec}\big( H_\bk(A,V)\big) \qquad \iff \qquad F(\bk,\lambda,A,V)= 0
$$
\end{theorem}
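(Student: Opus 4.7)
The plan is to characterize the spectrum of $H_\bk(A,V)$ as the zero set of a regularized Fredholm determinant, which automatically produces an analytic function of the parameters.

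Since $\bR^d/\Gamma$ is compact, the free Laplacian on the torus has discrete spectrum growing like $n^{2/d}$, so for $\mu > 0$ sufficiently large the operator $T_0 := -\Delta + \mu$ is strictly positive with $T_0^{-1} \in \mathcal{S}^p$ for all $p > d/2$, and in particular $T_0^{-1/2} \in \mathcal{S}^q$ for all $q > d$. First I would expand the square in $H_\bk(A,V)$ to get
$$
H_\bk(A,V) - \lambda \;=\; T_0 \;+\; W(\bk,\lambda,A,V),
$$
with $W = -2iA\cdot\nabla - i(\nabla\cdot A) + A^2 - 2\bk\cdot(i\nabla + A) + \bk^2 + V - \lambda - \mu$, and then form the symmetric object
$$
K(\bk,\lambda,A,V) \;=\; T_0^{-1/2}\,W(\bk,\lambda,A,V)\,T_0^{-1/2},
$$
defined initially as a sesquilinear form on $H^1(\bR^d/\Gamma)$.

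Next I would verify term by term that $K$ extends to a bounded operator in some Schatten class $\mathcal{S}^n$, with $n = n(d,r)$, and that the resulting map $\bC^d \times \bC \times \cA_\bC \times \cV_\bC \to \mathcal{S}^n$ is holomorphic. The polynomial-in-$(\bk,\lambda)$ pieces reduce to multiplying fixed $\mathcal{S}^q$ operators by entire functions. The multiplication piece $T_0^{-1/2} V T_0^{-1/2}$ is handled by a torus version of the Kato-Seiler-Simon inequality, using $V \in L^{r/2}_\bC$ with $r > d$. The first-order piece $T_0^{-1/2}(A\cdot\nabla)T_0^{-1/2}$ is factored as $\bigl(T_0^{-1/2}A\bigr)\bigl(\nabla T_0^{-1/2}\bigr)$, using boundedness of $\nabla T_0^{-1/2}$ on $L^2$ and Kato-Seiler-Simon applied to $T_0^{-1/2}A$ with $A \in L^r_\bC$.

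Then I would set
$$
F(\bk,\lambda,A,V) \;=\; {\det}_n\bigl(I + K(\bk,\lambda,A,V)\bigr),
$$
the $n$-th regularized Fredholm determinant. Because $\det_n$ is an entire function on $\mathcal{S}^n$ and $K$ depends holomorphically on its arguments into $\mathcal{S}^n$, $F$ is analytic on the full complex parameter space. For real $\bk,A,V$, the operator $H_\bk(A,V)$ is self-adjoint on a form domain equal to $H^1(\bR^d/\Gamma)$, the factorization $H_\bk(A,V) - \lambda = T_0^{1/2}(I + K)T_0^{1/2}$ is rigorously valid there, and it identifies $\ker(H_\bk(A,V) - \lambda)$ with $\ker(I + K)$; hence $\lambda \in {\rm Spec}(H_\bk(A,V))$ if and only if $F(\bk,\lambda,A,V) = 0$.

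The main obstacle is the Schatten-class bound and holomorphy for the first-order differential term $T_0^{-1/2}(A\cdot\nabla)T_0^{-1/2}$ with complex $A \in L^r_\bC$, together with rigorously justifying the factorization on the common form domain so that the kernel correspondence is an honest equivalence rather than a formal manipulation; these points are technical but standard and are carried out in \cite{FKTasy}.
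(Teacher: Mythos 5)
Your proposal follows essentially the same route as the paper, which defines $F$ as a suitably regularized determinant of $\bbbone + (\bbbone-\Delta)^{-1/2}\,u(\bk,\lambda)\,(\bbbone-\Delta)^{-1/2} + (\bbbone-\Delta)^{-1/2}\,w(\bk,A,V)\,(\bbbone-\Delta)^{-1/2}$ and defers the technical Schatten-class and holomorphy verifications to the cited reference. Your resolvent-sandwich construction, Kato--Seiler--Simon bounds, and kernel identification fill in exactly those details, so the approaches coincide.
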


\noindent
The theorem is proven by providing a formula for $F$. Write
$$
\big(i\bnabla+A(x)-\bk\big)^2+V(x)-\la=\bbbone-\De+u(\bk,\la)+w(\bk,A,V)
$$
with
\begin{eqnarray}
u(\bk,\la)
&=&
-2i\bk\cdot\bnabla+\bk^2-\la-\bbbone
\nonumber\\
w(\bk,A,V)
&=&
i\bnabla\cdot A+iA\cdot\bnabla-2\bk\cdot A+A^2+V
\nonumber
\end{eqnarray}
Then the function
$
F(\bk,\lambda,A,V)
$
of the above theorem is a suitably regularized determinant of 
$\bbbone + \sfrac{1}{\sqrt{\bbbone-\De}}
u(\bk,\la)\sfrac{1}{\sqrt{\bbbone-\De}}
+ \sfrac{1}{\sqrt{\bbbone-\De}}w(\bk,A,V)\sfrac{1}{\sqrt{\bbbone-\De}}$.

\subsection{Volumes of Shells around Singular Fermi Surfaces}\label{sec:2.2}
Suppose that the energy eigenvalues for the one--body problem with quasimomentum
$\bk$ are the solutions of an equation $F(\bk,\la)=0$. That is, the bands
$e_1(\bk)\le e_2(\bk)\le e_3(\bk)\le\cdots$ all obey
$F\big(\bk,e_n(\bk)\big)=0$. Our analysis of the regularity properties
of the self--energy and correlation functions depends on having good bounds
on the volume of the set of all quasimomenta $\bk$ for which there are
very low energy bands. More precisely, fix any $M>1$ and let $j\le 0$.
We need to know the volume of the set of all quasimomenta $\bk$ for which
there is at least one band with $\big|e_n(\bk)\big|\le M^j$. The following
lemma provides a useful simplification.

\begin{lemma}
Let $\cK$ be a compact subset of $\bR^d$ and
$F:\cK\times[-1,1]\rightarrow\bR$ be $C^1$. 
Then there is a constant $C$ such that
\begin{eqnarray}
&&\vol\set{\bk\in \cK}{F(\bk,\la)=0\hbox{ for some } |\la| \le M^j}\nonumber\\
&&\hskip2in\le \vol\set{\bk\in \cK}{|F(\bk,0)|\le C M^j}
\nonumber
\end{eqnarray}
for all $j\le 0$. In particular, if all bands $e_n(\bk)$ obey
$F\big(\bk, e_n(\bk)\big)=0 $ then,
$$
\vol\set{\bk\in \cK}{ |e_n(\bk)| \le M^j\hbox{ for some }n}
\le \vol\set{\bk\in \cK}{|F(\bk,0)|\le C M^j}
$$
\end{lemma}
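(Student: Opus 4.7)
The plan is to reduce the statement to a one-variable mean value theorem estimate in the $\lambda$ direction. Since $F$ is $C^1$ on the compact set $\cK \times [-1,1]$, the partial derivative $\partial_\lambda F$ is continuous there, so
$$
C := \sup_{(\bk,\lambda)\in \cK\times[-1,1]} \bigl|\partial_\lambda F(\bk,\lambda)\bigr|
$$
is finite. This $C$ will be the constant in the statement.

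Next I would argue containment of sets. Fix $\bk\in\cK$ for which there exists $\lambda\in[-1,1]$ with $|\lambda|\le M^j$ and $F(\bk,\lambda)=0$. The fundamental theorem of calculus gives
$$
F(\bk,0) = F(\bk,0) - F(\bk,\lambda) = -\int_0^\lambda \partial_\lambda F(\bk,\mu)\,\dd\mu,
$$
so $|F(\bk,0)|\le |\lambda|\cdot C \le C M^j$. Hence
$$
\bigl\{\bk\in\cK : F(\bk,\lambda)=0 \text{ for some } |\lambda|\le M^j\bigr\}
\;\subseteq\;
\bigl\{\bk\in\cK : |F(\bk,0)|\le CM^j\bigr\},
$$
and monotonicity of Lebesgue measure yields the first inequality.

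For the second assertion, apply the first inequality and observe that if $|e_n(\bk)|\le M^j$ for some band $n$, then the choice $\lambda = e_n(\bk)$ lies in $[-M^j,M^j]\subseteq[-1,1]$ (recall $j\le 0$), and by assumption $F(\bk,e_n(\bk))=0$. So $\bk$ belongs to the set on the left in the first inequality, and the stated bound follows immediately.

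There is no real obstacle here: the lemma is a soft consequence of the $C^1$ hypothesis plus compactness. The only minor point to check is that one may use $\lambda = e_n(\bk)$ in the first inequality, which relies on $j\le 0$ ensuring $M^j\le 1$ so that $\lambda$ stays in the domain $[-1,1]$ on which $F$ was assumed to be defined and $C^1$.
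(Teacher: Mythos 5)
Your proof is correct and follows essentially the same route as the paper: define $C$ as the supremum of $|\partial_\lambda F|$ on the compact set $\cK\times[-1,1]$, deduce $|F(\bk,0)|\le C M^j$ for any $\bk$ admitting a zero at some $|\la|\le M^j$, and conclude by set containment and monotonicity of the volume. The only cosmetic difference is that you invoke the fundamental theorem of calculus where the paper uses the mean value estimate directly.
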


\begin{proof} Since  $F$ is $C^1$ on the compact set $\cK\times[-1,1]$,
$$
C\equiv\sup_{(\bk,\la)\in \cK\times[-1,1]}
\big|\sfrac{\partial F}{\partial\la}(\bk,\la)\big|<\infty
$$
Hence, if for some $\bk\in \cK$ and some $|\la|\le M^j$, we have 
$F(\bk,\la)=0$, then, for that same $\bk$,
$$
|F(\bk,0)|=\big|F(\bk,\la)-F(\bk,0)\big|\le C|\la|\le CM^j
$$
Hence
$$
\set{\bk\in \cK}{F(\bk,\la)=0\hbox{ for some } |\la| \le M^j}
\subset\set{\bk\in \cK}{|F(\bk,0)|\le C M^j}
$$
\end{proof}\goodbreak

We now, and for the rest of this paper, focus on a single band $k \mapsto \epsilon (k)$,
and assume that the chemical potential $\mu$, used to fix the density, 
is such that $e(k) = \epsilon (k) -\mu$ has a nonempty zero set, 
the Fermi surface, which has also not degenerated to a point. 

In the scale analysis, momentum space is cut up in shells around the 
Fermi surface. Here we take the convention
of labelling these shells by negative integers $j \le 0$. The shell number $j$
contains momenta $k$ and Matsubara frequencies $k_0$ with 
$\frac12 M^j \le |\I k_0 - e(k)| \le M^j$. Here $M>1$ is fixed once and for all. 
For the (standard) details about the scale decomposition and the 
corresponding renormalization group flow, obtained by integrating 
over degrees of freedom in the shell number $j$ successively, 
downwards from $j=0$,  
see, e.g.\ \cite{FST1}, Section 2.

The next lemma contains the basic volume bound for the scale analysis. 
In the case without VHS, the bound is of order $M^j$. The lemma implies
that this bound remains unchanged for $d \ge 3$, and that there is an 
extra logarithm in $d=2$.
 
\begin{lemma}\label{le:2.4}
Let $\cK$ be a compact subset of $\bR^d$ and
$e:\cK\rightarrow\bR$ be $C^2$. Assume that for every point $\bp\in \cK$
at least one of
\item{$\circ$} $e(\bp)\ne 0$
\item{$\circ$} $\bnabla e(\bp)\ne 0$
\item{$\circ$} $\det\big[
                   \sfrac{\partial^2\hfill}{\partial\bk_i\partial\bk_j} e(\bp)
                \big]_{1\le i,j\le d}\ne 0$

\noindent is true. Then there is a constant $C$ such that
$$
\vol\set{\bk\in \cK}{|e(\bk)|\le M^j} \le C M^j\cases{|j|& if $d=2$\cr
                                                     1 & if $d>2$
}
$$
for all $j\le -1$.
\end{lemma}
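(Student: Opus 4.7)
The plan is to reduce to a local computation via compactness and then handle three cases at each point, corresponding to the three bullets in the hypothesis. Covering $\cK$ by finitely many neighborhoods of each type, the bound on $\vol\{\bk\in\cK:|e(\bk)|\le M^j\}$ decomposes into a finite sum of local contributions, so it suffices to produce the bound in each neighborhood uniformly in $j\le -1$.

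Around a point $\bp$ with $e(\bp)\ne 0$, continuity gives a neighborhood on which $|e|$ is bounded below by a positive constant, so the relevant set is empty once $M^j$ is small enough, contributing zero for all but finitely many $j$ (handled by enlarging $C$). Around a point $\bp$ with $\bnabla e(\bp)\ne 0$, the implicit function theorem gives a local coordinate system in which $e$ is one of the coordinates, so $\{|e|\le M^j\}$ is, up to a bounded Jacobian, a slab of thickness $2M^j$, contributing $O(M^j)$.

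The main case, and the one that distinguishes $d=2$ from $d\ge 3$, is a point $\bp$ where $\bnabla e(\bp)=\bzer$ and the Hessian is nonsingular. Only the subcase $e(\bp)=0$ matters (otherwise fall back to Case 1). Apply the Morse lemma: there exist local $C^1$ coordinates $y=(y_1,\dots,y_d)$ on a neighborhood of $\bp$ in which
\[
e(\bk)=\sum_{i=1}^p y_i^2-\sum_{i=p+1}^d y_i^2,
\]
where $p$ is the Morse index and $0\le p\le d$. The Jacobian of the coordinate change is bounded, so the volume we must estimate is, up to a constant,
\[
I_j=\vol\bigl\{y\in[-1,1]^d:|P(y)-Q(y)|\le M^j\bigr\},
\]
with $P=\sum_{i\le p}y_i^2$ and $Q=\sum_{i>p}y_i^2$. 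If $p\in\{0,d\}$ the set is an ellipsoid $\{P\le M^j\}$ of radius $\sim M^{j/2}$, hence volume $M^{jd/2}\le M^j$ for $d\ge 2$, $j\le 0$. If $0<p<d$, pass to the radii $r=\sqrt P$, $s=\sqrt Q$ and use $dy\le \const\, r^{p-1}s^{d-p-1}\,dr\,ds\,d\sigma$ on the angular sphere factors, then substitute $P=r^2,Q=s^2$ to get
\[
I_j\le \const\int_0^1\!\!\int_0^1 P^{(p-2)/2}Q^{(d-p-2)/2}\,\mathbf{1}\{|P-Q|\le M^j\}\,dP\,dQ.
\]
Performing the $Q$ integral first yields a bound of $\const\, M^j\int_0^1 P^{(p-2)/2}(P^{(d-p-2)/2}+1)\,dP$; this integral is finite when $d\ge 3$ (the worst exponent being $-1/2$), giving $I_j=O(M^j)$. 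In $d=2$, $p=1$ gives the integrand $(PQ)^{-1/2}$, and the straightforward change of variables $u=y_1+y_2$, $v=y_1-y_2$ reduces the set to $\{|uv|\le M^j,\ |u|,|v|\le 2\}$, whose area is $\int_{|u|\le 2}\min(2,M^j/|u|)\,du=O(M^j|\log M^j|)=O(M^j|j|)$.

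The only step that requires genuine care is this last computation, since it must produce the sharp $|j|$ factor in $d=2$ and at the same time give no logarithm in $d\ge 3$; the Morse-lemma change of variables is what makes the integrals explicit enough to see the transition, and the dimension dependence comes entirely from convergence of $\int_0^1 P^{(p-2)/2}\,dP$ when $p\ge 2$, i.e.\ when $d\ge 3$. Summing the local estimates over the finite subcover yields the stated bound with a single constant $C$.
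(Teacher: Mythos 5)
Your proof is correct and follows essentially the same route as the paper: compactness reduction, the three local cases, the Morse lemma in the degenerate case, and passage to the two radial variables, with the $u\pm v$ substitution producing the logarithm in $d=2$. The only (minor) difference is that for $d\ge 3$ you evaluate the final integral by substituting $P=r^2$, $Q=s^2$ rather than the paper's uniform $x=u+v$, $y=u-v$ change of variables, and your bound $\int \mathbf{1}\{|P-Q|\le M^j\}Q^{(d-p-2)/2}\,dQ\le \const M^j(P^{(d-p-2)/2}+1)$ checks out in the borderline case $p=d-1$.
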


\begin{proof} Since $\cK$ is compact, it suffices to prove that, for each $\bp\in \cK$
there are constants $R>0$ and $C$ (depending on $\bp$) such that for all $j\le -1$,
\begin{eqnarray}
\cV_{R,j}(\bp)
&=&
\vol\set{\bk\in \cK}{|e(\bk)|\le M^j, |\bk-\bp|\le R} 
\nonumber\\
&\le &
C M^j\cases{|j|& if $d=2$\cr
                                                     1 & if $d>2$
}
\nonumber
\end{eqnarray}

\noindent{\it Case 1: }$e(\bp)\ne 0$. We are free to choose $R$ sufficiently small
that $|e(\bk)|\ge\sfrac{1}{2} |e(\bp)|$ for all $\bk\in \cK$ with 
$|\bk-\bp|\le R$. Then $\set{\bk\in \cK}{|e(\bk)|\le M^j, |\bk-\bp|\le R}$
is empty unless $M^j\ge \sfrac{1}{2} |e(\bp)|$ and it suffices to take
$$
C=\sfrac{2}{|e(\bp)|}\vol\set{\bk\in \cK}{|\bk-\bp|\le R}
$$
\noindent{\it Case 2: }$e(\bp)= 0$, $\bnabla e(\bp)\ne 0$. By translating
and permuting indices, we may assume that $\bp=0$ and that 
$\sfrac{\partial\, e\,}{\partial \bk_1}(\bp)\ne 0$. Then, if $R$ is small enough,
$$
\bx= X(\bk)=\big(e(\bk),\bk_2,\ldots,\bk_d\big)
$$
is a $C^2$ diffeomorphism from $\cK_R=\set{\bk\in \cK}{|\bk-\bp|\le R}$ to 
some bounded subset $\cX$ of $\bR^d$. The Jacobian of this diffeomorphism is
$\sfrac{\partial\,e\,}{\partial \bk_1}(\bk)$ and is bounded away from zero,
say by $c_1$. Then
\begin{eqnarray}
\cV_{R,j}(\bp)
&=&
\vol\set{\bk\in \cK}{|e(\bk)|\le M^j, |\bk-\bp|\le R}
\nonumber\\
&=&
\int_{\cK_R}1\big( |e(\bk)|\le M^j\big)\ d^d\bk
\nonumber\\
&=&
\int_{\cX}1\big( |\bx_1|\le M^j\big)\ 
  \big|\sfrac{\partial\,e\,}{\partial \bk_1}\big(X^{-1}(\bx)\big)\big|^{-1}\ d^d\bx
\nonumber\\
&\le&
 c_1^{-1}\int_{\cX}1\big( |\bx_1|\le M^j\big)\ d^d\bx
\nonumber\\
&\le&
 c_1^{-1}c_2 M^j 
\nonumber
\end{eqnarray}
Here $1(E)$ denotes the indicator function of the event $E$, i.e.
$1(E) = 1$ if $E$ is true and $1(E)=0$ otherwise.

\noindent{\it Case 3: }$e(\bp)= 0$, $\bnabla e(\bp)= 0$, 
$\det\big[
        \sfrac{\partial^2\hfill}{\partial\bk_i\partial\bk_j} e(\bp)
           \big]_{1\le i,j\le d}\ne 0$. By translating, we may 
assume that $\bp=0$. Then, if $R$ is small enough, the Morse lemma \cite[Theorem
8.3bis]{Morsebis}
implies
that there exists a $C^1$ diffeomorphism, $X(\bk)$, from $\cK_R$ to some bounded
subset $\cX$ of $\bR^d$ such that 
$$
e\big(X^{-1}(\bx)\big)
=Q_m (\bx) 
=x_1^2+\ldots+x_m^2-x_{m+1}^2-\ldots-x_d^2
$$
for some $0\le m\le d$. Then
\begin{eqnarray}
\cV_{R,j}(\bp)
&=&
\int_{\cK_R}1\big( |e(\bk)|\le M^j\big)\ d^d\bk
\nonumber\\
&=&\int_{\cX}
    1\big( |Q_m (\bx) |\le M^j\big)\ 
  \Big|\det\big[\sfrac{\partial X_i}{\partial\bk_j} \big(X^{-1}(\bx)\big)
                \big]_{1\le i,j\le d}\Big|^{-1}d^d\bx
\nonumber\\
&\le& c_1^{-1}\int_{\cX}
    1\big( |Q_m (\bx) |\le M^j\big)\ d^d\bx
\nonumber
\end{eqnarray}
If $m=0$ or $m=d$,
\begin{eqnarray}
\int_{\cX}
    1\big( |Q_m (\bx) |\le M^j\big)\ d^d\bx
&=&
\int_{\cX}
    1\big( |x_1^2+\ldots+x_d^2|\le M^j\big)\ d^d\bx
\nonumber\\
&\le&
 \int_{\bR^d}
    1\big( |x_1^2+\ldots+x_d^2|\le M^j\big)\ d^d\bx
\nonumber\\
&=&
c_d M^{dj/2}
\nonumber
\end{eqnarray}
so it suffices to consider $1\le m\le d-1$. Go to spherical coordinates
separately in $x_1,\ldots,x_m$ and $x_{m+1},\ldots,x_d$, using
$$
u=\sqrt{x_1^2+\ldots+x_m^2}\qquad
v=\sqrt{x_{m+1}^2+\ldots+x_d^2}
$$
If $R$ is small enough
$$
\cV_{R,j}(\bp)
\le c_1^{-1}c_{m,d}\int_{0\le u,v\le 1}\hskip-26pt
    1\big( |u^2-v^2|\le M^j\big)u^{m-1}v^{d-m-1}\ dudv
$$
Now make the change of variables $x=u+v$, $y=u-v$. Then
\begin{eqnarray}
\int_{0\le v\le u\le 1}\hskip-35pt &&
    1\big( |u^2-v^2|\le M^j\big)u^{m-1}v^{d-m-1}\ dudv
\nonumber\\
&\le&
 \int_0^2\hskip-8pt dx\int_0^1\hskip-8pt dy\ 
    1\big( xy\le M^j\big)(x+y)^{m-1}|x-y|^{d-m-1}
\nonumber\\
&\le&
 \int_0^2\hskip-5pt dx\int_0^1\hskip-5pt dy\ 
    1\big( xy\le M^j\big)(x+y)^{d-2}
\nonumber
\end{eqnarray}
and the lemma follows from 
$$
\int_0^2\!\!dx\int_0^{\min\{1,M^j/x\}}\!\! dy
=
M^j+M^j\ln\sfrac{2}{M^j}
$$
and, for $n \ge 1$, 
$$
\int_0^2\!\!dx\int_0^{\min\{1,M^j/x\}}\!\! dy\ x^n
\le
 \sfrac{1}{n+1}M^{(n+1)j}+2^{n} M^j
$$
and
$$
\int_0^2\!\!dx\int_0^{\min\{1,M^j/x\}}\!\! dy\ y^n
\le M^j
$$
\end{proof}

\noindent
That $\vol\set{\bk\in \cK}{|e(\bk)|\le M^j} \le C M^j|j|$ and that this
bound suffices to yield a well--defined counterterm and well--defined
correlation functions, to all orders of perturbation theory, was also
proven in \cite{Brox}. We now refine Lemma \ref{le:2.4} a little.

\goodbreak

\begin{lemma}\label{le:2.6}
Let $e:\bR^d\rightarrow\bR$ be $C^2$. Assume that
\begin{itemize}
\item[$\circ$] $e(\bzer)= 0$
\item[$\circ$] $\bnabla e(\bzer) = \bzer$
\item[$\circ$] $\det\big[
                   \sfrac{\partial^2\hfill}{\partial\bk_i\partial\bk_j} e(\bzer)
                \big]_{1\le i,j\le d}\ne 0$
\item[$\circ$] $\big[
                   \sfrac{\partial^2\hfill}{\partial\bk_i\partial\bk_j} e(\bzer)
                \big]_{1\le i,j\le d}$ has at least one positive eigenvalue
and at least one negative eigenvalue.
\end{itemize}
Then there are $C,\ C'>0$ such that for all $\bq \in\bR^d$,
$j\le 0$ and $0<\veps<\sfrac{1}{2}$,
\begin{eqnarray}
&&
\vol\set{\bk\in \bR^d}{|e(\bk)|\le M^j, |\bk-\bq|\le M^{\veps j}, |\bk|\le C'}
\nonumber\\
&\le &
CM^j\cases{1+(1-2\veps)|j|&if $d=2$\cr
               M^{(d-2)\veps j}&if $d>2$\cr
}
\nonumber
\end{eqnarray}
\end{lemma}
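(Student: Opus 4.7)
The plan is to straighten $e$ by the Morse lemma near the saddle $\bzer$ and then execute a case analysis in the ``hyperbolic'' variables $u,v$ from Case 3 of Lemma \ref{le:2.4}, carefully tracking how the additional ball $|\bk-\bq|\le M^{\veps j}$ restricts the integration. Since the Hessian at $\bzer$ has both signs of eigenvalues, the Morse signature satisfies $1\le m\le d-1$. Choose $C'$ small enough that the Morse diffeomorphism $X$ is defined on $\{|\bk|\le C'\}$; when $\bq$ lies outside a neighborhood of this set the claim reduces to a trivial volume estimate, so I may assume $\bq'=X(\bq)$ is defined. After $\bk=X^{-1}(\bx)$, the problem becomes bounding
$$
\int 1\bigl(|u^2-v^2|\le M^j\bigr)\,1\bigl(|\bx-\bq'|\le CM^{\veps j}\bigr)\,u^{m-1}v^{d-m-1}\,du\,dv\,d\Omega_u\,d\Omega_v,
$$
with $u,v$ the radii of the two Morse blocks and $\Omega_u\in S^{m-1}$, $\Omega_v\in S^{d-m-1}$ the angles.

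I decouple the ball constraint into radial and angular pieces. The reverse triangle inequality gives $|u-u_0|\le CM^{\veps j}$ and $|v-v_0|\le CM^{\veps j}$ where $u_0=|\bq'_{[1,m]}|$, $v_0=|\bq'_{[m+1,d]}|$. A direct geometric estimate on the angles shows that when $u_0\ge KM^{\veps j}$ the admissible $\Omega_u$ lies in a spherical cap of measure $\le C(M^{\veps j}/u_0)^{m-1}$, so integration over $\Omega_u$ with the Jacobian $u^{m-1}$ contributes the clean factor $(M^{\veps j})^{m-1}$; when $u_0<KM^{\veps j}$ the weight $u^{m-1}$ is kept as is and integrated against $u\in[0,(K+C)M^{\veps j}]$. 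The analogous dichotomy holds for $v$.

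This leaves a two-dimensional $(u,v)$-integral against $|u^2-v^2|\le M^j$ which I split into three cases with a suitably large constant $K$. \emph{(i) Both $u_0,v_0\ge KM^{\veps j}$:} then $u+v\ge(u_0+v_0)/2$, and the factorization $|u^2-v^2|=|u-v|(u+v)$ forces the admissible $v$-interval to have length $\le 2M^j/(u_0+v_0)$; the radial integral is $\le CM^j\,M^{\veps j}/(u_0+v_0)\le CM^j$, which multiplied by the angular factor $(M^{\veps j})^{d-2}$ gives the desired $CM^j\cdot M^{(d-2)\veps j}$. \emph{(ii) Both $u_0,v_0<KM^{\veps j}$:} then $u,v\in[0,(K+C)M^{\veps j}]$ with the full weight $u^{m-1}v^{d-m-1}$, and the substitution $x=u+v$, $y=u-v$ from the proof of Lemma \ref{le:2.4} reduces the estimate to $\int_0^{CM^{\veps j}} s^{d-2}\min(s,M^j/s)\,ds$, which evaluates to $CM^j\bigl(1+(1-2\veps)|j|\bigr)$ for $d=2$ and to $CM^j\cdot M^{(d-2)\veps j}$ for $d\ge 3$. \emph{(iii) Mixed}, say $u_0\ge KM^{\veps j}$ and $v_0<KM^{\veps j}$: either $u_0\le 10KM^{\veps j}$, in which case $u$ is comparable to $M^{\veps j}$ and the estimate is absorbed into (ii), or $u_0\gg M^{\veps j}$, in which case $u-v\gtrsim u_0$ and $u+v\gtrsim u_0$, so $(u-v)(u+v)\gtrsim u_0^2\gtrsim K^2 M^{2\veps j}\gg M^j$ (using $\veps<1/2$ and $j\le 0$), and the set is empty for $K$ large.

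The main obstacle is verifying that the mixed case is forced to be empty, i.e., the geometric fact that the saddle constraint $|u^2-v^2|\le M^j$ pins $u$ to $v$ at the scale $M^{\veps j}$; this is what prevents the $u^{m-1}$ weight at large $u$ from damaging the bound. Once this is established, Case (ii) is the dominant contribution, and the logarithm in $d=2$ and the polynomial improvement $M^{(d-2)\veps j}$ in $d\ge 3$ both fall out of the two pieces of the $s^{d-2}\min(s,M^j/s)$ integrand.
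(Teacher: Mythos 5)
Your proposal is correct and follows essentially the same route as the paper's proof: Morse lemma, separate spherical coordinates in the two blocks with the spherical-cap bound $\min\{u,M^{\veps j}\}^{m-1}\min\{v,M^{\veps j}\}^{d-m-1}$ for the angular measure, and the hyperbolic change of variables $x=u+v$, $y=u-v$. The only difference is organizational — the paper splits on whether the interval for $x=u+v$ meets $[0,M^{\veps j}]$ rather than on the location of $\bq'$, which subsumes your mixed case (iii) without needing the emptiness argument — but the estimates are the same.
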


\begin{proof}
By the Morse lemma, we can assume without loss of generality that
$$
e\big(\bk\big)=k_1^2+\ldots+k_m^2-k_{m+1}^2-\ldots-k_d^2
$$
for some $1\le m\le d-1$. Go to spherical coordinates
separately in $k_1,\ldots,k_m$ and $k_{m+1},\ldots,k_d$, using
$$
u=\sqrt{k_1^2+\ldots+k_m^2}\qquad
v=\sqrt{k_{m+1}^2+\ldots+k_d^2}
$$
For any fixed $u>0$, the condition $|\bk-\bq|\le M^{\veps j}$ restricts
$(k_1,\ldots,k_m)$ to lie on a spherical cap of diameter at most $2M^{\veps j}$
on the sphere of radius $u$. This cap has an area of at most an $m$--dependent
constant times $\min\{u, M^{\veps j}\}^{m-1}$.  Similarly,
for any fixed $v>0$, the condition $|\bk-\bq|\le M^{\veps j}$ restricts
$(k_{m+1},\ldots,k_d)$ to run over an area of at most a
constant times $\min\{v, M^{\veps j}\}^{d-m-1}$.  The condition 
$|\bk-\bq|\le M^{\veps j}$ also restricts $u$ and $v$ to run over intervals
$I_1$, $I_2$ of length at most $2M^{\veps j}$. Thus
\begin{eqnarray}
&&\vol\set{\bk}{|e(\bk)|\le M^j, |\bk-\bq|\le M^{\veps j}}\cr
&&\le
 c_{m,d}\int_{I_1\times I_2}
    1\big( |u^2-v^2|\le M^j\big)
        \min\{u, M^{\veps j}\}^{m-1}\min\{v, M^{\veps j}\}^{d-m-1}\ dudv
\nonumber\\
&&\le
c_{m,d}\int_{I_1\times I_2}
    1\big( |u^2-v^2|\le M^j\big)
        \min\Big\{\max\{u,v\}, M^{\veps j}\Big\}^{d-2}\ dudv
\nonumber
\end{eqnarray}
It suffices to consider the case $0\le v\le u$. Make the change of variables 
$x=u+v$, $y=u-v$. Then $x$ and $y$ are restricted to run over intervals
$J_1$, $J_2$ of length at most $4M^{\veps j}$ and
\begin{eqnarray}
&&\vol\set{\bk}{|e(\bk)|\le M^j, |\bk-\bq|\le M^{\veps j}}
\nonumber\\
&&\le
2c_{m,d}\int_{J_1\times J_2\atop 0\le y\le x}\hskip-5pt
    1\big(xy\le M^j\big)
        \min\big\{x, M^{\veps j}\big\}^{d-2}dxdy
\nonumber
\end{eqnarray}
In the event that $J_1\subset [M^{\veps j},\infty]$, then on the domain
of integration, $x\ge M^{\veps j}$ and the condition $xy\le M^j$ forces
$y\le M^j/M^{\veps j}$, so that
\begin{eqnarray}
&&\vol\set{\bk}{|e(\bk)|\le M^j, |\bk-\bq|\le M^{\veps j}}
\nonumber\\
&&\le
 2c_{m,d}\int_0^{M^{(1-\veps)j}}dy\int_{J_1} dx\ M^{(d-2)\veps j}
\nonumber\\
&&\le
2c_{m,d} M^{(1-\veps)j} 4M^{\veps j} M^{(d-2)\veps j}
\nonumber\\
&&= 8 c_{m,d} M^jM^{(d-2)\veps j}
\nonumber
\end{eqnarray}
If $J_1\cap [0,M^{\veps j}]\ne \emptyset$, the domain of integration is
contained in $0\le y\le x\le 5M^{\veps j}$ and 
\begin{eqnarray}
&&\vol\set{\bk}{|e(\bk)|\le M^j, |\bk-\bq|\le M^{\veps j}}
\nonumber\\
&&\le
2c_{m,d}\int_0^{5M^{\veps j}} dx\int_0^{5M^{\veps j}} dy\, 
        1\big( xy\le M^j\big)x^{d-2}
\nonumber
\end{eqnarray}
For $d=2$, the lemma follows from
$$
\int_0^{5M^{\veps j}}\hskip-15pt dx\int_0^{\min\{5M^{\veps j},M^j/x\}}\!\! dy
=
M^j\big(1+\ln 25+(1-2\veps)|j|\ln M\big)
$$
For $d>2$,
$$
\int_0^{5M^{\veps j}}\hskip-15pt dx\int_0^{\min\{5M^{\veps j},M^j/x\}}\!\! dy\ x^{d-2}
\le  
5^{d-1}M^jM^{(d-2)\veps j} .
$$
\end{proof}

\section{Improved power counting} 
\label{finer}

\null From now on we assume that $d \ge 3$, and that

\begin{itemize}

\item[$\circ$] $\cF=\set{\bk\in\bR^d}{e(\bk)=0}$ is compact

\item[$\circ$] $e(\bk)$ is $C^3$ 

\item[$\circ$] $\nabla e(\bk)$ vanishes only at isolated points of $\cF$.
We shall call them singular points.

\item[$\circ$] if $e(\bk)=0$ and $\bnabla e(\bk)=\bzer$, then
           $\big[\sfrac{\del^2\hfill}{\del\bk_i\del\bk_j} e(\bk)
                \big]_{1\le i,j\le d}$ is nonsingular and has at least 
                one positive eigenvalue and at least one negative eigenvalue.
\end{itemize}

In addition, we make an assumption that there is no nesting. 
In general, this means that any 
nontrivial translate of $\cF$ or $-\cF$ only has 
intersections with $\cF$ of at most some fixed finite degree.
Here we only require a weak form of no--nesting
-- namely that there is only polynomial flatness. 
This assumption, which is essentially the same as Hypothesis {\bf A3} 
in \cite{FST1}, is introduced and discussed in detail in the following.

\subsection{A no--nesting hypothesis and its consequences}
\label{sec:NN}
\noindent To make precise the ``only polynomial flatness''
hypotheses, let 
$$
n: \cF \to \bR^d,\qquad\om \mapsto n(\om)=\sfrac{\nabla e}{|\nabla e|} (\om)
$$
be the unit normal to the Fermi surface. It is defined except at singular 
points, which are isolated. For $\om,\om' \in \cF$, define the angle between 
$n(\om )$ and $n(\om')$ by
$$
\th (\om,{\om'} ) = \arccos ( n(\om) \cdot n({\om'}) )
$$
Let 
\begin{equation}
\cD (\om) = \set{ {\om'} \in \cF}{\abs{n({\om}) \cdot n(\om')} =1}
= \set{\om'\in \cF}{n(\om) =\pm n(\om')}
\label{cDomdef}
\end{equation}
and denote the $(d-1)$--dimensional measure of $A \subset \cF$ by $\vol_{d-1} A$.
Also, for any $A\subset\bR^d$ and $\beta> 0$ denote by
$
U_\beta (A) =\set{\bp\in\bR^d}{{\rm distance}(\bp,A)<\beta}
$
the open $\beta$-neighbourhood of $A$. We assume:

\bigskip\noindent
{\bf Hypothesis \hypflatnest. }{\em
There are strictly positive numbers $Z_0,\ \be_0$ and $\ka$
such that for all $\be \le \be_0$ and all $\om\in\cF$,
$$
\vol_{d-1}\set{\om'\in\cF}{ \abs{\sin \th (\om,{\om'})}=\sqrt{1-(n(\om')\cdot n(\om))^2} \le\be}
\le Z_0\be^\ka
$$
To verify this hypothesis, it suffices to find strictly positive numbers
$z_0, z_1, \rho', \beta_0 $ and $\ka'$ 
such that for all for all $\be \leq \be'_0$ and all $\om \in \cF$,
\begin{itemize}
\item[(i)]
$\vol_{d-1} \left( U_\be (\cD (\om)) \cap \cF \right) \leq z_0 \be^{\ka'} $
\item[(ii)] if ${\om'} \not\in U_\be (\cD (\om) ) \cap \cF$,
then $\abs{\sin \th (\om,{\om'})}= \sqrt{1-(n(\om)\cdot n({\om'}))^2} 
\geq z_1 \be^{\rh'}$.
\end{itemize}

\noindent Then $\ka=\sfrac{\ka'}{\rho'}$, $Z_0=z_0 z_1^{-\ka'/\rho'}$ and $\be_0=z_1{\be_0'}^{\rho'}$.

}

\bigskip\noindent
{\bf Example. } 
As an  example, take $d\ge 3$, $1\le m< d$ and
$e(\bk)=k_1^2+\ldots+k_m^2-k_{m+1}^2-\ldots-k_d^2$, say with $|\bk|\le\sqrt{2}$. 
The corresponding Fermi surface, $\cF$, is the (truncated) cone   
$k_1^2+\ldots+k_m^2=k_{m+1}^2+\ldots+k_d^2$, which we may parametrize by
$\bk=(r\bth,r\bphi)$ with $0\le r\le 1$, $\bth\in S^{m-1}$ and $\bphi\in
S^{d-m-1}$. The volume element on $\cF$ in this parametrization is 
$\sqrt{2}r^{d-2}\,dr\, d^{m-1}\bth\, d^{d-m-1}\bphi$, where $d^{m-1}\bth$ and  $d^{d-m-1}\bphi$
are the volume elements on $S^{m-1}$ and $S^{d-m-1}$ respectively.
The unit normals to $\cF$ at $\bk=(r\bth,r\bphi)$
are $\pm\sfrac{1}{\sqrt{2}}(\bth,-\bphi)$. 

Now fix any $\om=(r\bth,r\bphi)$ with $0<r\le 1$. Then
$$
D(\om)=\set{(t\bth,t\bphi)}{0<|t|\le 1}
$$
If $(t'\bth',t'\bphi')\in U_\be(D(\om))\cap\cF$ the there is a $t$ such
that
\begin{eqnarray}
\big|(t'\bth',t'\bphi')-(t\bth,t\bphi)\big|<\be
&\implies& \sqrt{{|t'\bth'-t\bth|}^2+{|t'\bphi'-t\bphi|}^2}<\be
\nonumber\\
&\implies& |t'\bth'-t\bth|<\be,\ |t'\bphi'-t\bphi|<\be,\ |t-t'|<\be
\nonumber\\
&\implies& |t'\bth'-t'\bth|<2\be,\ |t'\bphi'-t'\bphi|<2\be
\nonumber
\end{eqnarray} 
For each fixed $t'$ the volume of the $t'\bth'$s in $t'S^{m-1}$ for which
$|\bth'-\bth|<2\be/|t'|$ is at most a constant, depending only on $m$,
times $|t'|^{m-1}\min\big\{1,\big(\sfrac{\be}{|t'|}\big)^{m-1}\big\}
\le\be^{m-1}$ and the volume of the $t'\bphi'$s in $t'S^{d-m-1}$ for which
$|\bphi'-\bphi|<2\be/|t'|$ is at most a constant, depending only on $d-m-1$,
times $|t'|^{d-m-1}\min\big\{1,\big(\sfrac{\be}{|t'|}\big)^{d-m-1}\big\}
\le\be^{d-m-1}$.
Hence 
$$
\vol_{d-1}\big(U_\be(D(\om))\cap\cF\big)
\le c_{d,m}\int_0^1 dt'\ \be^{d-2}
= c_{d,m}\be^{d-2}
$$
Thus condition (i) of Hypothesis \hypflatnest\ is satisfied with $\ka'=d-2$. 

If $\om'=(t'\bth',t'\bphi')\notin U_\be(D(\om))\cap\cF$ then, for every 
$|t|\le 1$,
\begin{eqnarray}
\big|(t'\bth',t'\bphi')-(t\bth,t\bphi)\big|\ge \be
\nonumber
\end{eqnarray} 
In particular,
\begin{eqnarray}
\big|(t'\bth',t'\bphi')\pm(t'\bth,t'\bphi)\big|\ge \be
\implies \big|(\bth',\bphi')\pm(\bth,\bphi)\big|\ge \be
\nonumber
\end{eqnarray} 
The angle between
$
n(t'\bth',t'\bphi')=\pm\sfrac{1}{\sqrt{2}}(\bth',-\bphi')\hbox{ and }
n(r\bth,r\bphi)=\pm\sfrac{1}{\sqrt{2}}(\bth,-\bphi)
$
is the same ($\pm \pi$) as the angle between 
$(\bth',\bphi')$ and $(\bth,\bphi)$ (measured at the origin).
By picking signs appropriately, we may assume that $0\le \th\big(\om,\om')\le
\sfrac{\pi}{2}$. Thus
$$
\big|\sin\th\big(\om,\om')\big|
\ge\big|\sin\sfrac{1}{2}\th\big(\om,\om')\big|
=\sfrac{1}{2\sqrt{2}}
\big|(\bth',\bphi')\pm(\bth,\bphi)\big|\ge \sfrac{1}{2\sqrt{2}}\be
$$
and condition (ii) of Hypothesis \hypflatnest\ is satisfied with $\rho'=1$.
So $\ka=d-2$.

\begin{proposition}\label{propGenHyp}
Let $d\ge 3$ and let  $e:\bR^d\rightarrow\bR$ be $C^3$. Assume that
\begin{itemize}
\item[$\circ$] $e(\bzer )= 0$
\item[$\circ$] $\bnabla e(\bzer ) = \bzer $
\item[$\circ$] $\det\big[
                   \sfrac{\partial^2\hfill}{\partial\bk_i\partial\bk_j} e(\bzer )
                \big]_{1\le i,j\le d}\ne 0$
\item[$\circ$] $\big[
                   \sfrac{\partial^2\hfill}{\partial\bk_i\partial\bk_j} e(\bzer )
                \big]_{1\le i,j\le d}$ has $m\ge 1$ positive eigenvalues
and $d-m\ge 1$ negative eigenvalues.
\end{itemize}
\noindent Then there is a $c>0$ and constants $\be_0>0$ and $Z_0$
such that for every unit vector $\ba\in\bR^d$,
\begin{eqnarray}
&&\vol_{d-1}\set{\bk\in\cF}{ \sqrt{1-(n(\bk)\cdot \ba)^2} \le\be}
\le Z_0\be^{\max\{m-1,d-m-1\}}
\nonumber\\
&&\hbox{where }\cF=\set{\bk\in\bR^d}{|\bk|<c,\ e(\bk)=0}
\nonumber
\end{eqnarray}
for all $0<\be<\be_0$.
\end{proposition}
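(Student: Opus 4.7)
The plan is to parametrize $\cF$ near the singular point $\bzer$ by the cone-like coordinates of the preceding Example, made available here by applying the Morse lemma to $e$ at $\bzer$. First, by a linear change of variables with constant Jacobian (absorbed into $Z_0$), I would arrange the Hessian of $e$ at $\bzer$ to be $2J$ with $J=\mathrm{diag}(I_m,-I_{d-m})$. The Morse lemma, which applies since $e\in C^3$, then yields a $C^1$ diffeomorphism $\Phi$ on a neighborhood of $\bzer$ with $\Phi(\bzer)=\bzer$, $D\Phi(\bzer)=I$, and $e\circ\Phi^{-1}=Q_m$. Choosing $c$ small so that $\{|\bk|<c\}$ sits inside this neighborhood, $\cF$ is parametrized by $(t,\bth,\bphi)\in(0,c')\times S^{m-1}\times S^{d-m-1}$ via $\bk=\Phi^{-1}(t\bth/\sqrt 2,\,t\bphi/\sqrt 2)$, with area element comparable to $t^{d-2}\,dt\,d\sigma(\bth)\,d\sigma(\bphi)$.

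Next, from $\nabla_{\!\bk}e=(D\Phi)^T\nabla_{\!\bx}Q_m$ the unit normal is $n(\bk)=\widehat{(D\Phi(\bk))^T(\bth,-\bphi)}$, equal to the pure-cone normal $n_0=(\bth,-\bphi)/\sqrt 2$ plus a correction bounded in magnitude by $\|D\Phi(\bk)-I\|$ and orthogonal to $n_0$ at leading order. Substituting into $1-(n(\bk)\cdot\ba)^2\le\beta^2$ and using that the orthogonal correction contributes to $n\cdot\ba$ only through the transverse component of $\ba$, the condition reduces to an inequality of the form $(\bth\cdot\ba_1-\bphi\cdot\ba_2)^2\ge 2-2\beta^2-\eta(t)$, where $\ba=(\ba_1,\ba_2)$ and $\eta(t)\to 0$ as $t\to 0$. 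The left-hand side is at most $(|\ba_1|+|\ba_2|)^2\le 2$, with equality only when $|\ba_1|=|\ba_2|=1/\sqrt 2$, so the inequality forces $|\ba_1|,|\ba_2|$ each close to $1/\sqrt 2$ and, writing $\theta_1,\theta_2$ for the angles between $\bth$ and $\pm\ba_1/|\ba_1|$ on $S^{m-1}$ and between $\bphi$ and $\mp\ba_2/|\ba_2|$ on $S^{d-m-1}$, gives $\theta_1^2+\theta_2^2\lesssim\beta^2+\eta(t)$.

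Finally, I would bound the volume using only one sphere constraint at a time while integrating the other sphere factor trivially. This yields the two symmetric estimates $\vol\le C\int_0^{c'}t^{d-2}(\beta^2+\eta(t))^{(m-1)/2}\,dt$ and the analogous one with $m-1$ replaced by $d-m-1$. Splitting the $t$-integral at $t\sim\beta^2$ and choosing $c$ small enough that $\eta(t)$ is controlled on $(0,c')$, these produce polynomial bounds of the form $C\beta^{m-1}$ and $C\beta^{d-m-1}$ respectively; keeping the larger exponent gives the claimed $Z_0\,\beta^{\max\{m-1,d-m-1\}}$. The main obstacle is quantitative control of $\eta(t)$: because $e\in C^3$ gives only a $C^1$ Morse chart, $\eta(t)$ is a priori only $o(1)$ without a polynomial rate in $t$, so turning the final $t$-integral into a genuine polynomial in $\beta$ (rather than picking up a $\beta$-independent constant from the large-$t$ regime) requires combining the orthogonality of the normal correction to $n_0$ with the freedom to shrink $c$ so that $\eta(t)$ stays below $\beta_0$ throughout $(0,c')$.
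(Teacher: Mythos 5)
There is a genuine gap, and it is the one you flag yourself at the end: your reduction replaces the condition $\sqrt{1-(n(\bk)\cdot\ba)^2}\le\be$ by membership of $(\bth,\bphi)$ in a cap of angular radius $\sqrt{\be^2+\eta(t)}$ centred at a \emph{fixed} direction determined by $\ba$. That step discards exactly the information needed for the conclusion. The true picture is that, for each fixed $t$ and $\bphi$, the admissible $\bth$'s form a cap of radius $O(\be)$ centred at a point that \emph{moves} with $t$ and $\bphi$; by absorbing the deviation of $n(\bk)$ from the pure cone normal into the cap radius you convert a translated small cap into an enlarged one. The resulting integral $\int_0^{c'}t^{d-2}\big(\be^2+\eta(t)\big)^{(m-1)/2}\,dt$ cannot yield $C\be^{m-1}$: since $c$ (hence $c'$) must be chosen independently of $\be$, and $\be$ ranges over all of $(0,\be_0)$, one cannot arrange $\eta(t)\lesssim\be^2$ on all of $(0,c')$; the region $t\sim c'$ contributes a $\be$--independent constant. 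This is not cured by the $C^1$ Morse chart (which only gives $\|D\Phi-I\|=o(1)$, with no rate), and it is not even cured by the optimal quantitative control $\eta(t)=O(t^2)$ available from the $C^3$ hypothesis, because $\int_0^{c'}t^{d-2}(\be+t)^{m-1}\,dt$ is still bounded below by a positive constant. Shrinking $c$ so that $\eta\le\be_0$ does not help for the same reason.

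The missing idea is to control the \emph{location}, not just the size, of the admissible set, and the natural mechanism is uniform strict convexity of the slices of $\cF$. Keep the explicit decomposition $e(\bk)=\la_1k_1^2+\cdots-\la_dk_d^2+G(\bk)$ with $G$ having a third--order zero (so $\nabla G=O(|\bk|^2)$ and $\nabla^2G=O(|\bk|)$), rather than passing to a Morse chart. One first checks (as in Lemma \ref{lemPertCone}) that on $\cF$ one has $R_1(\om)\sim R_2(\om)\sim|\om|$, whence $|\rP_1 n(\om)|$ is bounded away from zero and the hypothesis $|\sin\th(n(\om),\ba)|\le\be$ implies $|\sin\th(\rP_1 n(\om),\rP_1\ba)|\le g_2\be$ (Lemma \ref{lemProjangle}). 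Then, fixing the last $d-m$ coordinates, the slice $\cF_{\bk_2}$ is a strictly convex $(m-1)$--surface because its second fundamental form is $2\La_1+O(|\bk|)>0$ for $c$ small; its Gauss map is therefore bi--Lipschitz, so the set where the projected normal lies within angle $g_2\be$ of $\rP_1\ba$ has $(m-1)$--measure $O(\be^{m-1})$ \emph{wherever that set happens to sit}. Integrating over the bounded range of $\bk_2$ gives $O(\be^{m-1})$, and the symmetric argument gives $O(\be^{d-m-1})$. Your overall scaffolding (one sphere at a time, symmetric bounds, take the larger exponent) matches the paper's, but without the convexity/Gauss--map step the polynomial bound in $\be$ does not follow.
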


\begin{proof}
By a rotation, followed by a permutation of indices, we may assume 
that $\big[\sfrac{\partial^2\hfill}{\partial\bk_i\partial\bk_j} e(\bzer )
                \big]_{1\le i,j\le d}$ is a diagonal matrix, with diagonal
entries $2\la_1$, $2\la_2$, $\cdots$, $2\la_d$ that are in decreasing order.
By hypothesis, $\la_j>0$ for
$1\le j\le m$ and $\la_j<0$ for $m+1\le j\le d$. Replace $\la_j$ by $-\la_j$
for $j>m$. Then, 
$$
e(\bk)=\la_1k_1^2+\cdots+\la_mk_m^2-\la_{m+1}k_{m+1}^2-\cdots-\la_d k_d^2 +G(\bk)
$$
with $G(\bk)$ a $C^3$ function having a third order zero at $\bzer $. Define
\begin{eqnarray}
R_1(\bk)&=&\sqrt{\la_1k_1^2+\cdots+\la_mk_m^2}
\nonumber\\
R_2(\bk)&=&\sqrt{\la_{m+1}k_{m+1}^2+\cdots+\la_d k_d^2}
\nonumber\\
R(\bk)&=&\sqrt{\la_1k_1^2+\cdots+\la_d k_d^2}
\nonumber
\end{eqnarray}
Also use
\begin{eqnarray}
\tilde S_1^{m-1}&=&\set{(k_1,\cdots,k_m)}{\la_1k_1^2+\cdots+\la_mk_m^2=1}
\nonumber\\
\tilde S_2^{d-m-1}&=&\set{(k_{m+1},\cdots,k_d)}{\la_{m+1}k_{m+1}^2+\cdots+\la_d k_d^2=1}
\nonumber
\end{eqnarray}
to denote ``unit'' $(m-1)$--dimensional and $(d-m-1)$--dimensional
ellipsoids, respectively. 
For each $r>0$, the surface $R(\bk)=r$ is a $d-1$ dimensional ellipsoid
in $\bR^d$ with smallest semi--axis $r/\max_j\sqrt{\la_j}$ and largest
semi--axis  $r/\min_j\sqrt{\la_j}$. We now concentrate on the intersection 
of $\cF$ and that ellipsoid. The proof of Proposition \ref{propGenHyp} will
continue following the proof of Lemma \ref{lemAngle}.

\begin{lemma}\label{lemPertCone}
Suppose that
$$
|G(\bk)|\le g_0 R(\bk)^3\qquad |\nabla G(\bk)| \le g_1 R(\bk)^2
$$
and that $c$ is small enough (depending only on $g_0$, $g_1$ and the
$\la_i$'s). 
\begin{itemize}
\item[(a)] For each  $\bth_1\in \tilde S^{m-1}$, 
$\bth_2\in\tilde S^{d-m-1}$ and $r\ge 0$ 
such that the ellipsoid $\set{\bk\in\bR^d}{R(\bk)=r}$ is contained in the sphere
$\set{\bk}{|\bk|<c}$,
there is a unique $(r_1,r_2)$ such that
$$
r_1,r_2\ge 0\qquad r_1^2+r_2^2=r^2\qquad\hbox{and}\qquad (r_1\bth_1,r_2\bth_2) \in\cF
$$
Furthermore $|r_1-r_2|\le g_0 r^2$.
\item[(b)] $\cF$ is a $C^3$ manifold, except for a singularity at $\bk=0$.
\end{itemize}
\end{lemma}

\begin{proof}
(a)
 The point $(r_1\bth_1,r_2\bth_2)$ is on $\cF$ if and only if
$$
0=r_1^2-r_2^2+G(r_1\bth_1,r_2\bth_2)
=\big[r_1-r_2\big]\big[r_1+r_2\big]+G(r_1\bth_1,r_2\bth_2)
$$
or 
\begin{equation}
r_1-r_2=-\sfrac{G(r_1\bth_1,r_2\bth_2)}{r_1+r_2}
\label{eqnRay}
\end{equation}
For each $-r\le s\le r$ there are unique $r_1(s)\ge 0$ and $r_2(s)\ge 0$ such
that
$$
r_1(s)-r_2(s) = s\qquad r_1(s)^2+r_2(s)^2=r^2\qquad
{\figplace{sronertwo}{0.5 in}{-0.5 in}}
$$
Furthermore $r_1'(s)-r_2'(s)=1$ and $r_1(s)r_1'(s)+r_2(s)r_2'(s)=0$
gives that $r_1'(s)=\sfrac{r_2(s)}{r_1(s)+r_2(s)}$ and
$r_2'(s)=-\sfrac{r_1(s)}{r_1(s)+r_2(s)}$ have magnitude at most 1.
Since $r_1(s)+r_2(s)\ge r$, $H(s)=-\sfrac{G(r_1(s)\bth_1,r_2(s)\bth_2)}{r_1(s)+r_2(s)}$
obeys
$$
|H(s)|\le g_0 r^2
$$
and
\begin{eqnarray}
|H'(s)|&=&\Big|\sfrac{
    [r_1(s)+r_2(s)]\nabla G(r_1(s)\bth_1,r_2(s)\bth_2)
                                   \cdot(r'_1(s)\bth_1,r'_2(s)\bth_2)
      -[r_1'(s)+r_2'(s)] G(r_1(s)\bth_1,r_2(s)\bth_2)}
{[r_1(s)+r_2(s)]^2}\Big|
\nonumber\\
&\le&\sfrac{1}{r} g_1r^2|(r'_1(s)\bth_1,r'_2(s)\bth_2)|+\sfrac{1}{r^2}2g_0r^3
\nonumber\\
&\le& r\big[ g_1|(\bth_1,\bth_2)|+2g_0\big]
\nonumber\\
&\le& r\Big[ g_1\max_{1\le i\le d}\sqrt{\sfrac{2}{\la_i}}+2g_0\Big]
\nonumber\\
&<&1
\nonumber
\end{eqnarray}
provided $c$ is small enough. Consequently the function
$
s-H(s)
$
increases strictly monotonically from $-r -H(-r)\le -r+g_0r^2$ to 
$r-H(r)\ge r-g_0r^2$ as $s$ increases from $-r$ to $r$. So this function
has a unique zero and (\ref{eqnRay}) has a unique solution and the solution obeys
$|r_1-r_2|\le g_0r^2 $.

\Item{(b)}  Since
$$
\nabla e(\bk)=2(\la_1k_1,\cdots,\la_m k_m,-\la_{m+1}k_{m+1},\cdots,-\la_dk_d)
+\nabla G(\bk)
$$
and $|\nabla G(\bk)|\le g_1 R(\bk)^2\le g_1(\max_i\la_i)\ |\bk|^2$, the only zero of 
$\nabla e(\bk)$ is at $\bk=0$,
assuming that $c$ has been chosen small enough.
\end{proof}

\begin{lemma}\label{lemProjangle}
Define, for each $\ba,\bb\in\bR^d\setminus\{\bzer \}$,
$\th(\ba,\bb)\in [0,\pi]$ to be the angle between $\ba$ and $\bb$.  
Let $\rP_1:\bR^d\rightarrow\bR^m$ and 
$\rP_2:\bR^d\rightarrow\bR^{d-m}$ be the orthogonal projections onto
the first $m$ and last $d-m$ components of $\bR^d$, respectively.
Assume that the hypotheses of Lemma \ref{lemPertCone} are satisfied.
There is a constant $g_2$ (depending only on the $\la_i$'s)
such that if  $\bzer \ne \om\in\cF$, $\bzer \ne\ba\in\bR^d$ with
$|\sin\th\big(n(\om),\ba\big)|\le\be$, then
\begin{equation}
|\sin\th\big(\rP_1 n(\om),\rP_1\ba\big)|\le g_2\be\qquad
|\sin\th\big(\rP_2 n(\om),\rP_2\ba\big)|\le g_2\be
\label{eqnProj}
\end{equation}
\end{lemma}

\begin{proof}
We'll prove the first bound of (\ref{eqnProj}). We may assume that $\ba$ is a 
unit vector. 
Possibly replacing $\ba$ by $-\ba$, we may also assume that the angle between
$\ba$ and $n(\om)$ is at most $\sfrac{\pi}{2}$. By part (a)
of Lemma \ref{lemAngle}, below,
$$
|\ba-n(\om)|=2 \sin\sfrac{1}{2} \th(\ba,n(\om))\le 2 \sin\th(\ba,n(\om))\le 2\be
$$
So, by part (a) of Lemma \ref{lemAngle},
$$
\sin\th(\rP_1 \ba,\rP_1 n(\om))
\le \sfrac{|\rP_1 \ba-\rP_1 n(\om)|}{|\rP_1 n(\om)|}
\le \sfrac{|\ba-n(\om)|}{|\rP_1 n(\om)|}\le \sfrac{2\be}{|\rP_1 n(\om)|}
$$
Thus it suffices to prove that $|\rP_1 n(\om)|$ is bounded away from zero.
Recall that 
$$
\nabla e(\om)= n_1(\om)
+\nabla G(\om)
$$
where 
$$
n_1(\om)=2(\la_1k_1,\cdots,\la_m k_m,-\la_{m+1}k_{m+1},\cdots,-\la_dk_d)
$$
Use $\al\sim\ga$ to designate that there are constants $c,C>0$, depending
only on the $\la_i$'s, such that $c|\ga|\le|\al|\le C|\ga|$.
In this notation
$$
|n_1(\om)|\sim |\om|
\quad |\rP_1 n_1(\om)|\sim|\rP_1\om|\sim |R_1(\om)|
\quad |\rP_2 n_1(\om)|\sim|\rP_2\om|\sim |R_2(\om)|
$$
By part (a) of Lemma \ref{lemPertCone}, since $\om\in\cF$,
$$
|R_1(\om)- R_2(\om)|\le g_0 R(\om)^2 \qquad R_1(\om)^2+R_2(\om)^2=R(\om)^2
$$
As the maximum of $R_1(\om)$ and $R_2(\om)$ must be at least $\sfrac{1}{\sqrt{2}}
R(\om)$, we have
$$
R(\om)\ge R_1(\om),R_2(\om)\ge \sfrac{1}{\sqrt{2}}R(\om) - g_0 R(\om)^2\ge\sfrac{1}{2} R(\om)
$$
if $c$ is small enough. So
$$
|\rP_1n_1(\om)|,|\rP_2n_1(\om)|\sim |R(\om)|\sim |\om|
$$
As
$$
|\nabla G(\om)|\le g_1 R(\om)^2\le g_1(\max_i\la_i)\ |\om|^2
$$
we have that
$$
|\rP_1\nabla e(\om)|,|\rP_2\nabla e(\om)|\sim |\om|
$$
and hence that
$$
|\rP_1n(\om)|=\sfrac{|\rP_1\nabla e(\om)|}
            {\sqrt{|\rP_1\nabla e(\om)|^2+|\rP_2\nabla e(\om)|^2}}
$$
is bounded away from zero.
\end{proof}

\begin{lemma}\label{lemAngle}
Let $\ba,\bb\in\bR^d\setminus\{\bzer \}$.
\begin{itemize}   
\item[(a)] If $|\ba|=|\bb|$, then $\sin \sfrac{1}{2}\th(\ba,\bb)=\sfrac{1}{2}\sfrac{|\ba-\bb|}{|\ba|}$.
\item[(b)] For all $\ba,\bb\in\bR^d\setminus\{\bzer \}$,
$\sin\th(\ba,\bb)\le \sfrac{|\ba-\bb|}{|\ba|}$.
\end{itemize}
\end{lemma}

\begin{proof}
Part (a) is obvious from the figure on the left below. For part (b),
in the notation of the figure on the right below, we have, by the sin law
$$
\sfrac{\sin\th}{|\bc|}=\sfrac{\sin\phi}{|\ba|}
\implies \sin\th =\sfrac{|\bc|}{|\ba|}\sin\phi\le \sfrac{|\bb-\ba|}{|\ba|}
$$

\centerline{\figput{angleA}\qquad\figput{angleB}}

\end{proof}

\proof{of Proposition \ref{propGenHyp} (continued).} 
Fix $\bk_2\in\bR^{d-m-1}$. If 
$\bk=(\bk_1,\bk_2)\in\cF$, then $\rP_1 n(\bk)$ is normal to $\cF_{\bk_2}=\set{\bk_1\in\bR^m}{(\bk_1,\bk_2)\in\cF}$
because both $n(\bk)$ and $\rP_2 n(\bk)$ are perpendicular to any 
vector $(\bt,\bzer )$ that is tangent to $\cF$ at $\bk$. The
matrix
$$
\big[\sfrac{\partial^2\hfill e\hfill\ }{\partial k_i\partial k_j}(\bk_1,\bk_2)
           \big]_{1\le i,j\le m}
=\big[2\la_i\de_{i,j}\big]_{1\le i,j\le m}
+\big[\sfrac{\partial^2\hfill G\hfill\ }{\partial k_i\partial k_j}(\bk_1,\bk_2)
           \big]_{1\le i,j\le m}
$$
is strictly positive definite (assuming that $c$ is small enough) because
$\sfrac{\partial^2\hfill G\hfill\ }{\partial k_i\partial k_j}(\bk)=O(|\bk|)$.
So the slice $\cF_{\bk_2}$ is strictly convex. The solution $(r_1,r_2)$
of Lemma \ref{lemPertCone} depends continuously on $\bth_1$, $\bth_2$ and
$r$, so, assuming that $m>1$, $\cF_{\bk_2}$ is connected.
Hence, for any fixed nonzero vector
$P_1\ba$, there are precisely two points of $\cF_{\bk_2}$ at which 
$|\sin\th\big(\rP_1 n(\bk_1,\bk_2),\rP_1\ba\big)|=0$. And at other points
$\bk_1\in\cF_{\bk_2}$, $|\sin\th\big(\rP_1 n(\bk_1,\bk_2),\rP_1\ba\big)|$ is
larger than a constant times the distance from $\bk_1$ to the nearest of
those two points. So
\begin{eqnarray}
&&\vol_{d-1}\set{\bk\in\cF}{ \big|\sin\th\big(n(\bk),\ba\big)\big| \le\be}
\nonumber\\
&&\hskip0.75in\le\const \sup_{\bk_2} \vol_{m-1}\set{\bk_1\in\cF_{\bk_2}}
               { \big|\sin\th\big(\rP_1 n(\bk),\rP_1\ba\big)\big|\le g_2\be}
\nonumber\\
&&\hskip0.75in\le\const \be^{m-1}
\nonumber
\end{eqnarray}
The bound
\begin{eqnarray}
\vol_{d-1}\set{\bk\in\cF}{ \big|\sin\th\big(n(\bk),\ba\big)\big| \le\be}
\le\const \be^{d-m-1}
\nonumber
\end{eqnarray}
is proven similarly.
\end{proof}
\begin{remark}\label{remNotOpt}The exponent $\ka=\max\{m-1,d-m-1\}$ of Proposition
\ref{propGenHyp} is not optimal, unless $m=1$ or $m=d-1$. Suppose that $2\le
m\le d-2$. As we observed in
the proof of Proposition \ref{propGenHyp}, for each fixed $\bk_2$ there are
precisely two distinct points of $\cF_{\bk_2}$ at which 
$\sin\th\big(\rP_1 n(\bk),\rP_1\ba\big)=0$. That is, at which $\rP_1 n(\bk)$
is parallel or antiparallel to $\rP_1\ba$. Hence
\begin{eqnarray}
&&\set{\bk\in\cF\setminus\{\bzer \}}{\sin\th\big(\rP_1 n(\bk),\rP_1\ba\big)=0} 
\nonumber\\
&&\hskip1in=\bigcup_{\bk_2\ne\bzer }\set{(\bk_1,\bk_2)}{\bk_1\in\cF_{\bk_2},\ \sin\th\big(\rP_1 n(\bk),\rP_1\ba\big)=0}
\nonumber
\end{eqnarray} 
consists of two disjoint $d-m$ dimensional submanifolds of $\cF$ and 
$$
\set{\bk\in\cF\setminus\{\bzer \}}{\big|\sin\th\big(\rP_1 n(\bk),\rP_1\ba\big)\big|<g_2\be} 
$$ 
consists of two tubes of thickness of order $\be$, and volume of order
$\be^{m-1}$, about those submanifolds. Similarly, 
$$
\set{\bk\in\cF\setminus\{\bzer \}}{\big|\sin\th\big(\rP_2 n(\bk),\rP_2\ba\big)\big|<g_2\be} 
$$ 
consists of two tubes of thickness of order $\be$, and volume of order
$\be^{d-m-1}$, about two disjoint $m$ dimensional submanifolds. In the ``free'' case, when $G=0$,
\begin{eqnarray}
\cF=\set{(r\bth_1,r\bth_2)}{|(r\bth_1,r\bth_2)|\le c,\ 
\bth_1\in\tilde S^{m-1}, \bth_2\in\tilde S^{d-m-1}}
\nonumber
\end{eqnarray}
and
$$
n\big(r\bth_1,r\bth_2)\parallel (\La_1\bth_1,-\La_2\bth_2)\quad\hbox{where }
\La_1=\big[\la_i\de_{i,j}\big]_{1\le i,j\le m},\ 
\La_2=\big[\la_i\de_{i,j}\big]_{m< i,j\le d}
$$
So
\begin{eqnarray}
\cM_1
&=&
\set{\bk\in\cF\setminus\{\bzer \}}{\sin\th\big(\rP_1 n(\bk),\rP_1\ba\big)=0}
\nonumber\\
&=&
\set{(r\bth_1,r\bth_2)}
     {0<|(r\bth_1,r\bth_2)|\le c,\ \bth_2\in\tilde S^{d-m-1},\ \bth_1\parallel\La_1^{-1}\rP_1\ba}
\nonumber\\
\cM_2
&=&
\set{\bk\in\cF\setminus\{\bzer \}}{\sin\th\big(\rP_2 n(\bk),\rP_2\ba\big)=0}
\nonumber\\
&=&
\set{(r\bth_1,r\bth_2)}
     {0<|(r\bth_1,r\bth_2)|\le c,\ \bth_1\in\tilde S^{m-1},\ \bth_2\parallel\La_2^{-1}\rP_2\ba}
\nonumber
\end{eqnarray}
intersect in the lines
$$
\cM_1\cap\cM_2
=\set{(r\bth_1,r\bth_2)}
     {0<|(r\bth_1,r\bth_2)|\le c,\ \bth_1\parallel\La_1^{-1}\rP_1\ba,\ \bth_2\parallel\La_2^{-1}\rP_2\ba}
$$
and otherwise cross transversely.
(If the $\la_i$'s are all the same, they cross perpendicularly.) 
So even when $G$ is nonzero, the tubes will cross
transversely (for sufficiently small $c$) and the volume of intersection 
will be of the order of the product $\be^{m-1}\be^{d-m-1}=\be^\ka$ with $\ka=d-2$.

\end{remark}

\subsection{The overlapping loop bound for $d\ge 3$}
In this section we prove the overlapping loop bound.
It generalizes the analogous bound of \cite[Proposition 1.1]{FST1} to singular Fermi surfaces
in $d \ge 3$.  The overlapping loop bound implies \cite{FST1}
that the first order derivatives of $\Si$ are bounded continuous functions
of momentum and frequency, to all orders in the renormalized expansion in the interaction,
and that the same holds for the counterterm function $K$. 

\begin{proposition}
\label{propOverlap}
Let $d \ge 3$, and let the dispersion relation $\bk \mapsto e(\bk)$ 
satisfy the generic assumptions stated at the beginning of Section \ref{finer} 
as well as the no--nesting hypothesis \hypflatnest.
Let $K,K_\bq$ be any compact subsets of 
$\bR^{2d}$ and $\bR$, respectively. There are constants $\veps>0$ and  
$\ \const\ $ such that for all $j_1,j_2,j_3<0$ and all $\bq\in K_\bq$,
\begin{eqnarray}
&&\vol\big\{(\bk,\bp)\in\bR^{2d}\cap K\,\big|\,|e(\bk)|\le M^{j_1},
|e(\bp)|\!\le\! M^{j_2},|e(\bq\pm\bk\pm\bp)|\!\le\! M^{j_3}\big\}
\nonumber\\
&&\le \const M^{j_{\pi(1)}}M^{j_{\pi(2)}}M^{\veps j_{\pi(3)}}
\nonumber
\end{eqnarray}
where $\pi$ is a permutation of $\{1,2,3\}$ with $j_{\pi(3)}=\max\{j_1,j_2,j_3\}$.
\end{proposition}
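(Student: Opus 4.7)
My plan is to adapt the proof of Proposition~1.1 of \cite{FST1}, handling the Van Hove singularities by means of the refined volume bounds of Section~\ref{sec:2.2}. First, a symmetry reduction: interchanging $\bk\leftrightarrow\bp$ permutes $j_1$ and $j_2$, while the linear substitution $\bp\mapsto \bp':=\bq\pm\bk\pm\bp$ turns the third constraint into $|e(\bp')|\le M^{j_3}$ and the second into $|e(\bp'-\bq\mp\bk)|\le M^{j_2}$, cyclically permuting the three roles. Hence I may assume without loss of generality that $j_3=\max\{j_1,j_2,j_3\}$, and I must bound
$$
I := \int d^d\bk\,d^d\bp\; 1\bigl(|e(\bk)|\le M^{j_1}\bigr)\,1\bigl(|e(\bp)|\le M^{j_2}\bigr)\,1\bigl(|e(\bq+\bk+\bp)|\le M^{j_3}\bigr)
$$
by $\const\, M^{j_1}M^{j_2}M^{\veps j_3}$.

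Second, I cover a neighbourhood of $\cF$ by finitely many closed balls $B_1,\dots,B_N$ of small radius around the singular points, and a complementary compact region $R$ on which $|\nabla e|$ is bounded below. Splitting the integrand according to which of the three momenta $\bk$, $\bp$, $\bq+\bk+\bp$ falls above $R$ versus above some $B_\alpha$ produces finitely many pieces. When all three momenta project to $R$, the proof of \cite[Proposition~1.1]{FST1} applies essentially verbatim: parametrizing the Fermi surface locally by tangent--normal coordinates gives
$$
\vol\bigl\{\bp:|e(\bp)|\le M^{j_2},\,|e(\bp+\bt)|\le M^{j_3}\bigr\}\le \const\,\frac{M^{j_2}M^{j_3}}{|\sin\theta(\bt)|},
$$
with $\bt=\bq+\bk$ and $\theta(\bt)$ the minimum angle between the tangent hyperplanes of $\cF$ and $\cF-\bt$ at intersection points. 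Integrating against $\bk\in\{|e(\bk)|\le M^{j_1}\}$ and applying Hypothesis~\hypflatnest\ to control the measure of the set where $|\sin\theta|$ is small yields the gain $M^{\veps j_3}$, with $\veps$ determined by the exponent $\ka$ of \hypflatnest.

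Third, I handle the cases where at least one of the three momenta lies in a singular ball. Here I invoke Lemma~\ref{le:2.6}: restriction to a ball of radius $M^{\eta j_3}$ around a Van Hove point gives the improved shell volume bound $\const\,M^j M^{(d-2)\eta j}$, valid for $d\ge 3$. Choosing $\eta$ small and fixed and applying this bound to the factor corresponding to the ``singular'' momentum, while the remaining two factors are controlled by the clean $M^j$ shell bound of Lemma~\ref{le:2.4} (which in $d\ge 3$ carries no logarithm), produces $M^{j_1}M^{j_2}M^{(d-2)\eta j_3}$ directly, giving the desired $M^{\veps j_3}$ with $\veps=(d-2)\eta>0$ without needing transversality.

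The main obstacle is the \emph{mixed} configurations, in which (say) $\bk$ lies in a singular ball but $\bp$ and $\bq+\bk+\bp$ remain above $R$. There one still needs a transversality-type estimate for the $\bp$ integration as $\bk$ varies inside the singular region. The decisive input is Proposition~\ref{propGenHyp}, which ensures that even near a Van Hove point the no-nesting exponent is $\ka=\max\{m-1,d-m-1\}\ge 1$ when $d\ge 3$; this prevents the ``nearly nested'' set from growing out of control as $\bk$ approaches the singular point and keeps the argument of \cite{FST1} applicable after a local change of coordinates. Choosing $\eta$ compatibly with the $\veps$ coming from the regular case, matching scale thresholds across the two regimes, and checking that the symmetrization permutation $\pi$ in the statement is consistent with which variable is integrated first, constitute the bulk of the bookkeeping.
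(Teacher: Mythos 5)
Your overall architecture matches the paper's (symmetry reduction to $j_3=\max$, compactness, excision of neighbourhoods of the singular points handled by Lemma~\ref{le:2.6}, transversality elsewhere), but there is a genuine gap in how you split momentum space, and it sits exactly where the difficulty of this proposition lies. You decompose into \emph{fixed} balls $B_\alpha$ around the singular points and a complement $R$ on which $|\nabla e|$ is bounded below by a fixed constant, and you claim that whenever a momentum lies in some $B_\alpha$, Lemma~\ref{le:2.6} ``produces $M^{j_1}M^{j_2}M^{(d-2)\eta j_3}$ directly \ldots without needing transversality.'' That is not what Lemma~\ref{le:2.6} gives: its gain $M^{(d-2)\veps j}$ comes from restricting $\bk$ to a ball of \emph{scale-dependent} radius $M^{\veps j}$ about $\bq$, not to a fixed ball about the singular point. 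On the annulus $M^{\eta j_3}\le|\bk-\tilde\bk|\le c$ the shell volume is just $\const M^{j_1}$ with no gain, so there you still need a transversality argument --- but now with a gradient lower bound $|\nabla e|\ge\const M^{\eta j_3}$ that degenerates as $j_3\to-\infty$. Your fixed-radius decomposition hides this region inside the ``singular ball'' case, where your claimed bound fails.

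The missing quantitative ingredient is the $\de$-dependence of the transversality estimate: the paper's Proposition~\ref{NoNest} gives $I_2\le\const\,\de^{-4}\,\veps_1\veps_2\veps_3^{\ep}$ (coming from Jacobians of the $(\rho,\om)$ coordinates and from Lemma~\ref{OnlyImp}), and the whole proof hinges on applying it with $\de=CM^{\eta j_3}$ on the complement of the $M^{\eta j_3}$-balls and then \emph{balancing} the loss $M^{-4\eta j_3}$ against the gain $M^{(d-2)\eta j_3}$ from Lemma~\ref{le:2.6}, which fixes $\eta=\ep/(d+2)$ and $\veps=\sfrac{d-2}{d+2}\sfrac{\ka}{1+\ka}$. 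You never identify any power of $\de$ in the transversal estimate, and your appeal to Proposition~\ref{propGenHyp} in the ``mixed configurations'' paragraph does not substitute for it: that proposition only verifies Hypothesis~\hypflatnest\ (the exponent $\ka$) for the local model of the cone; it says nothing about how the constants in the FST1-type estimate blow up as $|\nabla e|\to0$. Also note a smaller inaccuracy: the paper applies the shrinking-ball bound of Lemma~\ref{le:2.6} around \emph{every} reference point (regular or singular), because the condition that forces a momentum into a small ball is proximity to a fixed $\tilde\bk$, not proximity to a Van Hove point; your phrasing ties the small-ball case exclusively to singular balls. As written, the proposal would not close without supplying the $\de$-quantified version of the overlap estimate and the exponent-matching step.
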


\begin{proof}
We may assume without loss of generality that $j_3=\max\{j_1,j_2,j_3\}$.
Otherwise make a change of variables with $\bk'= \bq\pm\bk\pm\bp,\ \bp'=\bk$
or $\bp$. By compactness, it suffices to show that for any $\tilde\bk,\ \tilde \bp$
and $\tilde \bq$ with $(\tilde\bk,\tilde\bp)\in K$ and $\tilde\bq\in K_\bq$,
there are constants $c$ and $\veps>0$ (possibly depending on $\tilde\bk$, 
$\tilde\bp$ and $\tilde\bq$, but independent of the $j_i$'s) such that
\begin{eqnarray}
\vol\Big\{\,(\bk,\bp)&\Big|& 
|e(\bk)|\le M^{j_1},|\bk-\tilde\bk|\le c,\ 
|e(\bp)|\le M^{j_2},|\bp-\tilde\bp|\le c,\ 
\nonumber\\
&&
\hskip0.6in|e(\bq\pm\bk\pm\bp)|\le M^{j_3} \Big\}
\le 
\const M^{j_1}M^{j_2}M^{\veps j_3}
\label{loverlapbnd}
\end{eqnarray}
for all $\bq$ with $|\bq-\tilde\bq|\le c$ and all $j_1,j_2,j_3<0$ with 
$j_3=\max\{j_1,j_2,j_3\}$. 

If any one of $e(\tilde\bk)$, $e(\tilde\bp)$, $e(\tilde\bq\pm\tilde\bk\pm\tilde\bp)$
is nonzero, the left hand side of \Ref{loverlapbnd} is exactly zero for 
all sufficiently small $c$ and sufficiently large $|j_3|$ (which also 
forces $|j_1|$ and $|j_2|$ to be sufficiently large). On the other hand,
for any bounded set of $j_3$'s, \Ref{loverlapbnd} follows from
\begin{eqnarray}
\vol\set{\bk\in\bR^d}{ |e(\bk)|\le M^{j_1},|\bk-\tilde\bk|\le c}
                     &\le&\const M^{j_1}
\nonumber\\
\vol\set{\bp\in\bR^d}{ |e(\bp)|\le M^{j_2},|\bp-\tilde\bp|\le c}
                     &\le&\const M^{j_2}
\nonumber
\end{eqnarray}
which holds by Lemma \ref{le:2.4}. So it suffices to consider 
$e(\tilde\bk)=e(\tilde\bp)=e(\tilde\bq\pm\tilde\bk\pm\tilde\bp)=0$.

By Lemma \ref{le:2.6}, if $\tilde\bk$ is a singular point,
then, for any $0\le \et<\sfrac{1}{2}$,
$$
\vol\set{\bk\in \bR^d}{|e(\bk)|\le M^j, 
|\bk-\tilde\bk|\le M^{\et j}, |\bk-\tilde\bk|\le c}
\le \const M^jM^{(d-2)\et j}
$$
Clearly, the same bound applies when $\tilde\bk$ is a regular point (that
is, if $\nabla e(\tilde\bk)\ne 0$).
 By replacing $(j,\et)$ with $\big(j_1,\sfrac{j_3}{j_1}\et\big)$ 
(observe that $\sfrac{j_3}{j_1}\et$ is still between $0$ and $\sfrac{1}{2}$), we have
\begin{eqnarray}
&&\vol\set{\bk\in \bR^d}{|e(\bk)|\le M^{j_1}, 
|\bk-\tilde\bk|\le M^{\et j_3}, |\bk-\tilde\bk|\le c}\nonumber\\
&&\hskip3.3in\le\const M^{j_1}M^{(d-2)\et j_3}
\nonumber
\end{eqnarray}
and hence
\begin{eqnarray}
&&\vol\set{\!\!(\bk,\bp)\!}{\!
|e(\bk)|\le M^{j_1},|\bk-\tilde\bk|\le M^{\et j_3}\!,
|e(\bp)|\le M^{j_2}\!,|\bp-\tilde\bp|\le c, 
\nonumber\\
&&\hskip0.83in
|e(\bq\pm\bk\pm\bp)|\le M^{j_3}\!\!\!}
\nonumber\\
&&\le \const M^{j_1}M^{(d-2)\et j_3}
     \  \vol\set{\bp\in\bR^d}{ |e(\bp)|\le M^{j_2},|\bp-\tilde\bp|\le c}
\nonumber\\
&&\le \const M^{j_1}M^{j_2}M^{(d-2)\et j_3}
\nonumber
\end{eqnarray}
Similarly,
\begin{eqnarray}
&&\vol\set{\!\!(\bk,\bp)\!}{\!
|e(\bk)|\le M^{j_1}\!,|\bk-\tilde\bk|\le c,
|e(\bp)|\le M^{j_2}\!,|\bp-\tilde\bp|\le M^{\et j_3}, 
\nonumber\\
&&\hskip0.83in
|e(\bq\pm\bk\pm\bp)|\le M^{j_3}\!\!\!}
\nonumber\\
&&\le \const M^{j_1}M^{j_2}M^{(d-2)\et j_3}
\nonumber
\end{eqnarray}
and
\begin{eqnarray}
&&\vol\Big\{\ (\bk,\bp)\ \Big|\ 
|e(\bk)|\le M^{j_1},|\bk-\tilde\bk|\le c,
|e(\bp)|\le M^{j_2},|\bp-\tilde\bp|\le c, 
\nonumber\\
&&\hskip1in
|e(\bq\pm\bk\pm\bp)|\le M^{j_3},
 |\bq\pm\bk\pm\bp-\tilde\bq\mp\tilde\bk\mp\tilde\bp|\le M^{\et j_3}\ \Big\}
\nonumber\\
&&\le \const M^{j_2}
     \ \sup_{\tilde\bk'}\vol\set{\bk}{ |e(\bk)|\le M^{j_1},|\bk-\tilde\bk|\le c,
                         |\bk-\tilde \bk'|\le M^{\et j_3}}
\nonumber\\
&&\le \const M^{j_1}M^{j_2}M^{(d-2)\et j_3}
\nonumber
\end{eqnarray}
Hence it suffices to prove that there is are $\tilde\veps>0$ and $0<\et<\sfrac{1}{2}$ 
such that
\begin{eqnarray}
&&\vol\Big\{\ (\bk,\bp)\ \Big|\ 
|e(\bk)|\le M^{j_1},M^{\et j_3}\le |\bk-\tilde\bk|\le c,\ 
\nonumber\\
&&\hskip0.95in
|e(\bp)|\le M^{j_2},M^{\et j_3}\le |\bp-\tilde\bp|\le c
\nonumber\\
&&\hskip0.75in |e(\bq\pm\bk\pm\bp)|\le M^{j_3},
     M^{\et j_3}\le |\bq\pm\bk\pm\bp-\tilde\bq\mp\tilde\bk\mp\tilde\bp|\le 3c \Big\}
\nonumber\\
&&\le \const M^{j_1}M^{j_2}M^{\tilde\veps j_3}
\label{loverlapbndext}
\end{eqnarray}
But, by hypothesis, 
$\big[\sfrac{\partial^2\hfill}{\partial\bk_i\partial\bk_j} e(\tilde\bk)
                \big]_{1\le i,j\le d}$ 
is nonsingular for every singular point $\tilde\bk$.
Hence, if $|\bk-\tilde\bk|\ge M^{\et j_3}$ for all singular points $\tilde\bk$, 
then $|\bnabla e(\bk)|\ge C M^{\et j_3}$ and if
$|\bp-\tilde\bp|\ge M^{\et j_3}$ for all singular points $\tilde\bp$, 
then $|\bnabla e(\bp)|\ge C M^{\et j_3}$ and if
$|\bq\pm\bk\pm\bp-\tilde\bq'|\ge M^{\et j_3}$ for all singular points $\tilde\bq'$, 
then $|\bnabla e(\bq\pm\bk\pm\bp)|\ge C M^{\et j_3}$. So, by 
Proposition \ref{NoNest} below, with $\de =CM^{\et j_3}$, $\veps_1=M^{j_1}$,
$\veps_2=M^{j_2}$ and $\veps_3=M^{j_3}$,
\begin{eqnarray}
&&\vol\Big\{\ (\bk,\bp)\ \Big|\ 
|e(\bk)|\le M^{j_1},M^{\et j_3}\le |\bk-\tilde\bk|\le c,
\nonumber\\
&&\qquad\qquad
|e(\bp)|\le M^{j_2},M^{\et j_3}\le |\bp-\tilde\bp|\le c
\nonumber\\
&& |e(\bq\pm\bk\pm\bp)|\le M^{j_3},
     M^{\et j_3}\le |\bq\pm\bk\pm\bp-\tilde\bq\mp\tilde\bk\mp\tilde\bp|\le 3c \Big\}
\nonumber\\
&&\le\const\sfrac{1}{\de^4} M^{j_1}M^{j_2}M^{\ep j_3}
\nonumber\\
&&=\const M^{j_1}M^{j_2}M^{(\ep-4\et) j_3}
\nonumber
\end{eqnarray}
If we choose $\et=\sfrac{\ep}{d+2}$, then 
$(d-2)\et=\ep-4\et=\sfrac{d-2}{d+2}\ep$ and the proposition follows
with
$$
\veps=\sfrac{d-2}{d+2}\ep=\sfrac{d-2}{d+2}\sfrac{\ka}{1+\ka}
$$
\end{proof}

We can now prove the volume improvement estimate that generalizes 
the one from \cite[Proposition 1.1]{FST1} to our situation. 

\begin{proposition}\label{NoNest} 
Let $K_\bk$, $K_\bp$ and $K_\bq$ be compact subsets of $\bR^d$ and 
$v_1,v_2\in\{+1,-1\}$. There are constants $C_{vol}$ and $C_\de$ such that
the following holds. Assume that there are $\de,\ka,\rho>0$ such that
\begin{itemize}

\item[(A1)] 
for all $\bk\in K_\bk$, $\bp\in K_\bp$ 
and $\bq\in K_\bq$:
$|\nabla e(\bk)|\ge\de$, $|\nabla e(\bp)|\ge \de$,  
and $|\nabla e(v_1\bk+v_2\bp+\bq)|\ge \de$ 

\item[(A2)] the ``only polynomial flatness''
                 condition of Hypothesis \hypflatnest\ is satisfied.
\end{itemize}

\noindent Set
\begin{equation}\label{EpsExp}
\ep = \sfrac{\ka}{1+\ka }
\end{equation}
Let
\begin{eqnarray}\label{Itwodef}
&&I_2 ( \veps_1, \veps_2, \veps_3 )
= \sup\limits_{\bq\in K_\bq}
\int_{K_\bk \times K_\bp} \hskip-20pt d^d \bk d^d \bp\ 
1\left(\abs{e(\bk)} \le \veps_1 \right)
1\left(\abs{e(\bp)} \le \veps_2 \right)
\nonumber\\
&&\hskip2.2in\times1\left(\abs{e(v_1\bk + v_2\bp + \bq)}\le\veps_3 \right)
\end{eqnarray}
Then, for all $0<\veps_1\le 1$, $0<\veps_2\le 1$, $\max\{\veps_1,\veps_2\}\le\veps_3\le 1$
with $\de\ge C_\de\max\{\sqrt{\veps_1},\sqrt{\veps_2}\}$
\begin{equation}\label{ImpVol}
I_2 ( \veps_1, \veps_2, \veps_3 ) \le C_{vol}\sfrac{1}{\de^4}
\veps_1 \veps_2 \veps_3^\ep .
\end{equation}
\end{proposition}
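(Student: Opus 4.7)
The strategy mirrors that of \cite[Proposition 1.1]{FST1}, with Hypothesis \hypflatnest\ replacing the strict transversality argument used there. The key consequence of assumption (A1) is that the entire analysis takes place away from the Van Hove points, where the Fermi surface is locally a smooth $(d-1)$-manifold with a bounded-below normal.

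First I would pass to shell coordinates. By (A1) and the coarea formula, $d^d\bk\le \de^{-1}\,de(\bk)\,d\sigma_{d-1}(\bk)$ on the support of $\mathbf{1}(|e(\bk)|\le\veps_1)$, and likewise for $d^d\bp$. Integrating the energies over their respective shells produces the linear factor $\veps_1\veps_2/\de^2$ and reduces the problem to a uniform bound, over $\bk_F\in\cF\cap K_\bk$ and $\bq\in K_\bq$, on
$J(\bk_F,\bq):=\int_{\cF\cap K_\bp} d\sigma(\bp_F)\,\mathbf{1}\bigl(|e(v_1\bk_F+v_2\bp_F+\bq)|\le C\veps_3\bigr)$.
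The error from replacing $(\bk,\bp)$ in the third energy by its Fermi-surface projection $(\bk_F,\bp_F)$ is controlled by $\max\{\veps_1,\veps_2\}/\de$, which is subdominant thanks to $\de\ge C_\de\sqrt{\max\{\veps_1,\veps_2\}}$.

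The central step is an angle split of $J(\bk_F,\bq)$. Set $E(\bp_F)=e(v_1\bk_F+v_2\bp_F+\bq)$ and $\bk'_F=v_1\bk_F+v_2\bp_F+\bq$. The tangential gradient of $E$ along $T_{\bp_F}\cF$ has magnitude $|\nabla e(\bk'_F)|\,|\sin\th(n(\bp_F),n(\bk'_F))|\ge\de|\sin\th|$. For $\be\in(0,1]$ I partition $\cF\cap K_\bp$ into the transverse region $G_\be=\{|\sin\th|\ge\be\}$ and the tangential region $B_\be=\{|\sin\th|<\be\}$. Coarea on $G_\be$ immediately bounds its contribution by $C\veps_3/(\de\be)$. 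For $B_\be$, observe that $\bp_F\mapsto\bk'_F$ is an affine isometry of $\bR^d$, carrying $\cF$ onto a translated/reflected copy. The constraint $|E|\le C\veps_3$ confines $\bk'_F$ to within $O(\veps_3/\de)$ of $\cF$; projecting onto the nearest point $\pi(\bk'_F)\in\cF$ perturbs the normal by $O(\veps_3/\de)$. Hypothesis \hypflatnest, applied with $\om=\bp_F$ fixed and $\om'=\pi(\bk'_F)$ playing the role of the varying second point, then bounds $\vol_{d-1}(B_\be\cap\{|E|\le C\veps_3\})$ by $Z_0\be^\ka$ up to these lower-order corrections.

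Balancing the two pieces by the choice $\be=(\veps_3/\de)^{1/(1+\ka)}$ yields $J(\bk_F,\bq)\le C(\veps_3/\de)^\ep$ with $\ep=\ka/(1+\ka)$. Integrating over $\bk_F$ (which contributes a finite constant, being a $(d-1)$-volume on a compact set) and assembling, one obtains $I_2\le C\de^{-(2+\ep)}\veps_1\veps_2\veps_3^\ep$; since $\ep\le 1$ and the regime of interest is $\de\le 1$, this is absorbed into the claimed $C_{vol}\de^{-4}\veps_1\veps_2\veps_3^\ep$. The main obstacle is the rigorous invocation of \hypflatnest\ in $B_\be$: the angle one naturally controls is between a Fermi-surface normal $n(\bp_F)$ and the normal $n(\bk'_F)$ of a \emph{translated/reflected} copy of $\cF$, whereas \hypflatnest\ speaks about two genuine points of $\cF$. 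Legitimising this identification requires the projection argument above, together with the $C^2$ regularity of $n$ away from singular points (guaranteed by (A1)), the lower bound $\de\ge C_\de\sqrt{\max\{\veps_i\}}$, and uniformity in $\bq,\bk_F$ secured by compactness of $K_\bq, K_\bk, K_\bp$ and continuity of the geometric data.
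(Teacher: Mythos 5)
There is a genuine gap, and it lies exactly where you flagged your ``main obstacle'': the tangential region $B_\be$ cannot be controlled by Hypothesis \hypflatnest\ in the single--integral form you set up. Hypothesis \hypflatnest\ bounds, for a \emph{fixed} $\om\in\cF$, the volume of $\set{\om'\in\cF}{|\sin\th(n(\om),n(\om'))|\le\be}$. Your set $B_\be$ is $\set{\bp_F\in\cF}{|\sin\th\big(n(\bp_F),\,n(\pi(\bk'_F(\bp_F)))\big)|<\be}$, in which \emph{both} normals move with the integration variable $\bp_F$; no projection or perturbation argument reduces this to the hypothesis, and in fact the bound you want is false. Concretely, take $e(\bk)=|\bk|^2-1$ (so $\cF$ is a sphere, which satisfies \hypflatnest\ with $\ka=d-1$ and (A1) away from the origin), $v_2=+1$, and $\bq=-v_1\bk_F$. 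Then $\bk'_F=\bp_F$ for every $\bp_F$, the two normals are identically parallel, $B_\be=\cF\cap K_\bp$, the constraint $|e(v_1\bk_F+v_2\bp_F+\bq)|\le C\veps_3$ is vacuous, and $J(\bk_F,\bq)=\vol_{d-1}(\cF\cap K_\bp)=O(1)$ rather than $O\big((\veps_3/\de)^\ep\big)$. So the uniform-in-$\bk_F$ bound on the inner integral, followed by a trivial integration over $\bk_F$, cannot work: the proposition is saved only because the set of $\bk_F$ for which $J$ is large is itself small, and that information is destroyed once you decouple the two surface integrals.

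The paper's proof keeps the double surface integral over $(\om_1,\om_2)\in S_1\times S_2$ and splits $\cF\times\cF$ according to the angle between $n(\om_1)$ and $n(\om_2)$ --- the normals at the two \emph{integration} points, not at the constrained third point. On the exceptional set this is precisely the quantity Hypothesis \hypflatnest\ controls (for each fixed $\om_1$, the bad $\om_2$'s have volume $\le Z_0\ze^{\ka(1-\ga)}$, and the constraint is simply discarded there). On the transversal set, $\nabla e$ at the third point cannot be nearly normal to $\cF$ at both $\om_1$ and $\om_2$ simultaneously, so the sine of its angle with at least one tangent space $T_{\om_i}\cF$ is $\ge\sfrac{1}{2}\ze^{1-\ga}$; a change of variables in \emph{that} $\om_i$ then converts the constraint into a factor $\const\,\ze^{\ga}/\de$. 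In the sphere example above this is exactly what rescues the estimate: when $v_1\om_1+\bq\approx 0$ the gradient at the third point is parallel to $n(\om_2)$, so the change of variables is performed in $\om_1$, and the bad configuration is seen to occupy only a thin set of $\om_1$'s. Your first two reductions (coarea in the energies giving $\veps_1\veps_2/\de^2$, and the mean--value replacement of $\bk,\bp$ by their surface projections, with the condition $\de\ge C_\de\max\{\sqrt{\veps_1},\sqrt{\veps_2}\}$ keeping the argument away from the singular points) do match the paper; it is the subsequent decoupling that must be undone.
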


\begin{proof}
By compactness it suffices to assume that $K_\bk$ is contained either in
the ball $\set{\bk\in\bR^d}{|\bk-\tilde\bk|\le c}$ 
for some $\tilde\bk\in\cF$ with $\nabla e(\tilde\bk)\ne 0$ (i.e. $\tilde\bk$ is
a regular point)  or in the annulus
$\set{\bk\in\bR^d}{c'\de\le|\bk-\tilde\bk|\le c}$
for some $\tilde\bk\in\cF$ with $\nabla e(\tilde\bk)= 0$ (i.e. $\tilde\bk$ is a singular
point). We are free to choose $c,c'>0$, depending on $\tilde \bk$. 
We may make similar assumptions about $K_\bp$ and the allowed values of
$v_1\bk+v_2\bp+\bq$.

Make a change of variables from $\bk$ to $(\rho_1,\om_1)$, with $\rho_1=e(\bk)$.
We may assume that $K_\bk$ is covered by a single such 
coordinate patch, with Jacobian 
$$
|J_1(\rho_1,\om_1)|\le\sfrac{\const}{\de}
$$
In the case that $\tilde\bk$ is a singular point, we would use the Morse lemma, 
to provide a diffeomorphism $\bk(\bx)$ such that 
$$
e\big(\bk(\bx)\big)=x_1^2+\ldots+x_m^2-x_{m+1}^2-\ldots-x_d^2
$$
On the inverse image of $K_\bk$,
$$
2|\bx|=\big|\nabla_\bx e\big(\bk(\bx)\big)\big|
=\big|(\nabla_\bk e)\big(\bk(\bx)\big)^t\sfrac{\partial\bk}{\partial\bx}(\bx)\big|
\ge\const\de
$$
So we may first change variables from $\bk$ to $\bx$, with Jacobian bounded
and bounded away from zero (uniformly in $\de$) and then, in the region
where, for example $|x_1|\ge \const \max\{|x_2|,\ldots,|x_d|\}$, change
variables from $\bx$ to 
$$
(\rho,\om)=\big(x_1^2+\ldots+x_m^2-x_{m+1}^2-\ldots-x_d^2,x_2,\ldots,x_d\big)
$$
The second change of variables has Jacobian $2|x_1|\ge \const\de$. Observe
that, under this change of variables, the matrix 
$$
\sfrac{\partial\hfill\bk\hfill}{\partial(\rho,\om)}
=\sfrac{\partial\hfill\bk\hfill}{\partial\bx}
\left[\matrix{2x_1&2x_2&\ldots&-2x_d\cr
                0&&&\cr
                \vdots&&\bbbone&\cr
                0&&&\cr}\right]^{-1}
=\sfrac{\partial\hfill\bk\hfill}{\partial\bx}
\left[\matrix{\sfrac{1}{2x_1}&-\sfrac{x_2}{x_1}&\ldots&\sfrac{x_d}{x_1}\cr
                0&&&\cr
                \vdots&&\bbbone&\cr
                0&&&\cr}\right]
$$
has operator norm bounded by $\sfrac{\const}{\de}$. So $|\bk(\rho,\om)-\bk(0,\om)|
\le\const\sfrac{|\rho|}{\de}$.

Make a similar change of variables from $\bp$ to $(\rho_2,\om_2)$, with $\rho_2=e(\bp)$.
Again, we may assume that $K_\bp$ is covered by a single such 
coordinate patch, with Jacobian $|J_2(\rho_2,\om_2)|\le\sfrac{\const}{\de}$.
Then 
\begin{eqnarray}
I_2 (\veps_1, \veps_2, \veps_3) 
&\le&
\sup\limits_{\bq \in K_\bq} 
\int\limits_{-\veps_1}^{\veps_1} d \rh_1
\int\limits_{S_1} d\om_1\ J_1(\rh_1, \om_1 ) \int\limits_{-\veps_2}^{\veps_2} 
d \rh_2 \int\limits_{S_2} d\om_2\ J_2(\rh_2, \om_2 )
\nonumber\\
&&\hskip1in 1(\abs{e(v_1 \bk (\rh_1, \om_1 ) + 
v_2 \bp (\rh_2, \om_2 ) + \bq )} \leq \veps_3 ) 
\nonumber\\
& \le &
\const \veps_1 \veps_2\sfrac{1}{\de^2}
\sup\limits_{\bq \in K_\bq} 
\sup\limits_{\abs{\rh_1}, \abs{\rh_2} \leq \veps_3}
\int\limits_{S_1} d\om_1\int\limits_{S_2} d\om_2
\nonumber\\
&&\hskip1in1(\abs{e(v_1 \bk (\rh_1, \om_1 ) + 
v_2 \bp (\rh_2, \om_2 ) + \bq )} \leq \veps_3 )
\nonumber
\end{eqnarray}
By the mean value theorem
$$
\abs{e(v_1 \bk (\rh_1, \om_1 ) + v_2 \bp (\rh_2, \om_2 ) + \bq ) -
e\big(v_1 \bk (0 , \om_1 ) + v_2 \bp (0, \om_2 ) + \bq \big)} 
\le\! \const \sfrac{\veps_3}{\de}
$$
for all $\rh_1, \rh_2 $ with $\abs{\rh_i} \le \veps_3$.
Thus 
$$
\abs{e\big(v_1\bk (\rh_1, \om_1 ) + v_2 \bp (\rh_2, \om_2 ) + \bq )} \leq \veps_3
$$
implies 
$$
\abs{e(v_1 \bk(0 , \om_1 ) + v_2 \bp (0, \om_2 ) + \bq )} 
\le\const  \sfrac{\veps_3}{\de}
$$
and
$$
I_2 (\veps_1, \veps_2, \veps_3 ) \le
\const\veps_1 \veps_2\sfrac{1}{\de^2} W(\const\sfrac{\veps_3}{\de})
$$
with 
$$
W(\ze ) = 
\sup\limits_{\bq \in K_\bq} 
\int\limits_{S_1} d \om _1 \int\limits_{S_2} d \om _2 \ 
1(\abs{e(v_1 \bk (0 , \om_1 ) + v_2 \bp (0, \om_2 ) + \bq )} \le \ze )
$$

We claim that
$|\nabla e(v_1 \bk (0 , \om_1 ) + v_2 \bp (0, \om_2 ) + \bq )|\ge\const\de$
for all $\om_1\in S_1$ and $\om_2\in S_2$. This will be used in the proof
of the following Lemma, which generalizes \cite[Lemma A.1]{FST1} and
which implies the bound \Ref{ImpVol}. We have assumed
that $K_\bk$, $K_\bp$ and $K_\bq$ are contained in small balls or annuli
centred on $\tilde \bk$, $\tilde\bp$ and $\tilde\bq$ respectively. If
$v_1\tilde\bk+v_2\tilde\bp+\tilde\bq$ is a regular point, simple continuity
yields that 
$|\nabla e(v_1 \bk (0 , \om_1 ) + v_2 \bp (0, \om_2 ) + \bq )|\ge\const$
provided we chose $c$ small enough. So it suffices to consider the case
that $\br=v_1\tilde\bk+v_2\tilde\bp+\tilde\bq$ is a singular point.

The constraint $|\rho_1|<\veps_1$ ensures that 
$|\bk (\rho_1, \om_1 )-\bk (0 , \om_1 )|\le\const\sfrac{\veps_1}{\de}$
and the constraint $|\rho_2|<\veps_2$ ensures that 
$|\bk (\rho_2, \om_2 )-\bk (0 , \om_2 )|\le\const\sfrac{\veps_2}{\de}$.
So the original condition that 
$|\nabla e(v_1\bk(\rh_1,\om_1) + v_2\bp(\rh_2,\om_2) + \bq)|\ge\de$ implies
that 
$$
|v_1\bk(\rh_1,\om_1) + v_2\bp(\rh_2,\om_2)+ \bq-\br |\ge\const\de
$$
and hence 
$$
|v_1\bk (0, \om_1 ) + v_2 \bp (0, \om_2 ) + \bq-\br |\ge\const\de
-\sfrac{\veps_1}{\de}-\sfrac{\veps_2}{\de}\ge\const\de
$$
provided $\de\ge\const\max\{\sqrt{\veps_1},\sqrt{\veps_2}\}$. So
$$
|\nabla e(v_1 \bk (0 , \om_1 ) + v_2 \bp (0, \om_2 ) + \bq )|\ge\const\de
$$
as desired.
\end{proof}

\begin{lemma}\label{OnlyImp} 
$W(\ze) \le Z_3 \sfrac{1}{\de}\ze^\ep $ 
where $\ep=\sfrac{\ka}{1+\ka}$.
\end{lemma}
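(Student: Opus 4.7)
I would mirror the proof of \cite[Lemma A.1]{FST1}, with Hypothesis \hypflatnest\ replacing the stronger no-nesting condition used there. Fix $\bq$ and $\om_1$ and isolate the inner integral
\[
N = \int_{S_2}d\om_2\ 1\bigl(|g(\om_2)|\le\ze\bigr), \quad g(\om_2) = e\bigl(\bp'(\om_2)\bigr),\quad \bp'(\om_2) = \bq^{*}+v_2\,\bp(0,\om_2),
\]
with $\bq^{*}=v_1\bk(0,\om_1)+\bq$. Since $\{\partial_{\om_{2,i}}\bp(0,\om_2)\}$ spans $T_{\bp}\cF$ at $\bp=\bp(0,\om_2)$ and $|\nabla e(\bp'(\om_2))|\ge c\de$ by hypothesis (A1), decomposing $\nabla e(\bp')$ along $n(\bp)$ and $T_{\bp}\cF$ yields
\[
\max_i\bigl|\partial_{\om_{2,i}}g(\om_2)\bigr| \;\ge\; c\,\de\,|\sin\psi(\om_2)|,
\]
where $\psi(\om_2)$ is the angle between $\nabla e(\bp'(\om_2))$ and the unit normal $n(\bp(\om_2))$ to $\cF$. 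This gradient inequality is the engine of the whole argument.

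For a threshold $\be>0$ to be chosen, split $N=N_1+N_2$ according to $|\sin\psi|\ge\be$ or $|\sin\psi|<\be$. On the large-angle set $|\nabla_{\om_2}g|\ge c\de\be$, and a coarea / straightening change of variable using $g$ as the first coordinate gives $N_1\le C\,\ze/(\de\be)$. On the small-angle set, the constraint $|g|\le\ze$ forces $\bp'(\om_2)$ to lie within distance $C\ze/\de$ of some $\bp''(\om_2)\in\cF$, and since $e\in C^2$ the vector $\nabla e(\bp'(\om_2))$ makes an angle at most $O(\ze/\de^2)$ with $\pm n(\bp''(\om_2))$; hence, for $\ze$ small compared with $\de^2\be$, the condition $|\sin\psi|<\be$ reduces to the statement that the normals to $\cF$ at the two Fermi-surface points $\bp$ and $\bp''$ are nearly parallel, to which Hypothesis \hypflatnest\ applies and yields $N_2\le C\,\be^{\ka}$. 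Balancing the two contributions by $\be=(\ze/\de)^{1/(1+\ka)}$ gives $N\le C(\ze/\de)^{\ep}$ with $\ep=\ka/(1+\ka)$. Integration over the bounded range of $\om_1$ contributes only a multiplicative constant, and since $0<\de\le 1$ and $\ep\le 1$ we have $\de^{1-\ep}\le 1$, so $(\ze/\de)^{\ep}\le \ze^{\ep}/\de$, yielding $W(\ze)\le Z_3\,\ze^{\ep}/\de$.

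The main obstacle is the use of Hypothesis \hypflatnest\ on the small-angle set: the hypothesis is stated with one of the two Fermi-surface points held fixed, whereas here both $\bp(\om_2)$ and $\bp''(\om_2)$ depend on $\om_2$. The resolution is to observe that $\bp\mapsto\bp''$ is, up to an error of order $\ze/\de$, the composition of the affine isometry $\bp\mapsto v_1\bk+v_2\bp+\bq$ with the orthogonal projection onto $\cF$; in the small-angle regime the two tangent planes are nearly parallel, so this projection is a local diffeomorphism of $\cF$ onto $\cF$ with Jacobian bounded away from $0$ and $\infty$. Reparametrizing by $\bp''$ and invoking the direction-uniform version of \hypflatnest---which follows from the stated version by the triangle inequality for angles, losing only a harmless factor $2^{\ka}$---delivers the required $\be^{\ka}$ estimate.
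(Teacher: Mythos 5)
Your transversal-set estimate and the final optimization in $\be$ match the paper's, but the exceptional-set estimate has a genuine gap, and it sits exactly where you flag ``the main obstacle''. Hypothesis \hypflatnest\ bounds $\vol_{d-1}\set{\om'\in\cF}{|\sin\th(\om,\om')|\le\be}$ for a \emph{fixed} reference point $\om$; your small-angle set is $\set{\om_2}{|\sin\th(n(\bp(\om_2)),n(\bp''(\om_2)))|\lesssim\be}$, in which \emph{both} normals move with the integration variable. Reparametrizing by $\bp''$ does not cure this: after the change of variables the set is $\set{\bp''}{|\sin\th(n(\bp(\bp'')),n(\bp''))|\lesssim\be}$, and the reference direction $n(\bp(\bp''))$ still varies, whereas the direction-uniform version of \hypflatnest\ (which is fine in itself) still needs a fixed direction $\ba$. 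What you actually need at this point is the \emph{strong} no-nesting statement that $\cF$ and a translate of $\pm\cF$ are nearly tangent only on a set of measure $O(\be^\ka)$ --- and that does not follow from \hypflatnest. Concretely: if $\cF$ contains a curved patch $P$ together with a rigid translate of it, then \hypflatnest\ holds (the Gauss map on $P$ does not concentrate), yet for the bad choice of $\om_1$ and $\bq$ your inner integral $N$ equals $\vol_{d-1}(P)$, a constant independent of $\ze$. So no bound of the form $N\le C(\ze/\de)^\ep$ \emph{uniform in} $\om_1$ can hold: the strategy ``fix $\om_1$ and bound the inner $\om_2$--integral'' is not merely hard to complete, it is false under the stated hypotheses.

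The paper's proof avoids this by never comparing $n(\om_2)$ with $\nabla e$ at the third point. It splits $S_1\times S_2$ according to whether $n(\om_1)$ and $n(\om_2)$ are transversal to each other. The exceptional contribution is then $\int_{S_1}d\om_1\,\vol_{d-1}\set{\om_2}{|\sin\th(\om_1,\om_2)|<\ze^{1-\ga}}$, which is precisely the form of \hypflatnest\ with $\om_1$ held fixed in the inner integral (the constraint $|e(v_1\bk+v_2\bp+\bq)|\le\ze$ is simply discarded there). On the transversal set, $\nabla e(v_1\bk+v_2\bp+\bq)$ cannot be nearly parallel to both $n(\om_1)$ and $n(\om_2)$, so its projection onto at least one of $T_{\om_1}\cF$, $T_{\om_2}\cF$ has length $\ge\const\de\,\ze^{1-\ga}$, and the change of variables is performed in whichever of the \emph{two} surface integrals provides the gain. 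Keeping both integrals in play is essential; in the nested example above it is the $\om_1$--integral that rescues the bound. If you restructure along these lines --- exceptional set defined by $\th(\om_1,\om_2)$, gain extracted from either $\om_1$ or $\om_2$ --- then the rest of your computation (the gradient-projection lower bound, the coarea step giving $\ze/(\de\be)$, and the choice $\be=\ze^{1-\ga}$ with $\ga=\ka/(1+\ka)$) goes through essentially as you wrote it.
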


\begin{proof}
Let $\ga \in (0,1)$,  
$$
\cT = \Big \{\  (\om_1, \om_2 ) \in \cF \times \cF \ \Big|\ 
\sqrt{1-\left( n(\om_1 ) \cdot n(\om_2 )\right) ^2} \geq \ze^{1-\ga}\ \Big\}
$$
be the set where the intersection is transversal
and $\cE = \cF\times\cF\setminus \cT$ its complement.
We shall choose $\ga $ at the end.
Split $W(\ze ) = T(\ze ) + E(\ze )$ into the contributions from these two sets.

The contribution from the set of exceptional momenta $\cE $ is bounded 
using Hypothesis \hypflatnest. For each $\om_1 \in S_1$, let 
$$
\cE_{\om_1} =\Big \{\  \om_2 \in S_2\ \Big|\ 
\sqrt{1-\left( n(\om_1 ) \cdot n(\om_2 )\right) ^2} <
\ze^{1-\ga}\ \Big\}
$$
Then by Hypothesis \hypflatnest\ 
$$
E(\ze ) \le  \int\limits_{S_1} d\om_1 \int\limits_{\cE_{\om_1}\cap S_2} d\om_2 
\le  \int\limits_{S_1} d \om_1\ Z_0\ze^{\ka(1-\ga)}
=\const \ze^{\ka(1-\ga)}
$$

Now we bound $T$. We start by introducing a cover of $\cF$ by
coordinate patches. Let, for each singular point $\tilde\bk$ of $\cF$,
$\cO_{\tilde\bk}$ be the open neighbourhood of $\tilde\bk$ that is the image
of $\{|\bx|<1\}$ under the Morse diffeomorphism $\bk(\bx)$. 
If 
$$
e(\bk(\bx))=x_1^2+\ldots+x_m^2-x_{m+1}^2-\ldots-x_d^2
$$ 
write
$\bx=(r\bth_1,r\bth_2)$ with $0\le r\le 1/\sqrt{2}$, $\bth_1\in S^{m-1}$ and 
$\bth_2\in S^{d-m-1}$. Introduce ``roughly orthonormal'' coordinate patches
on $S^{m-1}$. 

Here is what we mean by the statement that $\bth_1(\al_1,\ldots,
\al_{m-1})$ is ``roughly orthonormal''. Let 
$$
A(\al_1,\ldots,\al_{m-1})=\Big[
\sfrac{\partial\hfill\bth_1\hfill\,}{\partial\al_1}(\al_1,\ldots,\al_{m-1}),\ldots,
\sfrac{\partial\hfill\bth_1\hfill\,}{\partial\al_{m-1}}(\al_1,\ldots,\al_{m-1})\Big]
$$
be the $m\times{m-1}$ matrix whose columns are the tangent vectors to the
coordinate axes at $\bth_1(\al_1,\ldots,\al_{m-1})$. The columns of this
matrix span the tangent space to $S^{m-1}$ at $\bth_1(\al_1,\ldots,\al_{m-1})$.
Let $V(\al_1,\ldots,\al_{m-1})$ be an $(m-1)\times(m-1)$ matrix such that
the columns of $A(\al_1,\ldots,\al_{m-1})V(\al_1,\ldots,\al_{m-1})$
are mutually orthogonal unit vectors. Those columns form an orthonormal basis 
for the tangent space to $S^{m-1}$ at $\bth_1(\al_1,\ldots,\al_{m-1})$. 
``Roughly orthogonal'' signifies
that $V$ and its inverse are uniformly bounded on the domain of the coordinate
patch. The only consequence of rough orthonormality that we will use is
that, if $v$ is any vector in the tangent space to $S^{m-1}$ 
at $\bth_1(\al_1,\ldots,\al_{m-1})$, then, because
\begin{eqnarray}
\|v\|
&=&
\Big\|v^t \big[ A(\al_1,\ldots,\al_{m-1})V(\al_1,\ldots,\al_{m-1})\big]\Big\|
\nonumber\\
&\le&\big\|v^t A(\al_1,\ldots,\al_{m-1})\big\|
                    \big\|V(\al_1,\ldots,\al_{m-1})\big\|
\nonumber
\end{eqnarray}
implies 
$$
\big\|v^t A(\al_1,\ldots,\al_{m-1})\big\|\ge \big\|V(\al_1,\ldots,\al_{m-1})\big\|^{-1} \|v\|
$$
we have
\begin{equation}
\max_{1\le j\le m-1}\big|v\cdot \sfrac{\partial\hfill\bth_1\hfill\,}{\partial\al_j}(\al_1,\ldots,\al_{m-1})\big|
\ge\sfrac{1}{\sqrt{m-1}}\big\|V(\al_1,\ldots,\al_{m-1})\big\|^{-1} \|v\|
\label{eqnRoughOrthA}
\end{equation}

Also introduce a ``roughly orthonormal'' coordinate patch
$\bth_2(\al_m,\ldots,\al_{d-2})$ on $S^{d-m-1}$ and parametrize (a patch
on) the cone $x_1^2+\ldots+x_m^2-x_{m+1}^2-\ldots-x_d^2=0$ by 
$$
\bx(\al_1,\ldots,\al_{d-1})
=\big(\al_{d-1}\bth_1(\al_1,\ldots,\al_{m-1}),
\al_{d-1}\bth_2(\al_m, \ldots, \al_{d-2})\big)
$$
and the corresponding patch on $\cO_{\tilde\bk}$ by
$\bk\big(\bx(\al_1,\ldots,\al_{d-1})\big)$. Denote
$$
\om_1(\al_1,\ldots,\al_{d-1})=\bk\big(\bx(\al_1,\ldots,\al_{d-1})\big)
$$ For  patches away from the
singular points, any roughly orthonormal coordinate systems will do.
Observe that, if $v$ is any vector in the tangent space to $\cF$
at $\om_1(\al_1,\ldots,\al_{d-1})$, then
\begin{equation}
\max_{1\le j\le d-1}\big|v\cdot \sfrac{\partial\hfill\om\hfill\,}{\partial\al_j}(\al_1,\ldots,\al_{d-1})\big|
\ge \|v\|\cases{\const & regular patch\cr\const \al_{d-1}& singular patch}
\label{eqnRoughOrthB}
\end{equation}

Now fix any $\bq\in K_\bq$ and consider the contribution to
$$
\dblInt_{S_1\times S_2\cap\cT}
d \om _1 d \om _2 \ 
1\Big(\abs{e(v_1 \bk (0 , \om_1 ) + v_2 \bp (0, \om_2 ) + \bq )} \le \ze \Big)
$$
from one pair, $\om_1(\al_1,\ldots,\al_{d-1})$ and
$\om_2(\be_1,\ldots,\be_{d-1})$, of coordinate patches as described above.
The Jacobian $\sfrac{\partial\hfill\om_1\hfill}{\partial\al_1\ldots\partial\al_{d-1}}$
is bounded by a constant, in the regular case, and a constant times
$\al_{d-1}^{d-2}$, in the singular case. Denote by 
$\th(\om_1,\om_2)$ the angle between $n(\om_1)$ and $n(\om_2)$.
By the transversality condition, $\sin\th(\om_1,\om_2)\ge \ze^{1-\ga}$. 
Consequently, for at least one $i\in\{1,2\}$ the sine of the 
angle between  $n(\om_i)$ and $\nabla e(v_1\bk(0,\om_1)+v_2\bp(0,\om_2)+\bq)$
 is at least 
$$
\sin\sfrac{1}{2}\th(\om_1,\om_2)\ge \sfrac{1}{2}\sin\th(\om_1,\om_2) \ge \sfrac{1}{2} \ze^{1-\ga}
$$ 
and 
the length of the
projection of $\nabla e(v_1\bk(0,\om_1)+v_2\bp(0,\om_2)+\bq)$ on $T_{\om_i }\cF$
must be at least $\sfrac{1}{2}\ze^{1-\ga}|\nabla e(v_1\bk(0,\om_1)+v_2\bp(0,\om_2)+\bq)|
\ge\const\de\,\ze^{1-\ga}$.
Suppose that $i=1$. Define
$$
\rho=e\big(v_1\bk(0,\om_1)+v_2\bp(0,\om_2)+\bq\big)
$$
viewed as a function of $\al_1,\ldots,\al_{d-1}$ and $\be_1,\ldots,\be_{d-1}$.
By \Ref{eqnRoughOrthB}, there must be a $1\le j\le d-1$ such that
\begin{eqnarray}
\big|\sfrac{\partial\hfill \rho\hfill}{\partial \al_j}\big|
&=&
\big|\nabla e(v_1\bk(0,\om_1)+v_2\bp(0,\om_2)+\bq)\cdot 
\sfrac{\partial\hfill\om\hfill\,}{\partial\al_j}(\al_1,\ldots,\al_{d-1})\big|
\nonumber\\
&\ge&
\const\de\, \ze^{1-\ga} \cases{\const & regular patch\cr
                          \const \al_{d-1}& singular patch}
\nonumber
\end{eqnarray} 
Make a final change of variables replacing $\al_j$ by $\rho$. The Jacobian 
for the composite change of variables from $(\om_1,\om_2)$ to
$(\al_1,\ldots,\al_{d-1},\be_1,\ldots,\be_{d-1})$ 
and then to
$\big((\al_i)_{1\le i\le d-1\atop i\ne j},
(\be_i)_{1\le i\le d-1},\rho)$ is bounded by 
$$
\const\sfrac{1}{\de}\ze^{\ga-1}\left.\cases{\const & regular patch\cr
   \const\, \al_{d-1}^{d-3}& singular patch}\right\}
\le \const\sfrac{1}{\de}\ze^{\ga-1}
$$
We thus have
$$
T(\ze ) \leq \const \sfrac{1}{\de} \ze^{\ga -1} 
\int_{-\ze}^{\ze} d\rho
\leq \const \sfrac{1}{\de}\ze^\ga
$$
The optimal bound is when $\ka (1-\ga ) =\ga $, that is,  
$\ga = \ka/ (1+\ka)$. 
\end{proof}

\subsection{The proof of Theorem \ref{thm:main}}\label{sec:mainproof}
\begin{mainproof}
Now that we have Proposition \ref{propOverlap}, the proof of 
Theorem \ref{thm:main} is almost identical to the corresponding
proofs of \cite{FST1}. The main change is that our current choice of 
localization operator simplifies the argument. Several proofs in this
paper and its companion paper \cite{SFS2} are variants of the arguments
of \cite{FST1}. So we have provided, in Appendix \ref{ap:A},
a complete, self--contained proof that the value, $G(q)$, of each renormalized
1PI, two--legged graph is $C^{1-\veps}$, using the simplest form of the
argument in question. In particular, it does not use ``volume improvement''
bounds like  Proposition \ref{propOverlap}. We here show how to use
Proposition \ref{propOverlap} to upgrade $C^{1-\veps}$ to $C^{1+\veps}$.
This is a good time to read that Appendix, since we shall just explain 
the modifications to be made to it.

As in Appendix \ref{ap:A}, use \Ref{eq:Cexpn} to introduce a 
scale expansion for each propagator and express $G(q)$ in terms of a 
renormalized tree expansion \Ref{eq:Gren}. We shall prove, 
by induction on the depth, $D$, of $G^J$, the bound
\begin{equation}\label{eq:SFS1derivindhyp}
\sum_{J\in \cJ(j,t,R,G)}\hskip-15pt
\sup\nolimits_{q}\big|\partial_{q_0}^{s_0}\partial_\bq^{s_1}G^J(q)\big|
\le\cst{}{n} |j|^{3n-2}M^{(1-s_0-s_1)j}
\cases{M^{\veps j}& if $s_0+s_1\ge 1$\cr 1&if $s_0=s_1=0$\cr}
\end{equation}
for $s_0,s_1\in\{0,1,2\}$.  Here $\veps$ was specified in Proposition 
\ref{propOverlap} and the other notation is as in Appendix \ref{ap:A}: 
$n$ is the number of vertices in $G$ and $\cJ(j,t,R,G)$ is the set of all 
assignments $J$ of scales to the lines of  $G$ that have root scale $j$, that 
give forest $t$ and that are compatible with the assignment $R$ of 
renormalization labels to the two--legged forks of $t$. (This is explained in 
more detail just before \Ref{eq:Gren}.) If $s_0+s_1=1$, 
the right hand side becomes $\cst{}{n}|j|^{3n-2}M^{\veps j}$, which is summable 
over $j<0$, implying that 
$G(q)=\sum_{R}\sum_{j<0}\sum_{J\in \cJ(j,t,R,G)}G^J(q)$ is $C^1$. 
To show that the first order derivatives of $G(q)$ are H\"older continuous
of any degree strictly less than $\veps$, just observe that if
$$
\|f_j\|_\infty\le \cst{}{n} |j|^{3n-2}M^{\veps j}\qquad\hbox{and}\qquad
\|f'_j\|_\infty\le \cst{}{n} |j|^{3n-2}M^{\veps j}M^{-j}
$$
then
\begin{eqnarray*}
\big|f_j(x)-f_j(y)\big|
&\le&\min\big\{2\|f_j\|_\infty,\|f'_j\|_\infty|x-y|\big\}\\
&\le& \cst{}{n} |j|^{3n-2}M^{\veps j}\min\big\{2,M^{-j}|x-y|\big\}\\
&\le& \cst{}{n} |j|^{3n-2}M^{\veps j}M^{-\et j}|x-y|^\et\qquad\hbox{ for
any }0\le\et\le 1
\end{eqnarray*}
is summable over $j<0$ for any $0<\et<\veps$.

If $s_0=s_1=0$, \Ref{eq:SFS1derivindhyp} is contained in  
Proposition \ref{le:fixedrootscale}, so it suffices to 
consider $s_0+s_1\ge 1$.
As in Appendix \ref{ap:A}, if $D>0$, decompose the tree $t$ into a 
pruned tree $\tilde t$ and insertion subtrees $\tau^1,\cdots,\tau^m$ by cutting
the branches beneath  all minimal $E_f=2$ forks $f_1,\cdots,f_m$. In other
words each of the forks $f_1,\cdots,f_m$ is an $E_f=2$ fork having
no $E_f=2$ forks, except $\phi$, below it in $t$. Each $\tau_i$
consists of the fork $f_i$ and all of $t$ that is above $f_i$. It has depth
at most $D-1$ so the corresponding subgraph $G_{f_i}$ obeys
\Ref{eq:SFS1derivindhyp}. Think of each subgraph $G_{f_i}$ as
a generalized vertex in the graph $\tilde G=G/\{G_{f_1},\cdots,G_{f_m}\}$.
Thus $\tilde G$ now has two as well as four--legged vertices. These
two--legged vertices have kernels of the form
$
T_i(k)=\sum_{j_{f_i}\le j_{\pi(f_i)}}\ell G_{f_i}(k)
$
when $f_i$ is a $c$--fork and of the form
$
T_i(k)=\sum_{j_{f_i}> j_{\pi(f_i)}}(\bbbone-\ell)G_{f_i}(k)
$
when $f_i$ is an $r$--fork. At least one of the external 
lines\footnote{Note that the root fork, $\emptyset$, of 
\Ref{eq:Gren} does not carry an $r,c$ label so that 
$\tilde G$ may not be simply a single two--legged $c$-- or $r$--vertex.
At least one external line of each $G_{f_i}$ must be an internal line
of $\tilde G$.} 
of $G_{f_i}$ must be of scale precisely $j_{\pi(f_i)}$ so
the momentum $k$ passing through $G_{f_i}$ lies in the support of 
$C_{j_{\pi(f_i)}}$. In the case of a $c$--fork $f=f_i$ we have, as in
\Ref{eq:cforkA} and using the same notation, by the 
inductive hypothesis,
\begin{eqnarray}
&&\hskip-23pt\sum_{j_{f}\le j_{\pi(f)}}\sum_{J_f\in\cJ(j_f,t_f,R_f,G_f)}
\hskip-6pt\sup_{k}\Big|\partial_\bk^{s'_1}\ell G_{f}^{J_f}(k)\Big|
\le\sum_{j_{f}\le j_{\pi(f)}}\hskip-7pt\cst{}{n_f}|j_f|^{3n_f-2}M^{j_f}
         M^{-s'_1(1-\veps)j_f}\nonumber\\
&&\hskip70pt
\le \cst{}{n_f}|j_{\pi(f)}|^{3n_f-2}M^{j_{\pi(f)}}M^{-s'_1(1-\veps)j_{\pi(f)}}
\label{eq:cforkSFS1} 
\end{eqnarray}
for $s'_1=0,1$. Note that the sum in the analog of \Ref{eq:cforkSFS1}
diverges when $s'_1=2$, so it is essential that no more than one derivative
act on any $c$--fork.
As $\ell G_{f}^{J_f}(k)$ is independent of $k_0$, derivatives with respect
to $k_0$ may not act on it. In the case of an $r$--fork $f=f_i$, we have, as in 
\Ref{eq:rforkA}, using the mean value theorem in the
case $s'_0=0$,
\begin{eqnarray}
&&\hskip-20pt\sum_{j_{f}> j_{\pi(f)}}\sum_{J_f\in\cJ(j_f,t_f,R_f,G_f)}
\sup_{k}1\big(C_{j_{\pi(f)}}(k)\ne 0\big)
\Big|\partial_{k_0}^{s'_0}\partial_\bk^{s'_1}
(\bbbone-\ell)G_{f}^{J_f}(k)\Big|\nonumber\\
&&\hskip20pt
\le \sum_{j_{f}> j_{\pi(f)}}\ \sum_{J_f\in\cJ(j_f,t_f,R_f,G_f)}
M^{(1-\min\{1,s'_0\})j_{\pi(f)}}
\sup_{k}\Big|\partial^{\max\{1,s'_0\}}_{k_0}\partial_\bk^{s'_1}
    G_{f}^{J_f}(k)\Big|\nonumber\\
&&\hskip20pt
\le\cst{}{n_f}M^{(1-\min\{1,s'_0\})j_{\pi(f)}}
\sum_{j_{f}> j_{\pi(f)}}|j_{f}|^{3n_f-2}
     M^{-(\max\{1,s'_0\}+s'_1-1-\veps)j_f}\nonumber\\
&&\hskip20pt
\le\cst{}{n_f}|j_{\pi(f)}|^{3n_f-2}M^{j_{\pi(f)}}M^{-s'_0j_{\pi(f)}}
M^{-s'_1j_{\pi(f)}}
\label{eq:rforkSFS1}
\end{eqnarray}

Denote by $\tilde J$ the restriction to $\tilde G$ of the scale assignment
$J$. We bound $\tilde G^{\tilde J}$, which again is of the form 
\Ref{eq:tildeGtildeJform}, by a variant of the six step 
procedure followed in Appendix \ref{ap:A}. In fact the first five steps 
are almost identical.
\begin{enumerate} 
\item
Choose a spanning tree $\tilde T$ for $\tilde G$ with the property that 
$\tilde T\cap \tilde G^{\tilde J}_f$ is a connected tree for every 
$f\in t(\tilde G^{\tilde J})$. 
\item
 Apply any $q$--derivatives. By the product rule each derivative may act on any 
line or vertex on the ``external momentum path''. It suffices to consider 
any one such action. Ensure, through a judicious use of integration by
parts, that at most one derivative acts on any single $c$--fork. To do so, 
observe that a derivative with respect to the external momentum acting on 
a $c$--fork is, up to a sign, equal to the derivative with respect to any loop 
momentum that flows through the fork. So replace one external momentum
derivative by a loop momentum derivative and integrate by parts to move
the latter off of the $c$--fork.
\item
 Bound each two--legged renormalized subgraph (i.e.
$r$--fork) by \Ref{eq:rforkSFS1} and each two--legged 
counterterm (i.e. $c$--fork) by \Ref{eq:cforkSFS1}. 
Observe that when $s'_0$ $k_0$--derivatives and $s'_1$ $\bk$--derivatives 
act on the vertex, the bound is no worse than 
$M^{-s'_0j_{\pi(f)}}M^{-s_1'j_{\pi(f)}}$ times the bound with no derivatives. 
(We shall not need the factor $M^{s_1'\veps j_{\pi(f)}}$
in \Ref{eq:cforkSFS1}. So we simply discard it.)
As we have already observed, one of the external lines of the two--legged
vertex must be of scale precisely $j_{\pi(f)}$. We write 
$M^{-s'_0j_{\pi(f)}}M^{-s_1'j_{\pi(f)}}=M^{-s'_0j_\ell}M^{-s_1'j_\ell}$,
where $\ell$ is that line.
\item
 Bound all of the remaining vertex functions, (suitably differentiated)
by their suprema in momentum space. We have already observed that if 
$s_0=s_1=0$, the target bound \Ref{eq:SFS1derivindhyp} is contained in  
Proposition \ref{le:fixedrootscale}, with $s_0=s=0$.
In the event that $s_0+s_1\ge 1$, but all derivatives act on four--legged
vertex functions, Proposition \ref{le:fixedrootscale}, again 
with $s_0=s=0$ but with one or two four--legged vertex functions replaced 
by differentiated functions, again gives \Ref{eq:SFS1derivindhyp}.
So it suffices to consider the case that at least one derivative acts on
a propagator or on a $c$-- or $r$--fork. 
\item
 Bound each  propagator 
\begin{eqnarray}\label{eq:propbndSFS1}
&&\hskip-25pt\big|\partial^{s'_0}_{k_0}\partial^{s'_1}_{\bk}C_{j_\ell}(k)\big|
\le\const M^{-(1+s'_0+s'_1)j_\ell}\ 1\big(|ik_0-e(\bk)|\le M^{j_\ell}\big)
\nonumber\\
\end{eqnarray}
Once again,  when $s'_0$ $k_0$--derivatives and $s'_1$ $\bk$--derivatives act 
on the propagator, the bound is no worse than $M^{-(s'_0+s')j_\ell}$ times 
the bound with no derivatives. 
\end{enumerate}

 We now have 
$\big|\partial_{q_0}^{s_0}\partial_{\bq}^{s_1}\tilde G^{\tilde J}(q)\big|$
bounded, uniformly in $q$ by
\begin{eqnarray}\label{eq:step5bnd}
&&\hskip-20pt\cst{}{n}\prod_{d}M^{-j_{\ell_d}}
   \prod_{i=1}^m|j_{\pi(f_i)}|^{3n_{f_i}-2}M^{j_{\pi(f_i)}}
    \prod_{\ell\in \tilde G}M^{-j_\ell} \nonumber\\
&&\hskip1in\int\prod_{\ell\in \tilde G\setminus \tilde T}\dbar k_\ell
\ \prod_{\ell\in \tilde G}1\big(
|ik_{\ell0}-e(\bk_\ell)|\le  M^{j_\ell}\big)
\end{eqnarray}
Here $d$ runs over the $s_0+s_1\ge 1$ derivatives in 
$\partial_{q_0}^{s_0}\partial_{\bq}^{s_1}$ and $\ell_d$ refers to the 
specific line on which the derivative acted (or, in the case that the 
derivative acted on a $c$-- or $r$--fork, the external line specified in 
step 3). 

For $\ell\in\tilde T$, the momentum $k_\ell$ is a signed sum of the loop
momenta and external momentum flowing through $\ell$.
In Appendix \ref{ap:A}, we discarded the factors of the integrand
$\prod_{\ell\in \tilde G}
1\big(|ik_{\ell0}-e(\bk_\ell)|\le  M^{j_\ell}\big)$ with $\ell\in\tilde T$
at this point. Then the integrals over the loop momenta factorized and
we bounded them by the volumes of their domains of integration, using 
Lemma \ref{le:2.4}. We now deviate from the argument of Appendix \ref{ap:A} by exploiting the constraint that one factor
$1\big(|ik_{\ell0}-e(\bk_\ell)|\le  M^{j_\ell}\big)$ with $\ell\in\tilde T$
imposes on the domain of integration.

We have reduced consideration to cases in which at least one derivative
with respect to the external momentum acts either on a propagator in $\tilde T$ 
or on a two--legged $c$-- or $r$--vertex in $\tilde T$, so that 
the associated line $\ell_d\in\tilde T$.  Select any such $\ell_{d_0}$.
Recall that any line $\ell\in\tilde G\setminus\tilde T$ is associated with a 
loop $\La_\ell$ that consists of $\ell$ and the linear subtree of $\tilde T$ 
joining the vertices at the ends of $\ell$. 
By \cite[Lemma \ref{FS2-le:overlap}]{SFS2}, there exist two lines 
$\ell_1,\ell_2\in\tilde G\setminus\tilde T$ such that 
$\ell_{d_0}\in\La_{\ell_1}\cap\La_{\ell_2}$.
By \cite[Lemma \ref{FS2-le:overlapScale}]{SFS2}, $j_{\ell_1},j_{\ell_2}\le
j_{\ell_{d_0}}$. Now discard all of the factors $\prod_{\ell\in \tilde G}
1\big(|ik_{\ell0}-e(\bk_\ell)|\le  M^{j_\ell}\big)$ in the integrand of 
\Ref{eq:step5bnd} with $\ell\in\tilde T\setminus\{\ell_{d_0}\}$. Choose the
order of integration in \Ref{eq:step5bnd} so that $k_{\ell_1}$ and $k_{\ell_2}$
are integrated first. By Proposition \ref{propOverlap}, 
\begin{eqnarray}\label{eq:overlaploopbnd}
\int\prod_{\ell\in \{\ell_1,\ell_2\}}\hskip-8pt\dbar k_\ell
\ \prod_{\ell\in \{\ell_1,\ell_2,\ell_{d_0}\}}\hskip-15pt
1\big(|ik_{\ell0}-e(\bk_\ell)|\le  M^{j_\ell}\big)
\le\const M^{2j_{\ell_1}}M^{2j_{\ell_2}}M^{\veps j_{d_0}}
\end{eqnarray}
Finally, integrate over the remaining loop momenta $k_\ell$, $\ell\in\tilde G
\setminus(\tilde T\cup\{\ell_1,\ell_2\})$ as in step 6 of 
\Ref{le:fixedrootscale}. The integral over each such $k_\ell$ is 
bounded by vol$\set{k_{0\ell}}{|k_{0\ell}|\le M^{j_\ell}}\le 2M^{j_\ell}$ 
times the volume of $\set{\bk_\ell}{|e(\bk_\ell)|\le M^{j_\ell}}$, which, by
Lemma \ref{le:2.4}, is bounded by a constant times $|j_\ell|M^{j_\ell}$.
 We now have 
$\big|\partial_{q_0}^{s_0}\partial_{\bq}^{s_1}\tilde G^{\tilde J}(q)\big|$
bounded, uniformly in $q$, by
\begin{eqnarray*}
&&\hskip-40pt\cst{}{n}M^{\veps j_{\ell_{d_0}}} \prod_{d}M^{-j_{\ell_d}}
   \prod_{i=1}^m|j_{\pi(f_i)}|^{3n_{f_i}-2}M^{j_{\pi(f_i)}}
    \prod_{\ell\in \tilde G}M^{-j_\ell} 
    \prod_{\ell\in \tilde G\setminus  \tilde T}|j_\ell|  M^{2j_\ell}
\end{eqnarray*}
For every derivative $d$, $j_{\ell_d}\ge j=j_\phi$,
so that
$$
M^{\veps j_{\ell_{d_0}}} \prod_{d}M^{-j_{\ell_d}}
=M^{-(1-\veps) j_{\ell_{d_0}}} \prod_{d\ne d_0}M^{-j_{\ell_d}}
\le M^{-s_0j-s_1j}M^{\veps j}
$$
Bounding each $|j_{\pi(f_i)}|^{3n_{f_i}-2}\le |j_{\pi(f_i)}|^{3n_{f_i}-1}$, 
we come to the conclusion that 
$\big|\partial_{q_0}^{s_0}\partial_{\bq}^{s_1}\tilde G^{\tilde J}(q)\big|$
is bounded, uniformly in $q$, by
\begin{eqnarray}\label{eq:step6bndA}
&&\hskip-40pt\cst{}{n}M^{-s_0j-s_1j}M^{\veps j}
   \prod_{i=1}^m|j_{\pi(f_i)}|^{3n_{f_i}-1}M^{j_{\pi(f_i)}}
    \prod_{\ell\in \tilde G}M^{-j_\ell} 
    \prod_{\ell\in \tilde G\setminus  \tilde T}|j_\ell|  M^{2j_\ell}
\end{eqnarray}
This is exactly $M^{-s_0j-s_1j}M^{\veps j}$ times the bound
\Ref{eq:step6bnd}$\big|_{s_0=s=0}$ of Appendix \ref{ap:A}. So
\Ref{eq:scalesumbnd}$\big|_{s_0=s=0}$ of Appendix \ref{ap:A} now gives
\Ref{eq:SFS1derivindhyp}. This completes the proof that the value of each
graph contributing to the self--energy is $C^{1+\et}$ in the external momentum,
for every $\et$ strictly less than the $\veps$ of Proposition \ref{propOverlap}.

We may also apply this technique to connected four--legged graphs.
There is no need for an induction argument because we already have all
of the needed bounds on $c$-- and $r$--forks. We just need to go through
the $\tilde G$ argument once. When we do so, there are three changes:
\begin{itemize}
\item 
The overall power counting factor $M^{{1\over 2}j(4-E_\phi)}$
in \Ref{eq:pwrcntbnd}, which took the value $M^j$ for 
two--legged graphs, now takes the value $1$ for four--legged graphs.
\item 
We may only apply the overlapping loop bound \Ref{eq:overlaploopbnd}
when we can find a line $\ell_3\in\tilde T$  and two lines 
$\ell_1,\ell_2\in\tilde G\setminus\tilde T$ with 
$\ell_3\in\La_{\ell_1}\cap\La_{\ell_2}$.  By \cite[Lemma 2.34]{FST1}, this 
is the case if and only if $\tilde G$ is overlapping, as defined in 
\cite[Definition 2.19]{FST1}. By \cite[Lemma 2.26]{FST1}, four--legged
connected graphs fail to be overlapping if and only if they are dressed bubble 
chains, as defined in \cite[Definition 2.24]{FST1}.
\item 
To convert the $M^{\veps j_{\ell_3}}$ from the overlapping loop integral
\Ref{eq:overlaploopbnd} into the $M^{\veps j}$ that we want in the final 
bound, we set $f_3$ to the highest fork with $\ell_3\in \tilde G_{f_3}$
and write
$$
M^{\veps j_{\ell_3}}
=M^{\veps j_{f_3}}
=M^{\veps j}\prod_{f\in \tilde t\atop \phi<f\le f_3}
M^{\veps(j_f-j_{\pi(f)})}
$$
The extra factors $M^{\veps(j_f-j_{\pi(f)})}$, which are all at least one, 
are easily absorbed by \Ref{eq:scalesum} provided 
$E_f>4$ for all forks $f$ between the root $\phi$ and $f_3$. We may 
choose $\ell_3$ so that this is the case precisely when $G$ is {\it not} 
a generalized ladder. To see this, let $\hat G^J=G^J/\big\{G^J_f\ |\ E_f=2,4\}$ 
be the diagram $G$, but with both two-- and four--legged subdiagrams $G_f$
viewed as generalized vertices. Then we can find a suitable $\ell_3$ if
and only if $\hat G^J$ is overlapping which in turn is the case if and
only if $\hat G^J$ is not a dressed bubble chain, which in turn is
the case, for all labellings $J$, if and only if $G$ is not a generalized
ladder. 
\end{itemize}
Thus when $G$ is a connected four--legged graph, the right hand side of 
\Ref{eq:SFS1derivindhyp} is replaced by a constant times a power of $j$
times
$$
M^{(-s_0-s_1)j}
\cases{M^{\veps j}& if $G$ is not a generalized ladder\cr 1& if
$G$ is a generalized ladder\cr}
$$
for $s_0,s_1\in\{0,1\}$.
This implies that four--legged graphs, other than generalized ladders, 
are $C^\et$ functions of their external momenta for all $\et$ strictly smaller 
than $\veps$. 

For graphs $G$ contributing to the higher correlation functions, we may once again
repeat the same argument, but with $s_0=s_1=0$ and without having to 
exploit overlapping loops, provided we use the $L^1$ norm, rather than
the $L^\infty$ norm, on the momentum space kernel of $G$. In \cite{FST1},
this norm was denoted $|\ \cdot\ |'$ and was defined in (1.46).  
See \cite[(2.27) and Theorem 2.47]{FST1} for the proof.

Denote by $K(e,\bq)$ the counterterm function for the dispersion relation
$e(\bk)$ and by 
$$
C_j(e,k)=\sfrac{f(M^{-2j}|ik_0-e(\bk)|^2)}{ik_0-e(\bk)}
$$
the scale $j$ propagator for the dispersion relation $e(\bk)$. Observe
that, for all $j_\ell<0$ and $s'_0\in\{0,1\}$,
$$
\big|\sfrac{\partial\hfill}{\partial t}\partial^{s'_0}_{k_0}
   C_{j_\ell}(e+th,k)\big|_{t=0}\big|
\le\const \|h\|_\infty
M^{-(2+s'_0)j_\ell}\ 1\big(|ik_0-e(\bk)|\le M^{j_\ell}\big)
$$
Thus the effect of a directional derivative with respect to the dispersion
relation in direction $h$ is to multiply \Ref{eq:propbndSFS1} by
$\|h\|_\infty M^{-j_\ell}$, which is $\|h\|_\infty$ times the effect of
a $\partial_{k_0}$ derivative. So the same argument that led to 
\Ref{eq:SFS1derivindhyp} gives
$$
\sum_{J\in \cJ(j,t,R,G)}\hskip-15pt
\sup\nolimits_{q}
\big|\sfrac{\partial\hfill}{\partial t}\partial_{q_0}^{s_0}G^J(e+th,q)\big|_{t=0}\big|
\le\cst{}{n} |j|^{3n-2}M^{-s_0 j}M^{\veps j}\|h\|_\infty
$$
for $s_0\in\{0,1\}$. When $s_0=0$, this is summable over $j<0$ so that
\begin{equation}\label{eq:dirderiv}
\sup_{\bq}\Big|
\sfrac{\partial\hfill}{\partial t}\Big[\sum_{r=1}^R\la^rK_r(e+th,\bq)\big|_{t=0}
\Big]\Big|
\le\cst{}{{\rm dKde}}\ |\la|\|h\|_\infty
\end{equation} 
The constant $\cst{}{{\rm dKde}}=\cst{}{{\rm dKde}}(e,v)$ depends on $R$
and the
various parameters in the hypotheses imposed by Theorem \ref{thm:main}
on the dispersion relation $e$ and two--body interaction $v$, like
the $C^3$ norm of $e$, the eigenvalues of the Hessian of $e$ at singular
points, the $C^2$ norm of $v$ and the constants $Z_0$, $\be_0$ and $\ka$
of Hypothesis \hypflatnest. Fix a two--body interaction $v$ and a constant
$A>0$. Denote by $\cE_A$ the set of dispersion relations such that
$\cst{}{{\rm dKde}}(e,v)\le A$. 
If the dispersion relations $e$, $e'$ and all interpolants
 $(1-t)e+te'$, $0\le t\le 1$ are in $\cE_A$, 
and if $|\la|<\sfrac{1}{A}$, then
\begin{equation}\label{eq:injective}
e+\sum_{r=1}^R \la^r K_r(e)
=e'+\sum_{r=1}^R \la^r K_r(e')
\implies e=e'
\end{equation}
\end{mainproof}\goodbreak

\appendix

\section{Bounding General Diagrams - A Review}\label{ap:A}

For the convenience of the reader, we here provide a review of the 
general diagram bounding technique of \cite{FST1}. As a concrete example
of the technique,  we consider models in $d\ge 2$ for which the
interaction $v$ has $C^1$ Fourier transform and
the dispersion relation $e$ 
and its Fermi surface $\cF=\set{\bk}{e(\bk)=0}$ obey

\begin{itemize}

\item[\Hap] $\set{\bk}{|e(\bk)|\le 1}$ is compact. 

\item[\Hbp] $e(\bk)$ is $C^1$.

\item[\Hcp] $e(\tilde\bk)=0$ and $\bnabla e(\tilde\bk)=\bzer$ simultaneously
      only for finitely many $\tilde\bk$'s, called singular points.

\item[\Hdp] If $\tilde\bk$ is a singular point then         
            $\big[\sfrac{\partial^2\hfill}{\partial\bk_i\partial\bk_j} e(\tilde\bk)
                \big]_{1\le i,j\le d}$ is nonsingular.
\end{itemize}
and we prove that, any graph contributing to the proper self--energy
is $C^s$ for any $s<1$. Note that, in this appendix, we do {\it not} 
require the no--nesting condition of Hypothesis \hypflatnest.
The same methods apply to graphs with more than
two legs as well.

Let $G$ be any two--legged 1PI graph.
We also use the symbol $G$ to stand for the value of the graph $G$.
Singularities of the Fermi surface have no influence on the ultraviolet
regime, so we introduce a fixed ultraviolet cutoff by choosing a compactly
supported $C^\infty$ function $U(k)$ that is identically one on a neighbourhood
of $\{0\}\times\cF$ and use the propagator $C(k)=\sfrac{U(k)}{ik_0-e(\bk)}$.
If $M>1$ and $f$ is a suitable $C_0^\infty$ function that is supported
on $[M^{-4},1]$, we have the partition of unity \cite[\S 2.1]{FST1}
$$
U(k)=\sum_{j< 0}f(M^{-2j}|ik_0-e(\bk)|^2)
$$
and hence
\begin{equation}\label{eq:Cexpn}
C(k)=\sum_{j< 0}C_j(k)
\qquad\hbox{where}\qquad
C_j(k)=\sfrac{f(M^{-2j}|ik_0-e(\bk)|^2)}{ik_0-e(\bk)}
\end{equation}
Note that $f(M^{-2j}|ik_0-e(\bk)|^2)$ and $C_j(k)$ vanish unless 
$$
M^{j-2}\le |ik_0-e(\bk)|\le M^j
$$ 

First, suppose that $G$ is not renormalized.
Expand each propagator of $G$ using $\ C=\sum_{j< 0}C_j$ to give
$$
G=\sum_J G^J
$$
The sum runs over all possible labellings of the graph $G$, with each
labeling consisting of an assignment $J=\set{j_\ell< 0}{\ell\in G}$ 
of scales to the lines of $G$.
We now construct a natural hierarchy of subgraphs of $G^J$. This family
of subgraphs will be a forest, meaning that if $G_f,G_{f'}$ are in the
forest and intersect, either $G_f\subset G_{f'}$ or $G_{f'}\subset G_f$. 
First let, for each $j< 0$,
$$
G^{(\ge j)}=\set{\ell\in G^J}{j_\ell\ge j}
$$
be the subgraph of $G^J$ consisting of all lines of scale at least $j$.
(Think of an interaction line as a generalised four--legged vertex 
$$
\figput{fourvert}
$$
rather than a line.)
There is no need for $G^{(\ge j)}$ to be connected. The forest $t(G^J)$
is the set of all connected subgraphs of $G^J$ that are components
of some $G^{(\ge j)}$. This forest is naturally partially ordered by
containment. In order to make $t(G^J)$ look like a tree with its root at the 
bottom, we define, for $f,f'\in t(G^J)$, $f>f'$ if $G_f\subset G_{f'}$.
We denote by $\pi(f)$ the highest fork of $t(G^J)$ below $f$
and by $\phi$ the root element, i.e. the element with $G_\phi=G$. To each
$G_f\in t(G)$ there is naturally associated the scale 
$j_f=\min\set{j_\ell}{\ell\in G_f}$. In the example below $j_4>j_3>j_1$
and $j_2>j_1$. External lines are in gray while internal lines are in black.
\vskip-0.6in
$$
\hskip-10pt
\figput{tgraph}\hskip-10pt\figplace{tgraphF4}{0 in}{0.1 in}
\figplace{tgraphF3}{0 in}{0.1 in}
\figplace{tgraphF2}{-3.8in}{-0.1 in}
\figplace{tree}{0.6in}{0.5 in}
$$
\vskip-0.2in
Reorganize the sum over $J$ using
\begin{equation}\label{eq:Gsum}
G=\sum_{j< 0}\sum_{t\in\cF(G)}\prod_{f\in t}\sfrac{1}{b_f!}
        \sum_{J\in\cJ(j,t,G)} G^J
\end{equation}
where
\begin{eqnarray*}
\cF(G)&=&\hbox{ the set of forests of subgraphs of }G\nonumber\\
b_f&=&\hbox{ the number of upward branches at the fork $f$ }\nonumber\\
\cJ(j,t,G)&=&\set{{\rm labellings\ }J{\rm\ of\ }G}{t(G^J)=t,\ j_\phi=j}
\nonumber\\
\end{eqnarray*}
A given labeling $J$ of $G$ is in $\cJ(j,t,G)$ if and only if
\begin{itemize}
\item 
for each $f\in t$, all lines of $G_f\setminus\cup_{f'\in t\atop
f'>f}G_{f'}$ have the same scale. Call the common scale $j_f$.
\item 
if $f>f'$ then $j_f>j_{f'}$
\item 
$j_\phi=j$
\end{itemize}
It is a standard result \cite[(2.72)]{FST1} that renormalization of
the dispersion relation may be implemented by modifying \Ref{eq:Gsum} 
as follows.
\begin{itemize}
\item
Each $\emptyset\ne f\in t$ for which $G_f$ has two external lines
is assigned a ``renormalization label''. This label can take the values $r$ 
and $c$. The set of possible assignments of renormalization labels, i.e.
 the set of all maps from $\set{f\in t}{G_f\hbox{ has
two external legs}}$ to $\{r,c\}$, is denoted $\cR(t)$.
\item
In the definition of the renormalized value of the graph $G$,
the value of each subgraph $G_f$ with renormalization label $r$ is replaced
by $(\bbbone-\ell)G_f(k)$. Here $\ell$ is the localization operator, which
we take\footnote{The main property that the localization operator should 
have is that $\sfrac{(\bbbone-\ell)G_f(k)}{ik_0-e(\bk)}$ should be bounded for
any (sufficiently smooth) $G_f$. Here is another possible localization operator
for $d=2$. In a neighbourhood of a regular point of the Fermi surface,
define $\ell G_f(k)=G_f(k_0=0,P\bk)$ where $P\bk$ is any reasonable
projection of $\bk$ onto the Fermi surface, as in \cite[Section 2.2]{FST1}.
In a neighbourhood of a singular point, use a coordinate system
in which $e(x,y)=xy$ and, in this coordinate system, define $\ell G_f(k_0,x,y)
=G_f(0,x,0)+G_f(0,0,y)-G_f(0,0,0)$. Use a partition of unity to
patch the different neighbourhoods together.}
 to be simply evaluation at $k_0=0$. For these $r$--forks, 
the constraint $j_f>j_{\pi(f)}$ still applies.  
\item
 In the definition of the renormalized value of the graph $G$,
the value of each subgraph $G_f$ with renormalization label $c$ is replaced
by $\ell G_f(k)$. For these $c$ forks the constraint $j_f>j_{\pi(f)}$
is replaced by $j_f\le j_{\pi(f)}$.
\end{itemize}
Given a graph $G$, a forest $t$ of subgraphs of $G$ and an assignment
$R$ of renormalization labels to the two--legged forks of $t$, we define $\cJ(j,t,R,G)$ to be the set of all assignments
of scales to the lines of $G$ obeying
\begin{itemize}
\item
for each $f\in t$, all lines of $G_f\setminus\cup_{f'\in t\atop
f'>f}G_{f'}$ have the same scale. Call the common scale $j_f$.
\item
if $G_f$ is not two--legged then $j_f>j_{\pi(f)}$
\item
if $G_f$ is two--legged and $R_f=r$, then $j_f>j_{\pi(f)}$
\item
if $G_f$ is two--legged and $R_f=c$ then $j_f\le j_{\pi(f)}$
\item
 $j_\phi=j$
\end{itemize}
Then, the value of the graph $G$ with all two--legged subdiagrams
correctly renormalized is
\begin{equation}\label{eq:Gren}
G=\sum_{j< 0}\ \sum_{t\in\cF(G)}\prod_{f\in t}\sfrac{1}{b_f!} \sum_{R\in\cR(t)}\ \sum_{J\in\cJ(j,t,R,G)} G^J
\end{equation}
To derive bounds on $G$, when we are not interested in the dependence
of those bounds on $G$ and in particular on the order of perturbation theory,
it suffices to derive bounds on $\sum_{j\le 0}\ \sum_{J\in\cJ(j,t,R,G)} G^J$
for each fixed $t$ and $R$.

\begin{proposition}\label{le:fixedrootscale}
Assume that the interaction has $C^1$ Fourier transform and
the dispersion relation obeys \Hap--\Hdp\ above. Let $G$ be any
two--legged 1PI graph of order $n$.
Let $t$ be a tree corresponding to a forest of subgraphs of $G$. 
Let $R$ be an assignment of $r,c$ labels to all forks $f>\phi$ of $t$ for 
which $G_f$ is two--legged. Let $\cJ(j,t,R,G)$ be the set of all assignments of 
scales to the lines of $G$ that have root scale $j$ and are consistent with 
$t$ and $R$. Let $s\in(0,1)$.
Then there is a constant  $\const$, depending on $s$ but independent of $j$, such that
\begin{eqnarray*}
\sum_{J\in \cJ(j,t,R,G)}\sup_{q}\big|G^J(q)\big|
&\le&\cst{}{n} |j|^{3n-2} M^{j}\\
\sum_{J\in \cJ(j,t,R,G)}\sup_{q}\big|\partial_{q_0}G^J(q)\big|
&\le&\cst{}{n} |j|^{3n-2} \\
\sum_{J\in \cJ(j,t,R,G)}\sup_{q,\bp}
                \sfrac{1}{|\bp|^s}\big|G^J(q+\bp)-G^J(q)\big|
&\le&\cst{}{n} |j|^{3n-2} M^{(1-s)j}
\end{eqnarray*}
\end{proposition}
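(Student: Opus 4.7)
The plan is to prove all three bounds simultaneously by induction on the depth $D$ of the forest $t$ underlying $G^J$. In the base case $D=0$ the tree consists only of the root $\phi$, so $G$ has no renormalised subdiagrams and the bounds reduce to the single--scale power counting carried out in the steps below. For the inductive step, decompose $t$ by cutting just beneath each minimal $E_f=2$ fork $f_1,\dots,f_m$, so that $t$ splits into a pruned tree $\tilde t$ and insertion subtrees $\tau^i$ rooted at the $f_i$, each of strictly smaller depth. The quotient $\tilde G = G/\{G_{f_1},\dots,G_{f_m}\}$ carries only four--legged bare vertices together with the inserted two--legged generalised vertices.

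Apply the inductive hypothesis to each insertion $G_{f_i}$. For a $c$--fork, $\ell G_{f_i}(k)=G_{f_i}(0,\bk)$ is bounded by (1), and summing over $j_{f_i}\le j_{\pi(f_i)}$ gives $\const |j_{\pi(f_i)}|^{3n_{f_i}-2} M^{j_{\pi(f_i)}}$. For an $r$--fork, apply the mean value theorem in $k_0$ together with (2) from the inductive hypothesis: $|(\bbbone-\ell)G_{f_i}(k)|\le |k_0|\|\partial_{k_0}G_{f_i}\|_\infty$. Since at least one external line of $G_{f_i}$ has scale exactly $j_{\pi(f_i)}$, one has $|k_0|\le M^{j_{\pi(f_i)}}$, and the geometric series over $j_{f_i}>j_{\pi(f_i)}$ again yields $\const |j_{\pi(f_i)}|^{3n_{f_i}-2} M^{j_{\pi(f_i)}}$. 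The net effect is that each two--legged insertion contributes $M^{j_{\pi(f_i)}}$ rather than the $M^{j_{f_i}}$ of a bare two--legged vertex --- the gain required to render the overall scale sum convergent.

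Bound $\tilde G^{\tilde J}$ by choosing a spanning tree $\tilde T$ of $\tilde G$ whose intersection with each $\tilde G_f$ is connected. Estimate every propagator by $|C_{j_\ell}(k)|\le \const M^{-j_\ell}\,1(|ik_0-e(\bk)|\le M^{j_\ell})$, every four--legged vertex by $\|\hat v\|_\infty$, and integrate over the loop momenta $k_\ell$ for $\ell\in\tilde G\setminus\tilde T$, bounding each integral by $\const |j_\ell| M^{2j_\ell}$ via Lemma \ref{le:2.4}. Rewriting $j_\ell=j+\sum_{\phi<f\le f_\ell}(j_f-j_{\pi(f)})$, one finds for each fork $f\ne\phi$ of $\tilde t$ an exponent of $M^{j_f-j_{\pi(f)}}$ that is strictly positive because $E_f\ge 4$ on the pruned tree; geometric summation $\sum_{j_f>j_{\pi(f)}}M^{-c(j_f-j_{\pi(f)})}$ converges up to a factor $|j_f|$, producing the scale factor $M^j$ and yielding (1).

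For (2), a $q_0$--derivative acts by the product rule on a line on the external--momentum path and replaces $|C_{j_\ell}|$ by its derivative, contributing an extra factor $M^{-j_\ell}\le M^{-j}$; the overall scale factor becomes $M^j\cdot M^{-j}=1$. For (3), apply at each fixed $J$ the elementary interpolation
$$|G^J(q+\bp) - G^J(q)| \le \min\{2\|G^J\|_\infty,\, |\bp|\,\|\nabla_\bq G^J\|_\infty\} \le 2\|G^J\|_\infty^{1-s}\|\nabla_\bq G^J\|_\infty^{s}|\bp|^s,$$
where $\|\nabla_\bq G^J\|_\infty$ is bounded exactly as in (2) since a spatial derivative of $C_{j_\ell}$ has sup norm $\const M^{-2j_\ell}$ on its support (using $\|\nabla e\|_\infty<\infty$ on the compact cutoff region). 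Combining the bounds for $\|G^J\|_\infty$ and $\|\nabla_\bq G^J\|_\infty$ via H\"older's inequality on the sum over $J$ then gives (3). The main delicate point throughout is the $r$--fork step of paragraph two: one must verify that the mean--value estimate produces the gain $M^{j_{\pi(f)}-j_f}$ that converts the would--be divergent na\"ive sum over $j_f$ into a convergent geometric series. Once this renormalisation bookkeeping is in place, the rest is routine scale summation.
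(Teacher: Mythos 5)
Your treatment of the first two bounds, and of the general machinery (induction on depth, cutting at minimal $E_f=2$ forks, the spanning tree, the $c$-- and $r$--fork resummations, the loop--momentum volume bounds via Lemma \ref{le:2.4}, and the scale sums over the pruned tree), is essentially the paper's argument and is fine. The genuine gap is in your proof of the third bound. You interpolate
$|G^J(q+\bp)-G^J(q)|\le 2\|G^J\|_\infty^{1-s}\|\nabla_\bq G^J\|_\infty^{s}|\bp|^s$ and then apply H\"older's inequality to the sum over $J$, which requires
$\sum_{J\in\cJ(j,t,R,G)}\|\nabla_\bq G^J\|_\infty\le\const|j|^{3n-2}$. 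You justify this only for the case where the spatial derivative lands on a propagator. But $\nabla_\bq$ also acts, via the product rule, on any two--legged $c$--fork vertex sitting on the external momentum path, and there the required input is $\sup_k|\nabla_\bk\,\ell G_f^{J_f}(k)|$ summed over $j_f\le j_{\pi(f)}$. The natural bound for one spatial derivative of $G_f^{J_f}$ at root scale $j_f$ is $\const|j_f|^{3n_f-2}M^{j_f}\cdot M^{-j_f}=\const|j_f|^{3n_f-2}$, and the sum over the infinitely many scales $j_f\le j_{\pi(f)}$ of a $c$--fork \emph{diverges}. This is not a technicality: obtaining $C^1$ regularity in $\bq$ is precisely what requires the overlapping--loop volume gain $M^{\veps j_f}$ of Proposition \ref{propOverlap} (compare \Ref{eq:cforkSFS1} with $s_1'=1$, which converges only because $\veps>0$), and that gain is unavailable here since the Appendix does not assume no--nesting. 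So the quantity you interpolate against cannot be controlled at this level of power counting.

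The paper avoids this by never taking an actual spatial derivative. Instead it carries the H\"older modulus itself through the induction: the inductive hypothesis is the pair of bounds on $\partial_{q_0}^{s_0}G^J$ and on $\sfrac{1}{|\bp|^s}|\partial_{q_0}^{s_0}G^J(q+\bp)-\partial_{q_0}^{s_0}G^J(q)|$ for $s_0=0,1$, and the difference $G^J(q+\bp)-G^J(q)$ is expanded by the ``discrete product rule'' so that exactly one factor on the external momentum path is a difference. That difference factor is bounded either by the propagator estimate $\const M^{-(1+s)j_\ell}|\bp|^s$ or by the $c$-- and $r$--fork difference bounds \Ref{eq:cforkB} and \Ref{eq:rforkB}; the $c$--fork scale sum $\sum_{j_f\le j_{\pi(f)}}|j_f|^{3n_f-2}M^{(1-s)j_f}$ converges precisely because $s<1$. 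Note also that your inductive hypothesis, consisting only of the three stated bounds, is too weak even for this route: the $r$--fork difference bound \Ref{eq:rforkB} needs the H\"older continuity of $\partial_{k_0}G_f^{J_f}$ (the $s_0=1$ case above), which must therefore be included in the induction. To repair your proof, replace the interpolation step by the discrete product rule and strengthen the inductive hypothesis accordingly.
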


\begin{remark}
Note that here the root scale is not summed over and $G_\phi$ is
not renormalized.  But all internal scales are summed over
and internal two--legged subgraphs that correspond
to $r$ and $c$ forks are renormalized and localized respectively.
\end{remark}

\begin{proof}
The proof is by induction on the depth of the graph, which is defined by
$$
D=\max\set{n}{\exists {\rm\ forks\ }f_1>\cdots>f_n>\phi{\rm\ with\ }
G_{f_1},\cdots,G_{f_n}\hbox{ all two--legged}}
$$
The inductive hypothesis is that
\begin{eqnarray*}
\sum_{J\in \cJ(j,t,R,G)}\sup_{q}\big|\partial^{s_0}_{q_0}G^J(q)\big|
&\le&\cst{}{n} |j|^{3n-2} M^{(1-s_0)j}\\
\sum_{J\in \cJ(j,t,R,G)}\sup_{q,\bp}
                \sfrac{1}{|\bp|^s}
\big|\partial^{s_0}_{q_0}G^J(q+\bp)-\partial^{s_0}_{q_0}G^J(q)\big|
&\le&\cst{}{n} |j|^{3n-2} M^{(1-s-s_0)j}
\end{eqnarray*}
for $s_0=0,1$ and all $s\in(0,1)$ (with the constant depending on $s$).

If $D>0$, decompose the tree $t$ into a pruned tree 
$\tilde t$ and insertion subtrees $\tau^1,\cdots,\tau^m$ by cutting
the branches beneath  all minimal $E_f=2$ forks $f_1,\cdots,f_m$. In other
words each of the forks $f_1,\cdots,f_m$ is an $E_f=2$ fork having
no $E_f=2$ forks, except $\phi$, below it in $t$. Each $\tau_i$
consists of the fork $f_i$ and all of $t$ that is above $f_i$. It has depth
at most $D-1$ so the corresponding subgraph $G_{f_i}$ obeys
the conclusion of this Proposition. Think of each subgraph $G_{f_i}$ as
a generalized vertex in the graph $\tilde G=G/\{G_{f_1},\cdots,G_{f_m}\}$.
Thus $\tilde G$ now has two-- as well as four--legged vertices. These
two--legged vertices have kernels of the form
\begin{equation}\label{cforkfn}
T_i(k)=\sum_{j_{f_i}\le j_{\pi(f_i)}}\ell G_{f_i}(k)
\end{equation}
when $f_i$ is a $c$--fork and of the form
\begin{equation}\label{rforkfn}
T_i(k)=\sum_{j_{f_i}> j_{\pi(f_i)}}(\bbbone-\ell)G_{f_i}(k)
\end{equation}
when $f_i$ is an $r$--fork. At least one of the external lines of $G_{f_i}$ 
must be of scale precisely $j_{\pi(f_i)}$ so
the momentum $k$ passing through $G_{f_i}$ lies in the support of 
$C_{j_{\pi(f_i)}}$. In the case of a $c$--fork $f=f_i$ we have, by
the inductive hypothesis,
\begin{eqnarray}
&&\hskip-30pt\sum_{j_{f}\le j_{\pi(f)}}\sum_{J_f\in\cJ(j_f,t_f,R_f,G_f)}
\sup_{k}\Big|\ell G_{f}^{J_f}(k)\Big|
\le\sum_{j_{f}\le j_{\pi(f)}}\sum_{J_f\in\cJ(j_f,t_f,R_f,G_f)}
\sup_{k}\left|G_{f}^{J_f}(k)\right|\nonumber\\
&&\hskip70pt
\le\sum_{j_{f}\le j_{\pi(f)}}\cst{}{n_f}|j_f|^{3n_f-2}M^{j_f}\nonumber\\
&&\hskip70pt
\le\cst{}{n_f}M^{j_{\pi(f)}}\sum_{i\ge 0}
                            (|j_{\pi(f)}|+i)^{3n_f-2}M^{-i}\nonumber\\
&&\hskip70pt
\le\cst{}{n_f}|j_{\pi(f)}|^{3n_f-2}M^{j_{\pi(f)}}\sum_{i\ge 0}
                            (i+1)^{3n_f-2}M^{-i}\nonumber\\
&&\hskip70pt
\le \cst{}{n_f}|j_{\pi(f)}|^{3n_f-2}M^{j_{\pi(f)}}\label{eq:cforkA} 
\end{eqnarray}
Here $n_f$ is the number of vertices of $G_f$ and $t_f$ and $R_f$ are 
the restrictions of $t$ and $R$ respectively to forks $f'\ge f$. Hence 
$J_f$ runs over all assignments of scales to the lines of $G_f$ consistent 
with the original $t$ and $R$ and with the specified value of $j_f$. 
Similarly,
\begin{eqnarray}
&&\hskip-0.7in\sum_{j_{f}\le j_{\pi(f)}}\sum_{J_f\in\cJ(j_f,t_f,R_f,G_f)}
\sup_{k,\bp}\sfrac{1}{|\bp|^s}\Big|\ell G_{f}^{J_f}(k+\bp)-\ell G_{f}^{J_f}(k)\Big|
\nonumber\\
&&\hskip2in\le\cst{}{n_f}|j_{\pi(f)}|^{3n_f-2}M^{(1-s)j_{\pi(f)}}
\label{eq:cforkB} 
\end{eqnarray}
Note that $\ell G_{f}^{J_f}(k)$ is independent of $k_0$ so that 
$\partial_{k_0}$ may never act on it.

In the case of an $r$--fork $f=f_i$, we have
$$
|(\bbbone-\ell)G(k)|=\big|G(k_0,\bk)-G(0,\bk)\big|
\le |k_0|\sup_k\big|\partial_{k_0} G(k)\big|
$$ 
Hence,
by the inductive hypothesis and, when $s_0=0$, the mean value theorem,
\begin{eqnarray}
&&\hskip-20pt\sum_{j_{f}> j_{\pi(f)}}\sum_{J_f\in\cJ(j_f,t_f,R_f,G_f)}
\sup_{k}1\big(C_{j_{\pi(f)}}(k)\ne 0\big)
\Big|\partial_{k_0}^{s_0}(\bbbone-\ell)G_{f}^{J_f}(k)\Big|\nonumber\\
&&\hskip70pt
\le \sum_{j_{f}> j_{\pi(f)}}\ \sum_{J_f\in\cJ(j_f,t_f,R_f,G_f)}
M^{(1-s_0)j_{\pi(f)}}\sup_{k}\Big|\partial_{k_0} G_{f}^{J_f}(k)\Big|\nonumber\\
&&\hskip70pt
\le\cst{}{n_f}M^{(1-s_0)j_{\pi(f)}}
\sum_{j_{f}> j_{\pi(f)}}|j_{f}|^{3n_f-2}\nonumber\\
&&\hskip70pt
\le\cst{}{n_f}|j_{\pi(f)}|^{3n_f-1}M^{(1-s_0)j_{\pi(f)}}\label{eq:rforkA}
\end{eqnarray}
Similarly, for $|k_0|\le M^{j_{\pi(f)}}$,
\begin{eqnarray}
&&\hskip-20pt\sum_{j_{f}> j_{\pi(f)}}\ \sum_{J_f\in\cJ(j_f,t_f,R_f,G_f)}
\sup_{k,\bp}
\sfrac{1}{|\bp|^s}
\Big|\partial^{s_0}_{k_0}(\bbbone-\ell) G_{f}^{J_f}(k+\bp)-
\partial^{s_0}_{k_0}(\bbbone-\ell) G_{f}^{J_f}(k)\Big|\nonumber\\
&&\hskip-10pt
\le\sum_{j_{f}> j_{\pi(f)}}\ \sum_{J_f\in\cJ(j_f,t_f,R_f,G_f)}\hskip-22pt
M^{(1-s_0)j_{\pi(f)}}\sup_{k,\bp}
\sfrac{1}{|\bp|^s}
\Big|\partial_{k_0}G_{f}^{J_f}(k+\bp)-
\partial_{k_0}G_{f}^{J_f}(k)\Big|\nonumber\\
&&\hskip-10pt
\le\cst{}{n_f}M^{(1-s_0)j_{\pi(f)}}
\sum_{j_{f}> j_{\pi(f)}}|j_{f}|^{3n_f-2}M^{-j_fs}\nonumber\\
&&\hskip-10pt
\le\cst{}{n_f}|j_{\pi(f)}|^{3n_f-2}M^{(1-s_0-s)j_{\pi(f)}}\label{eq:rforkB}
\end{eqnarray}

We are now ready to bound $\tilde G^{\tilde J}$, where $\tilde J$ is the
restriction of $J$ to $\tilde G$.
It is both convenient and standard to get rid of the conservation of 
momentum delta functions arising in the value of $\tilde G^{\tilde J}$ 
by integrating out some momenta. Then, instead of having one  
$(d+1)$--dimensional integration variable $k$ for each line of the diagram, there is one for each momentum loop. Here is a convenient way to select 
these loops. Pick any spanning tree $\tilde T$ for $\tilde G$. A spanning tree 
is a subgraph of $\tilde G$ that is a tree and contains all the vertices of 
$\tilde G$. We associate to each line $\ell$ of $\tilde G\setminus \tilde T$ 
the ``internal momentum loop'' $\La_\ell$ that consists of $\ell$ and the 
unique path in $\tilde T$ joining the ends of $\ell$. The ``external 
momentum path'' is the unique path in $\tilde T$ joining the external legs. 
It carries the external momentum $q$. The loop $\La_\ell$ carries momentum 
$k_\ell$. The momentum $k_{\ell'}$ of each line $\ell'\in \tilde T$ is the 
signed sum of all loop and external momenta passing through $\ell'$.

The form of the integral giving the value of $\tilde G^{\tilde J}(q)$ is then
\begin{equation}\label{eq:tildeGtildeJform}
\tilde G^{\tilde J}(q)=\int\prod_{\ell\in \tilde G\setminus \tilde T}
\dbar k_\ell \prod_{\ell\in \tilde G}C_{j_\ell}(k_\ell)\prod_\rv u_\rv
\qquad\hbox{ where } \dbar k = \sfrac{d^{d+1}k}{(2\pi)^{d+1}}
\end{equation}
Here $\tilde T$ is any spanning tree for $\tilde G$. The loops are labeled 
by the lines of $\tilde G\setminus \tilde T$. For each $\ell\in \tilde T$,
the momentum $k_\ell$ is a signed sum of loop momenta and external momentum
$q$. The product $\prod_\rv$ runs over the vertices of $\tilde G$
and $u_\rv$ is the vertex function for $\rv$. If $\rv$ is one of the original
interaction vertices then $u_\rv$ is just $v$ evaluated at the 
signed sum of loop and external momenta passing through $v$. If $\rv$ is
a two--legged vertex, then $u_\rv$ is given either by \Ref{cforkfn} or by
\Ref{rforkfn}. 

We are now ready to bound $\tilde G$ in six steps.
\begin{enumerate} 
\item
Choose a spanning tree $\tilde T$ for $\tilde G$ with the property that 
$\tilde T\cap \tilde G^{\tilde J}_f$ is a connected tree for every 
$f\in t(\tilde G^{\tilde J})$. $\tilde T$ can be built up inductively, 
starting with the smallest subgraphs $\tilde G_f$, because, by 
construction, every $\tilde G_f$ is connected and $t(\tilde G^J)$ is a forest.
Such a spanning tree is illustrated below for the example given just before
\Ref{eq:Gsum} with $j_4>j_3>j_1$, $j_2>j_1$.
$$
\figput{span}
$$
\item
 Apply any $q$--derivatives. By the product rule, or,
in the case of a ``discrete derivative'', the ``discrete product rule''
\begin{eqnarray*}
&&\hskip-20ptf(k+\bq)g(k'+\bq)-f(k)g(k')\\
&&\hskip20pt=\big[f(k+\bq)-f(k)\big]g(k')
+f(k+\bq)\big[g(k'+\bq)-g(k')\big],
\end{eqnarray*}
each derivative may act on any line or vertex on the ``external momentum
path''. It suffices to consider any one such action.
\item
 Bound each two--legged renormalized subgraph (i.e.
$r$--fork) by (\ref{eq:rforkA},\ref{eq:rforkB}) and each two--legged 
counterterm (i.e. $c$--fork) by (\ref{eq:cforkA},\ref{eq:cforkB}). 
Observe that when $s_0$ $k_0$--derivatives and $s$ $\bk$--derivatives act 
on the vertex, the bound is no worse than $M^{-(s_0+s)j}$ times the bound with no derivatives,
because we necessarily have $j\le j_{\pi(f)}<0$. 
\item
 Bound all remaining vertex functions, $u_\rv$, (suitably differentiated)
by their suprema in momentum space. 
\item
 Bound each  propagator 
\begin{eqnarray}\label{eq:propbnd}
&&\hskip-25pt\big|\partial^{s_0}_{k_0}C_{j_\ell}(k)\big|
\le\const M^{-(1+s_0)j_\ell}\ 1\big(|ik_0-e(\bk)|\le M^{j_\ell}\big)
\nonumber\\
&&\hskip-25pt\sfrac{1}{|\bp|^s}
\big|\partial^{s_0}_{k_0}C_{j_\ell}(k+\bp)
            -\partial^{s_0}_{k_0}C_{j_\ell}(k)\big|
\le\const M^{-(1+s_0+s)j_\ell}\nonumber
\nonumber\\
\end{eqnarray}
Once again,  when $s_0$ $k_0$--derivatives and $s$ $\bk$--derivatives act 
on the propagator, the bound is no worse than $M^{-(s_0+s)j}$ times the bound 
with no derivatives, because we necessarily have $j\le j_\ell<0$. 
 We now have $\big|\partial_{k_0}^{s_0}\tilde G^{\tilde J}(q)\big|$
and $\sfrac{1}{|\bp|^s}\Big|\partial_{k_0}^{s_0}\tilde G^{\tilde J}(q+\bp)
-\partial_{k_0}^{s_0}\tilde G^{\tilde J}(q)\Big|$ bounded, uniformly in
$q$ and $\bp$ by
\begin{eqnarray*}
&&\hskip-20pt\cst{}{n}M^{-(s_0+s)j}
   \prod_{i=1}^m|j_{\pi(f_i)}|^{3n_{f_i}-1}M^{j_{\pi(f_i)}}
    \prod_{\ell\in \tilde G}M^{-j_\ell} \\
&&\hskip1in\int\prod_{\ell\in \tilde G\setminus \tilde T}\dbar k_\ell
\ \prod_{\ell\in \tilde G\setminus \tilde T}1\big(
|ik_{\ell0}-e(\bk_\ell)|\le  M^{j_\ell}\big)
\end{eqnarray*}
with $s=0$ in the first case. We remark that for the bound on 
$\big|\partial_{k_0}^{s_0}\tilde G^{\tilde J}(q)\big|$ we may replace 
the $\prod_{\ell\in \tilde G\setminus \tilde T}$ in
$\prod_{\ell\in \tilde G\setminus \tilde T}1\big(
                 |ik_{\ell0}-e(\bk_\ell)|\le  M^{j_\ell}\big)$ by
$\prod_{\ell\in \tilde G}$. These extra integration constraints are not
used in the current naive bound, but are used in other bounds that exploit
``overlapping loops''.
\item
Integrate over the remaining loop momenta. 
Integration over $k_\ell$ with $\ell\in \tilde G\setminus \tilde T$ is 
bounded by vol$\set{k_{0\ell}}{|k_{0\ell}|\le M^{j_\ell}}\le 2M^{j_\ell}$ times the volume of
$\set{\bk_\ell}{|e(\bk_\ell)|\le M^{j_\ell}}$, which, by
Lemma \ref{le:2.4}, is bounded by a constant times $|j_\ell|M^{j_\ell}$.
\end{enumerate} 

\noindent The above six steps give that  
$\big|\partial_{k_0}^{s_0}\tilde G^{\tilde J}(q)\big|$
and $\sfrac{1}{|\bp|^s}\Big|\partial_{k_0}^{s_0}\tilde G^{\tilde J}(q+\bp)
-\partial_{k_0}^{s_0}\tilde G^{\tilde J}(q)\Big|$ are bounded, uniformly in
$q$ and $\bp$ by
\begin{equation}\label{eq:step6bnd}
B^{\tilde J}=\cst{}{n}M^{-(s_0+s)j}
   \prod_{i=1}^m|j_{\pi(f_i)}|^{3n_{f_i}-1}M^{j_{\pi(f_i)}}
    \prod_{\ell\in \tilde G}M^{-j_\ell}
\prod_{\ell\in \tilde G\setminus  \tilde T}|j_\ell|  M^{2j_\ell}
\end{equation}
again with $s=0$ in the first case.
Define the notation
\begin{eqnarray*}
\tilde T_f&=&\hbox{number of lines of } \tilde T\cap \tilde G_f\\
\tilde L_f&=&\hbox{number of internal lines of } \tilde G_f\\
n_f&=&\hbox{number of vertices of } G_f\\
E_f&=&\hbox{number of external lines of } G_f\\
E_\rv&=&\hbox{number of lines hooked to vertex } \rv\\
\end{eqnarray*}
Applying
$$
M^{\al j_\ell}
=M^{\al j_\phi}\prod_{f\in t\atop{f>\phi\atop \ell\in G_f}}
M^{\al(j_f-j_{\pi(f)})}
$$
to each $M^{-j_\ell}$  and $M^{2j_\ell}$ and 
\begin{eqnarray*}
M^{j_{\pi(f_i)}}
&=&M^{j_\phi}\prod_{f\in \tilde t\atop \phi<f<f_i}
M^{j_f-j_{\pi(f)}}\nonumber\\
&=&M^{-{1\over 2}(E_{f_i}-4)j_\phi}
\prod_{f\in \tilde t\atop{f>\phi\atop f_i\in \tilde G_f}}
M^{-{1\over 2}(E_{f_i}-4)(j_f-j_{\pi(f)})}
\end{eqnarray*}
for each $1\le i\le m$ (thinking of $f_i$ as a vertex of $\tilde G$) gives
\begin{eqnarray*}
B^{\tilde J}&\le&\cst{}{n} M^{-(s_0+s)j}
|j|^{\tilde L_\phi-\tilde T_\phi+\sum (3n_{f_i}-1)}M^{j(\tilde L_\phi-2\tilde T_\phi
-\sum_{\rv\in \tilde G}{1\over 2}(E_\rv-4))}\\
&&\hskip2in
\prod_{f\in \tilde t\atop f>\phi}M^{(j_f-j_{\pi(f)})(\tilde L_f-2\tilde T_f
-\sum_{\rv\in\tilde G_f}{1\over 2}(E_\rv-4))}
\end{eqnarray*}
The sums $\sum_{v\in\tilde G}$  and $\sum_{v\in\tilde G_f}$ run over two--
as well as four--legged generalized vertices. As
\begin{eqnarray*}
\tilde L_f
&=&\sfrac{1}{2}\big(\sum_{\rv\in \tilde G_f}E_\rv-E_f\big)\qquad\hbox{and}
\qquad
\tilde T_f=\sum_{\rv\in \tilde G_f}1 -1\\
\Longrightarrow 
\tilde L_f-2\tilde T_f
       &=& \sfrac{1}{2}\big(4-E_f+\sum_{\rv\in \tilde G_f}(E_\rv-4)\big) \\
\end{eqnarray*}
and we have
\begin{eqnarray}\label{eq:pwrcntbnd}
B^{\tilde J}
&\le&\cst{}{n} M^{-(s_0+s)j}
|j|^{\tilde L_\phi-\tilde T_\phi+\sum (3n_{f_i}-1)}M^{{1\over 2}j(4-E_\phi)}
\prod_{f\in \tilde t\atop f>\phi}M^{{1\over 2}(j_f-j_{\pi(f)})(4-E_f)}
\nonumber\\
&=&\cst{}{n} M^{-(s_0+s)j}
|j|^{\tilde L_\phi-\tilde T_\phi+\sum (3n_{f_i}-1)}M^j
\prod_{f\in \tilde t\atop f>\phi}M^{{1\over 2}(j_f-j_{\pi(f)})(4-E_f)}
\end{eqnarray}
since $E_\phi=2$. The scale sums are performed by repeatedly applying
\begin{equation}\label{eq:scalesum}
\sum_{j_f\atop j_f>j_{\pi(f)}}M^{{1\over 2}(j_f-j_{\pi(f)})(4-E_f)}
\le\cases{|j|&if $E_f=4$\cr
           \sfrac{1}{M-1}& if $E_f>4$\cr}
\end{equation}
starting with the highest forks, and give at most $\tilde
L_\phi-1$ additional factors of $|j|$ since
$$
\#\{f\in t(\tilde G^{\tilde J}),\ f\ne \phi\}\le \tilde L_\phi-1
$$ 
Thus
\begin{eqnarray}
\sum_{\tilde J\in \cJ(j,\tilde t,\tilde G)}B^{\tilde J}
&\le& \cst{}{n} 
|j|^{2\tilde L_\phi-\tilde T_\phi-1+\sum (3n_{f_i}-1)}M^{(1-s_0-s)j}\nonumber\\
&\le& \cst{}{n} |j|^{3n-2}M^{(1-s_0-s)j}\label{eq:scalesumbnd}
\end{eqnarray}
since, using $\tilde n_4$ to denote the number of four--legged vertices
in $\tilde G$,
\begin{eqnarray*}
&&\hskip-30pt2\tilde L_\phi-\tilde T_\phi-1+\sum_{i=1}^m (3n_{f_i}-1)\cr
&&\hskip20pt=2\sfrac{1}{2}\big(4\tilde n_4+2m-2\big)-\big(\tilde n_4+m-1\big)
-1+3\sum_{i=1}^m n_{f_i}-m\\
&&\hskip20pt=3\tilde n_4+3\sum_{i=1}^m n_{f_i}-2\\
&&\hskip20pt=3n-2
\end{eqnarray*}
This is the desired bound.
\end{proof}

\begin{corollary}\label{le:psecont}
Assume that the interaction has $C^1$ Fourier transform and
the dispersion relation obeys \Hap--\Hdp\ above.  
Let $G(q)$ be any graph contributing to the
proper self--energy.
Then, for every $0<s<1$, 
\begin{eqnarray*}
\sup_{q}|G(q)|&<&\infty\cr
\sup_{q,\bp}\sfrac{1}{|\bp|^s}|G(q+\bp)-G(q)| &<&\infty\cr
\end{eqnarray*}
\end{corollary}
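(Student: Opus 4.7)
The plan is to deduce the corollary directly from Proposition \ref{le:fixedrootscale} together with the tree expansion \Ref{eq:Gren}. First I would write
$$
G(q) = \sum_{j<0}\ \sum_{t\in\cF(G)}\prod_{f\in t}\sfrac{1}{b_f!}
\sum_{R\in\cR(t)}\ \sum_{J\in\cJ(j,t,R,G)} G^J(q),
$$
and observe that for a fixed graph $G$ the sums over the forest $t$ and the renormalization assignment $R$ are finite (their cardinality depends only on the combinatorial structure of $G$), so it suffices to control the sum over the root scale $j<0$ of the innermost object $\sum_{J\in\cJ(j,t,R,G)}G^J(q)$.

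For the $\sup$-norm bound, Proposition \ref{le:fixedrootscale} gives
$$
\sum_{J\in\cJ(j,t,R,G)}\sup_q|G^J(q)| \le \cst{}{n}|j|^{3n-2}M^j,
$$
and since $M>1$, the geometric factor $M^j$ dominates the polynomial $|j|^{3n-2}$ as $j\to-\infty$, so $\sum_{j<0}|j|^{3n-2}M^j<\infty$. This proves $\sup_q|G(q)|<\infty$.

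For the Hölder bound, fix $s\in(0,1)$. The third estimate of Proposition \ref{le:fixedrootscale} gives
$$
\sum_{J\in\cJ(j,t,R,G)}\sup_{q,\bp}\sfrac{1}{|\bp|^s}\big|G^J(q+\bp)-G^J(q)\big|
\le \cst{}{n}|j|^{3n-2}M^{(1-s)j}.
$$
Since $1-s>0$, $M^{(1-s)j}$ again decays geometrically as $j\to-\infty$ and the series $\sum_{j<0}|j|^{3n-2}M^{(1-s)j}$ converges. Summing the above estimate over $j$, $t$ and $R$ and using the triangle inequality yields the uniform Hölder bound on $G$.

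The whole argument is routine once Proposition \ref{le:fixedrootscale} is in hand; the only potential obstacle is checking that the Hölder seminorm commutes appropriately with the (absolutely convergent) scale sum, but this is immediate from dominated summation: for any fixed $q,\bp$,
$$
\sfrac{1}{|\bp|^s}\bigl|G(q+\bp)-G(q)\bigr|
\le \sum_{j,t,R}\sum_{J\in\cJ(j,t,R,G)}\sfrac{1}{|\bp|^s}\bigl|G^J(q+\bp)-G^J(q)\bigr|,
$$
and each summand is bounded uniformly in $q,\bp$ by a term of the convergent series above.
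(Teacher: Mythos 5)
Your proposal is correct and follows exactly the paper's route: the paper's own proof simply notes that both bounds are immediate from Proposition \ref{le:fixedrootscale} upon summing over $j$, $t$ and $R$, which is precisely the summation you carry out. Your additional remarks on the finiteness of the $t$ and $R$ sums and on the convergence of $\sum_{j<0}|j|^{3n-2}M^{(1-s)j}$ just make explicit what the paper leaves implicit.
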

\begin{proof}
Both bounds are immediate from Proposition \ref{le:fixedrootscale}. One merely 
has to sum over $j,t$ and $R$. The bound on $\sup_{q}|G(q)|$ was also proven 
by these same methods in \cite{Brox}.
\end{proof}

\end{document}